\documentclass{article}
\usepackage{lmodern}
\usepackage{geometry}
\usepackage[utf8]{inputenc} 
\usepackage[T1]{fontenc}    
\usepackage{url}            
\usepackage{booktabs}       
\usepackage{amsfonts}       
\usepackage{nicefrac}       
\usepackage{microtype}      
\usepackage{natbib}

\usepackage{mathtools}
\usepackage{amsmath}
\usepackage{amsthm,bbm}
\usepackage{color}
\usepackage{graphicx}
\usepackage{setspace}
\usepackage{esint}
\usepackage{soul}

\mathtoolsset{showonlyrefs,showmanualtags}

\usepackage{caption}
\usepackage{subcaption}


\usepackage[unicode=true,bookmarks=true,bookmarksnumbered=false,bookmarksopen=false,breaklinks=false,pdfborder={0 0 0},pdfborderstyle={},backref=page,colorlinks=true]{hyperref}
\hypersetup{pdftitle={Coupled Half t},pdfauthor={Biswas Bhattacharya Jacob Johndrow},linkcolor=RoyalBlue,citecolor=RoyalBlue}
\usepackage[dvipsnames,svgnames,x11names,hyperref]{xcolor}

\newtheorem{Th}{Theorem}[section]
\newtheorem{Lemma}[Th]{Lemma}
\newtheorem{Cor}[Th]{Corollary}
\newtheorem{Prop}[Th]{Proposition}

\newtheorem{Remark}[Th]{Remark}

\newcommand{\Uniform}{\mathrm{Uniform}}
\newcommand{\InvGamma}{\mathrm{InvGamma}}

\usepackage[ruled]{algorithm2e}
\usepackage{graphicx}
\usepackage{graphics}

\title{Coupling-based convergence assessment 
  of some Gibbs samplers 
  for high-dimensional Bayesian regression with shrinkage priors}

\author{Niloy Biswas \thanks{Department of Statistics, Harvard University. Email: niloy\_biswas@g.harvard.edu} 
\and Anirban Bhattacharya \thanks{Department of Statistics, Texas A\&M University. Email: anirbanb@stat.tamu.edu}
\and Pierre E. Jacob \thanks{Department of Statistics, Harvard University. Email: pjacob@fas.harvard.edu}
\and James E. Johndrow \thanks{Department of Statistics, The Wharton School, University of Pennsylvania. Email: johndrow@wharton.upenn.edu}}

\begin{document}

\maketitle

\begin{abstract}
We consider Markov chain Monte Carlo (MCMC) algorithms for Bayesian high-dimensional regression with 
continuous shrinkage priors. A common challenge with these
algorithms is the choice of the number of iterations to perform. This is
critical when each iteration is expensive, as is the
case when dealing with modern data sets, such as genome-wide association
studies with thousands of rows and up to hundred of thousands of columns.  We
develop coupling techniques tailored to the setting of high-dimensional
regression with shrinkage priors, which enable practical, non-asymptotic
diagnostics of convergence without relying on traceplots or long-run
asymptotics. By establishing geometric drift and minorization
conditions for the algorithm under consideration, we prove that 
the proposed couplings have finite expected meeting time.  
Focusing on a class of shrinkage priors which includes the
``Horseshoe'', we empirically demonstrate the scalability of the proposed
couplings. A highlight of our findings is that less
than 1000 iterations can be enough for a Gibbs sampler to reach stationarity
in a regression on $100,000$ covariates.  The numerical results also 
illustrate the
impact of the prior on the computational efficiency of the coupling,
and suggest the use of priors where the local precisions are
Half-t distributed with degree of freedom larger than one.
\end{abstract}

\section{Introduction} \label{section:introduction}

\subsection{Iterative computation in high dimensions}

We consider the setting of high-dimensional regression where the
number of observations $n$ is smaller than the number of covariates $p$ and 
the true signal is sparse. This problem formulation is ubiquitous in modern
applications ranging from genomics to the social
sciences. Optimization-based methods, such as the LASSO
\citep{tibshirani1994regression} or 
Elastic Net \citep{zou2005regularization}, allow sparse point estimates to be obtained even when the number of covariates
is on the order of hundreds of thousands. More specifically, iterative
optimization procedures to obtain these estimates are practical 
because the following conditions are met: 1) the cost per iteration scales favorably with
the size of the input ($n$ and $p$), 2)  the number of iterations to
convergence also scales favorably, and 3) there are
reliable stopping criteria to detect convergence.

In the Bayesian paradigm, Markov chain Monte Carlo (MCMC) methods are commonly
used to sample from the posterior distribution. 
Default MCMC implementations, for example using general-purpose software
such as Stan \citep{carpenter2017stan}, can lead to high computational costs in high-dimensional settings,
both per iteration and in terms of the number of iterations required 
to reach convergence. In the setting of high-dimensional regressions, 
tailored algorithms 
can provide substantial improvements on both fronts
\citep[e.g.][]{bhattacharya2016fastBIOMETRIKA,nishimura2018prior,johndrow2020scalableJMLR,abjjwiley}.
Comparatively less attention has been put on the design of reliable stopping criteria.
Stopping criteria for MCMC, such as
univariate effective sample size  \citep[ESS,
e.g][]{vats2020multivariateBIOMETRIKA}, or the $\hat{R}$
convergence diagnostic \citep[e.g.][]{vats2020revisitingSS}, 
rely on the asymptotic behavior of the chains as
time goes to infinity, and thus effectively ignore the non-asymptotic
``burn-in'' bias, which vanishes as time progresses.
This is acceptable in situations where we have solid a
priori estimates of the burn-in period; otherwise the lack of
non-asymptotic stopping criteria poses an important practical problem.
With Bayesian
analysis of high-dimensional data sets, MCMC algorithms can
require seconds or minutes per iteration, and thus any 
non-asymptotic insight on the number of iterations to perform
is highly valuable.

Diagnostics of convergence are particularly 
useful when considering the number of factors
that affect the performance 
of MCMC algorithms for Bayesian regressions.  
Beyond the size of the data set, 
the specification of the
prior and the (much less explicit) signal-to-noise ratio in the data all
have an impact on the convergence of MCMC algorithms. 
Across applications, the number of iterations required for the chain to converge
varies by multiple orders of magnitude. 
This could lead users to either waste computational resources
by running overly long chains, 
or, worrisomely, to base their analysis on chains that have not converged.
This manuscript proposes a concrete method to avoid 
these pitfalls in the case of a Gibbs sampler for Bayesian regression
with heavy-tailed shrinkage priors. Specifically, we follow
the approach of \citet{glynn2014exact,jacob2020unbiasedJRSSB,biswas2019estimating} and use
coupled lagged chains to monitor convergence.

\subsection{Bayesian shrinkage regression with Half-$t(\nu)$ priors} \label{subsection:problemformulation}

Consider Gaussian linear regression with $n$ observations and $p$ covariates,
with a focus on the high-dimensional setting where $n \ll p$. The likelihood is given by
\begin{equation}
  \label{eq:linearregression}
  L(\beta, \sigma^2 ; X, y) = \frac{1}{(2 \pi \sigma^2)^{n/2}} \exp \Big( - \frac{1}{2 \sigma^2} \| y - X \beta \|^2 \Big),
\end{equation}
where $\| \cdot \|$ denotes the $L_2$ norm, $X \in \mathbb{R}^{n\times p}$ is the observed design matrix, $y \in \mathbb{R}^n$ is the observed response vector, $\sigma^2 > 0$ is the unknown Gaussian noise variance, and $\beta \in \mathbb{R}^p$ is the unknown signal vector that is assumed to be sparse. We study hierarchical Gaussian scale-mixture priors on $(\beta, \sigma^2)$ given by 
\begin{flalign} \label{local_global_prior}
  \beta_j \mid \sigma^2, \xi, \eta \underset{j=1,\ldots,p}{\overset{ind}{\sim}} \mathcal{N} \Big(0, \frac{\sigma^2}{\xi \eta_j} \Big), \ \xi \sim \pi_\xi(\cdot), \ \eta_j \underset{j=1,\ldots,p}{\overset{\text{i.i.d.}}{\sim}} \pi_\eta(\cdot), \ \sigma^2 \sim \InvGamma \bigg(\frac{a_0}{2}, \frac{b_0}{2} \bigg),
\end{flalign}
where $a_0, b_0 >0$, and $\pi_\xi(\cdot), \pi_\eta(\cdot)$ are continuous
densities on $\mathbb{R}_{>0}$. Such global-local mixture priors induce
approximate sparsity, where the components of $\beta$ can be arbitrarily close to zero,
but never exactly zero. This is in contrast to point-mass mixture priors
\citep[e.g][]{johnson2012bayesianJASA}, where some components of $\beta$
can be exactly zero a posteriori. The global precision parameter $\xi$
relates to the number of signals, and we use 
$\xi^{-1/2} \sim \text{Cauchy}_+(0,1)$ throughout.
The local precision parameters $\eta_j$ determine which components 
of $\beta$ are null. 

We focus on the Half-$t(\nu)$ prior family for the local scale parameter, $\eta_j^{-1/2} \sim  t_+(\nu)$ for $\nu \ge 1$, where a $t_+(\nu)$ distribution has density proportional to $(1+x^2/\nu)^{-(\nu+1)/2} \, \mathbbm{1}_{(0, \infty)}(x)$. The induced prior on the local precision $\eta_j$ has density
\begin{equation}\label{eq:half_t_eta}
\pi_{\eta}(\eta_j) \propto \frac{1}{\eta_j^{\frac{2-\nu}{2}}(1+\nu \eta_j)^{\frac{\nu + 1}{2}}} \, \mathbbm{1}_{(0, \infty)}(\eta_j). 
\end{equation}
The case $\nu = 1$ corresponds to the Horseshoe prior
\citep{carvalho2009handlingPMLR,carvalho2010theBIOMETRIKA} and has received
overwhelming attention in the literature among this prior class; see
\citet{bhadra2019lasso} for a recent overview. The use of Half-$t(\nu)$
priors on scale parameters in hierarchical Gaussian models was popularized by 
\citet{gelman2006priorBA}. However, the subsequent
literature has largely gravitated towards the default choice
\citep{polson2012half} of $\nu = 1$, in part because a convincing argument for
preferring a different value of $\nu$ has been absent to date. Here we give 
empirical evidence that Half-$t(\nu)$ priors yield statistical estimation
properties similar to the Horseshoe prior, whilst leading to significant
computational gains when $\nu > 1$. We note that
\citet{ghosh2017asymptoticBA} established optimal posterior concentration in
the Normal means problem for a broad class of priors on the local scale that
includes the Half-$t(\nu)$ priors, extending earlier work by
\citet{vanderpas2014theEJS} for the Horseshoe, providing theoretical support
for the comparable performance we observe with $\nu=1$ and $\nu>1$ in simulations. 

Our stopping criterion relies on couplings. Couplings have
long been used to analyze the convergence of MCMC algorithms, while
methodological implementations have been rare. The work of
\citet{valen_johnson_1996} pioneered the use of
couplings for practical diagnostics of convergence, 
but relied on the assumption of some similarity  
between the initial distribution and the target,
which can be hard to check. 
Here we instead follow the approach of 
\citet{glynn2014exact,jacob2020unbiasedJRSSB} based on couplings
of lagged chains, applied to the question of convergence in
\citet{biswas2019estimating}. That approach makes no assumptions
on the closeness of the initial distribution to the target,
and in our experiments we initialize all chains independently from the prior distribution.

The approach requires the ability to sample pairs of Markov chains such that 1)
each chain marginally follows a prescribed MCMC algorithm and 2) the two chains
``meet'' after a random --but almost surely finite-- number of iterations,
called the meeting time. When the chains meet, i.e. when all of their
components become identical, the user can stop the procedure.  From the output
the user can obtain unbiased estimates of expectations with respect to the
target distribution, which can be generated in parallel and averaged, as well
as estimates of the finite-time bias of the underlying MCMC algorithm.
Crucially, this information is retrieved from the distribution of the meeting
time, which is an integer-valued random variable irrespective of the dimension
of the problem, and thus provides a convenient summary of the performance of
the algorithm.  A difficulty inherent to the approach is the design of an
effective coupling strategy for the algorithm under consideration.
High-dimensional parameter spaces add a substantial layer of complication as
some of the simpler coupling strategies, that are solely based on maximal
couplings, lead to prohibitively long meeting times.

\subsection{Our contribution} \label{subsection:ourcontribution}


In this paper we argue that couplings offer a
practical way of assessing the number of iterations required by MCMC algorithms
in high-dimensional regression settings.  Our specific contributions are
summarized below. 


We introduce blocked Gibbs samplers (Section 
\ref{section:single_kernel_MCMC}) for Bayesian linear regression 
focusing on Half-$t(\nu)$ local shrinkage priors, extending
the algorithm of \citet{johndrow2020scalableJMLR} for the Horseshoe.
Our algorithm has an overall computation cost of $\mathcal{O}(n^2p)$ per iteration,
which is state-of-the-art for $n \ll p$. 
We then design a \textit{one-scale} coupling strategy
for these Gibbs samplers (Section
\ref{subsec:max_coupling}) and show that it 
results in meeting times that have exponential tails, which in turn 
validates their use in convergence diagnostics \citep{biswas2019estimating}. 
Our proof of exponentially-tailed meeting times is based on identifying 
a geometric drift condition, as a corollary to which we also 
establish geometric ergodicity of the marginal Gibbs sampler.
Despite a decade of widespread use, MCMC algorithms for the Horseshoe and 
Half-$t(\nu)$ priors generally have not previously been shown to be geometrically ergodic,
in contrast to MCMC algorithms
for global-local models with exponentially-tailed local scale priors 
\citep{khare2013geometricEJS, pal2014geometricEJS}.
Our proofs utilize a uniform bound on 
generalized ridge regression estimates, which enables us to avoid a technical 
assumption of \citet{johndrow2020scalableJMLR}, and
auxiliary results on moments of distributions related to the confluent
hypergeometric function of the second kind. 
Shortly after an earlier version of this manuscript was posted, \citet{bhattacharya2021geo} have also
established geometric ergodicity of related Gibbs samplers for the Horseshoe prior.

The meeting times resulting from the one-scale coupling 
increase exponentially as dimension increases. To address that issue
we design improved coupling strategies 
(Section \ref{section:coupled_MCMC_2}).
We develop a \textit{two-scale} strategy, based
on a carefully chosen metric, which informs the choice between synchronous and
maximal couplings. In numerical experiments in high dimensions 
this strategy leads to orders of magnitude shorter meeting
times compared to the na\"ive approach. 
We establish some preliminary results in a stylized setting,
providing a partial understanding of
why the two-scale strategy leads to the observed drastic reduction in meeting times. 
We also describe other couplings that do not require explicit 
calculation of a metric between the chains,
have exponentially-tailed meeting times, and empirically result in similar meeting 
times as the two-scale coupling.
For our two-scale coupling, we further note 
that priors with stronger shrinkage towards zero 
can give significantly shorter meeting times in 
high dimensions. This motivates usage of Half-$t(\nu)$ priors with 
greater degrees of freedom $\nu > 1$, which can give similar statistical 
performance to the Horseshoe ($\nu = 1$) whilst enjoying orders of 
magnitude computational improvements.

Finally, we demonstrate (Section \ref{section:gwas}) that the proposed 
coupled MCMC algorithm is applicable to big data settings. 
Figure \ref{fig:gwas_highlight} displays 
an application of our coupling strategy to
monitor the convergence of the Gibbs sampler for Bayesian regression
on a genome-wide association study (GWAS) dataset 
with $(n,p) \approx (2000, 100000)$. 
Our method suggests that a burn-in of just $700$ iterations
can suffice even for such a large problem, and 
confirms the applicability of coupled MCMC algorithms 
for practical high-dimensional inference.

\begin{figure}[!ht]
\captionsetup[subfigure]{font=normalsize,labelfont=normalsize}
    \centering
    \includegraphics[width=0.43\textwidth]{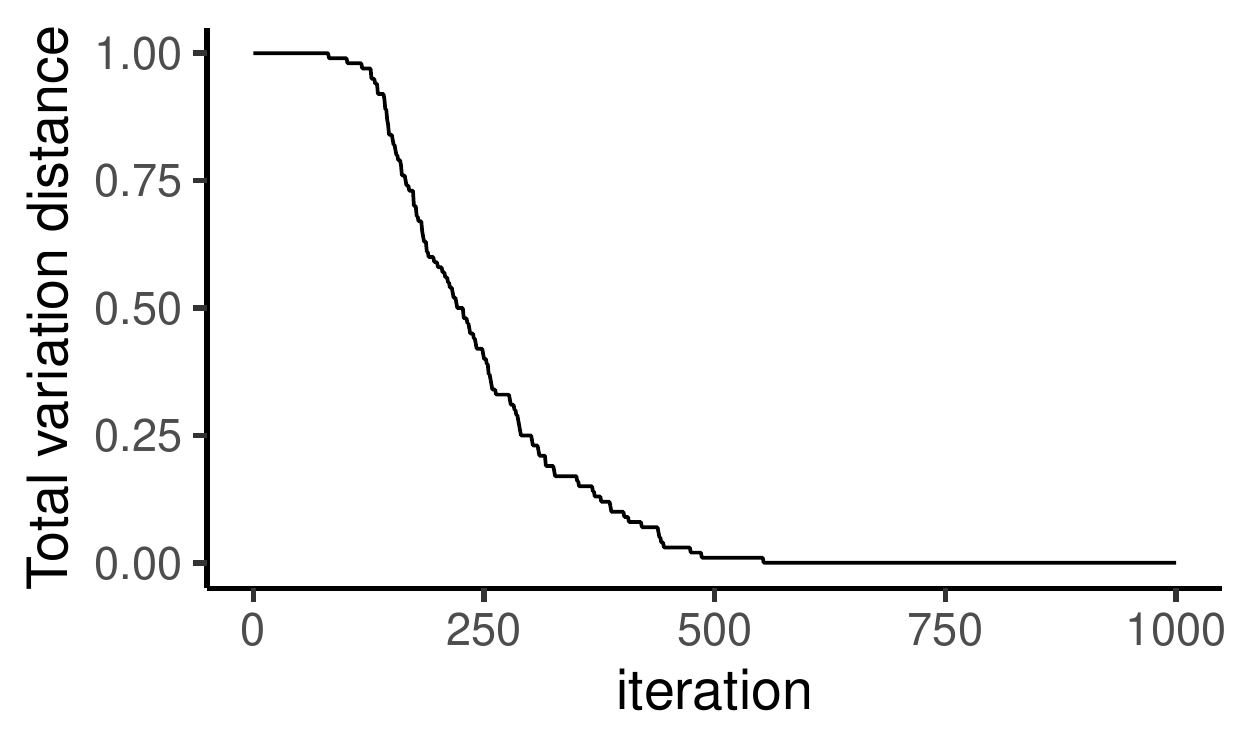}
  \caption{Upper bounds on the total variation between the marginal distributions of the chain
  and the target distribution for Bayesian regression on a GWAS dataset with 
  $(n,p) \approx (2000, 100000)$.}
  \label{fig:gwas_highlight}
\end{figure}

Scripts in R \citep{Rsoftware} are available from 
\url{https://github.com/niloyb/CoupledHalfT} to reproduce the figures of the article.

\section{Coupled MCMC for regression with Half-$t(\nu)$ priors} \label{section:single_kernel_MCMC}

We develop MCMC techniques for Bayesian shrinkage regression with
Half-$t(\nu)$ priors. The model is given by \eqref{eq:linearregression}--\eqref{local_global_prior} 
and $\eta^{-1/2}\sim t(\nu)$ for some $\nu \geq 1$ as in \eqref{eq:half_t_eta}. 
Posterior distributions resulting from such heavy-tailed shrinkage priors 
have desirable statistical properties, but their features pose challenges to 
generic-purpose MCMC algorithms. Characterization of the marginal prior and 
posterior densities of $\beta$ on $\mathbb{R}^p$ given $\xi$ and $\sigma^2$, and 
further comments on these challenges 
are presented in Appendix \ref{appendices:mcmc_challenges}.
Specifically, the resulting posterior distributions 
present 1) multimodality, 2) heavy tails and 3) infinite density at zero. 
This hints at a trade-off between statistical accuracy and computational feasibility, since the very
features that present computational difficulties are crucial for optimal
statistical performance across sparsity levels and signal strengths.

\subsection{A blocked Gibbs sampler for Half-$t(\nu)$ priors} \label{subsec:blocked_gibbs_mcmc}

Blocked Gibbs samplers are popularly used for Bayesian regression with
global-local shrinkage priors, due to the analytical tractability of full
conditional distributions
\citep{park2008theJASA,carvalho2010theBIOMETRIKA,polson2014theJRSSB,bhattacharya2015dirichletJASA,makalic2016simple}.
Several convenient blocking and
marginalization strategies are possible, leading to conditionals that are easy
to sample from.
For the case of the Horseshoe prior ($\nu=1$), \citet{johndrow2020scalableJMLR} have
developed exact and approximate MCMC algorithms for
high-dimensional settings, 
building on the algorithms of \citet{polson2014theJRSSB} and
\citet{bhattacharya2016fastBIOMETRIKA}. We extend that sampler
to general Half-$t(\nu)$ priors,
and summarize it in \eqref{eq:blocked_gibbs_ergodicity_version}. 
Given some initial state $(\beta_0, \eta_0, \sigma_0^2, \xi_0) 
\in \mathbb{R}^p \times \mathbb{R}^p_{> 0} \times \mathbb{R}_{> 0} \times \mathbb{R}_{> 0}$, 
it generates a Markov chain $(\beta_t, \eta_t, \sigma^2_t, \xi_t)_{t \geq 0}$ which targets the 
posterior corresponding to Half-$t(\nu)$ priors. 

\begin{flalign} 
\begin{split} \label{eq:blocked_gibbs_ergodicity_version}
&\pi(\eta_{t+1} | \beta_{t}, \sigma_t^2, \xi_t) 
\propto \prod_{j=1}^p \frac{e^{-m_{t,j} \eta_{t+1,j}}}{\eta_{t+1,j}^{\frac{1-\nu}{2}} (1+ \nu \eta_{t+1,j})^{\frac{\nu+1}{2}}}  \text{ for } m_{t,j} = \frac{\xi_t \beta_{t,j}^2}{2 \sigma_t^2}, \\
&\pi(\xi_{t+1} | \eta_{t+1}) \propto L(y | \xi_{t+1}, \eta_{t+1} ) \pi_{\xi}(\xi_{t+1}) \ 
, \\
&\sigma_{t+1}^2 | \eta_{t+1}, \xi_{t+1} \sim \InvGamma \bigg( \frac{a_0+n}{2}, \frac{y^T M^{-1} y + b_0}{2} \bigg) 
, \\
&\beta_{t+1} | \eta_{t+1}, \xi_{t+1}, \sigma_{t+1}^2 \sim \mathcal{N}(\Sigma^{-1} X^Ty, \sigma_{t+1}^2 \Sigma^{-1}) 
.
\end{split}
\end{flalign}

Above  $L(y | \xi_{t+1}, \eta_{t+1})$
represents the marginal likelihood of $y$ given $\xi_{t+1}$ and $\eta_{t+1}$, 
and we use the notation $M = I_n + \xi_{t+1}^{-1} X\, \text{Diag}(\eta_{t+1}^{-1})\, X^T$,
and $\Sigma = X^TX + \xi_{t+1} \text{Diag}(\eta_{t+1})$. We sample 
$\eta_{t+1} | \beta_{t}, \sigma_t^2, \xi_t$ component-wise independently using slice sampling
and sample $\xi_{t+1} | \eta_{t+1}$ using Metropolis--Rosenbluth--Teller--Hastings with 
step-size $\sigma_{\text{MRTH}}$  \citep{Hastings1970MonteCS}.
Full details of our sampler are in Algorithms \ref{algo:Half_t_Exact} and \ref{algo:slice_sampling_1}
of Appendix \ref{appendices:algos}, and derivations are in Appendix 
\ref{appendices:algo_derivations}. 

Sampling $\eta_{t+1} | \beta_{t}, \sigma_t^2, \xi_t$ component-wise independently 
is an $\mathcal{O}(p)$ cost operation. Sampling $\xi_{t+1} | \eta_{t+1}$
requires evaluating $L(y | \xi_{t+1}, \eta_{t+1})$, which  
involves an $\mathcal{O}(n^2p)$ cost operation (calculating the weighted matrix
cross-product $X \text{Diag}(\eta_{t+1})^{-1} X^T$) and an $\mathcal{O}(n^3)$ cost
operation (calculating $y^T M^{-1} y$ and $|M|$ using Cholesky decomposition). Sampling 
$\sigma_{t+1}^2 | \eta_{t+1}, \xi_{t+1}$ is only an 
$\mathcal{O}(1)$ cost operation, as $y^T M^{-1} y$ 
is pre-calculated. Similarly, as $M^{-1}$ is pre-calculated, 
sampling $\beta_{t+1} | \eta_{t+1}, \xi_{t+1}, \sigma_{t+1}^2$ using the 
algorithm of \citet{bhattacharya2016fastBIOMETRIKA} for 
structured multivariate Gaussians 
only involves an
$\mathcal{O}(np)$ and an $\mathcal{O}(n^2)$ cost operation. 
In high-dimensional settings with $p \gg n$, which is our focus, the 
weighted matrix cross-product calculation
is the most expensive and overall 
the sampler described in \eqref{eq:blocked_gibbs_ergodicity_version}
has $\mathcal{O}(n^2p)$ computation cost.

\subsection{Coupling the proposed Gibbs sampler} \label{subsection:coupled_MCMC}

We develop couplings for the blocked Gibbs sampler presented in Section
\ref{subsec:blocked_gibbs_mcmc}. We will use these couplings
to generate coupled chains with 
a time lag $L\geq 1$, in order to implement the diagnostics of convergence
proposed in \citet{biswas2019estimating}. Consider an $L$-lag coupled chain
$(C_t, \tilde{C}_{t-L})_{t \geq L}$ with meeting time $\tau:= \inf \{ t
\geq L : C_{t}=\tilde{C}_{t-L}\}$. Here $C_t$ and $\tilde{C}_t$ denote the full
states $(\beta_t, \eta_t, \sigma_t^2,\xi_t)$ and 
$(\tilde{\beta}_t, \tilde{\eta}_t, \tilde{\sigma}_t^2, \tilde{\xi}_t)$ respectively.
For the coupling-based methods 
proposed in \citet{biswas2019estimating,jacob2020unbiasedJRSSB} 
to be most practical, the meeting times should occur as early as possible,
under the constraint that both marginal chains $(C_t)_{t\geq 0}$ and
$(\tilde{C}_t)_{t\geq 0}$ start from the same initial distribution, 
and that they both marginally evolve according to the blocked Gibbs sampler in
\eqref{eq:blocked_gibbs_ergodicity_version}.

Briefly, we recall how $L$-lag coupled chains enable the estimation of upper
bounds on $\text{TV}(\pi_t,\pi)$. Since $C_{t}=\tilde{C}_{t-L}$ for all $t\geq
\tau$, then the indicator $\mathbbm{1}(C_t \neq \tilde{C}_{t-L})$ can be
evaluated for all $t$ as soon as we observe $\tau$. Such indicators are
estimators of $\mathbb{P}(C_t \neq \tilde{C}_{t-L})$, which themselves are
upper bounds on $\text{TV}(\pi_t, \pi_{t-L})$, since $C_t \sim \pi_t$ and
$\tilde{C}_{t-L}\sim \pi_{t-L}$. We can then obtain upper bounds on
$\text{TV}(\pi_t,\pi)$ by the triangle inequality: $\text{TV}(\pi_t,\pi)\leq
\mathbb{E}[\sum_{j\geq 0} \mathbbm{1}(C_{t+(j+1)L} \neq \tilde{C}_{t+jL})]$.
This can be justified more formally if the meeting times are
exponentially-tailed \citep{biswas2019estimating}. Such tail condition also
ensures that the expected computation time of unbiased estimators as in
\citet{jacob2020unbiasedJRSSB}, generated independently in parallel, scales at
a logarithmic rate in the number of processors.
Thus we will be particularly interested in couplings that result
in exponentially-tailed meeting times.

In Section \ref{subsec:max_coupling}, we consider an algorithm
based on maximal couplings. We show that the ensuing meeting times have 
exponential tails, and discuss how our analysis directly implies that the 
marginal chain introduced in Section \ref{subsec:blocked_gibbs_mcmc} is geometrically
ergodic. 
We then illustrate difficulties encountered by such scheme in high
dimensions, which will motivate alternate strategies
described in Section \ref{section:coupled_MCMC_2}. For simplicity, we mostly omit the lag $L$ from the notation.

\subsection{One-scale coupling} \label{subsec:max_coupling}
For the blocked Gibbs sampler in Section \ref{subsec:blocked_gibbs_mcmc}, we first
consider a coupled MCMC algorithm that attempts exact meetings at every step.
We will apply a maximal coupling algorithm with independent residuals (see
\citet[Chapter 1 Section 4.4]{thorisson2000regeneration} and
\citet{johnson1998coupling}). It is included in Algorithm
\ref{algo:max_coupling} of Appendix \ref{appendices:algos}, and has an expected
computation cost of two units \citep{johnson1998coupling}. 

Our initial coupled MCMC kernel is given in Algorithm \ref{algo:Half_t_meetings_every_step}, 
which we refer to as a \textit{one-scale} coupling. 
That is, before the chains have met (when $C_t \neq \tilde{C}_t$), the coupled kernel 
on Steps $(1)$ and $(2)(a-c)$ does not explicitly depend 
on the distance between states $C_t$ and $\tilde{C}_t$.
After meeting, the coupled chains remain together by construction, such that
$C_t=\tilde{C}_t$ implies $C_{t+1}=\tilde{C}_{t+1}$. When $C_t\neq \tilde{C}_t$, 
Step $(1)$ uses the coupled slice sampler of Algorithm
\ref{algo:slice_sampling_exact_meeting_coupling} component-wise for
$(\eta_{t+1}, \tilde{\eta}_{t+1})$, which allows each pair of components
$(\eta_{t+1,j}, \tilde{\eta}_{t+1,j})$ to meet exactly with positive
probability. In Algorithm \ref{algo:slice_sampling_exact_meeting_coupling}, we
use a common random numbers or a ``synchronous'' coupling of the auxiliary 
random variables $(U_{j,*}, \tilde{U}_{j,*})$, and alternative couplings could be
considered. Steps $(2)(a-b)$ are maximal couplings of the conditional sampling steps for
$(\xi, \tilde{\xi})$ and $(\sigma^2, \tilde{\sigma}^2)$, such that
$\big(  \sigma_{t+1}, \xi_{t+1}, \eta_{t+1} \big) =
\big( \tilde{\sigma}_{t+1}, \tilde{\xi}_{t+1}, \tilde{\eta}_{t+1} \big)$ occurs
with positive probability for all $t \geq 0$. Step $(2)(c)$ uses common random 
numbers, such that $\big( \sigma^2_{t+1}, \xi_{t+1},
\eta_{t+1} \big) = \big( \tilde{\sigma}^2_{t+1}, \tilde{\xi}_{t+1},
\tilde{\eta}_{t+1} \big)$ implies $\beta_{t+1} = \tilde{\beta}_{t+1}$.
This allows the full chains to meet exactly with positive probability
at every step.

\begin{algorithm}[!htb] 
\DontPrintSemicolon
\KwIn{$ C_t := (\beta_t, \eta_t, \sigma^2_t, \xi_t) $ and $\tilde{C}_t := (\tilde{\beta}_t, \tilde{\eta}_t, \tilde{\sigma}^2_t, \tilde{\xi}_t)$.} 
\lIf{$C_t = \tilde{C}_t$}{
Sample $C_{t+1} | C_{t}$ as in \eqref{eq:blocked_gibbs_ergodicity_version} and set $\tilde{C}_{t+1} = C_{t+1}$.
}
\uElse{
 \begin{enumerate}
 \item Sample $(\eta_{t+1}, \tilde{\eta}_{t+1}) \big| C_t, \tilde{C}_t$ component-wise independently using Algorithm \ref{algo:slice_sampling_exact_meeting_coupling}. 
 \item Sample $( (\xi_{t+1}, \sigma_{t+1}^2, \beta_{t+1} ), (\tilde{\xi}_{t+1}, \tilde{\sigma}_{t+1}, \tilde{\beta}_{t+1} ))$ given $\eta_{t+1}, \tilde{\eta}_{t+1}$ as follows:
 \begin{enumerate}
 \item Sample $(\xi_{t+1}, \tilde{\xi}_{t+1}) \big| \eta_{t+1}, \tilde{\eta}_{t+1}, \xi_{t}, \tilde{\xi}_{t}$ using coupled Metropolis--Rosenbluth--Teller--Hastings (Algorithm \ref{algo:xi_MH_exact_meeting_coupling} in Appendix \ref{appendices:algos}). 
 \item Sample $( \sigma_{t+1}^2, \tilde{\sigma}_{t+1}^2 ) | \xi_{t+1}, \eta_{t+1}, \tilde{\xi}_{t+1}, \tilde{\eta}_{t+1}$ from a maximal coupling of two Inverse Gamma distributions (Algorithm \ref{algo:max_coupling} in Appendix \ref{appendices:algos}).
 \item Sample $( \beta_{t+1}, \tilde{\beta}_{t+1} ) | \sigma_{t+1}^{2}, \xi_{t+1}, \eta_{t+1}, \tilde{\sigma}_{t+1}^{2}, \tilde{\xi}_{t+1}, \tilde{\eta}_{t+1}$ using common random numbers for Gaussian scale mixture priors (Algorithm \ref{algo:fast_mvn_bhattacharya_crn_coupling} in Appendix \ref{appendices:algos}).
 \end{enumerate}
 \end{enumerate}
 }
 \Return $(C_{t+1} , \tilde{C}_{t+1})$.
  \caption{A one-scale coupled MCMC kernel for Half-$t(\nu)$ priors.}
 \label{algo:Half_t_meetings_every_step}
\end{algorithm}
\begin{algorithm}[!htb]
\SetAlgoLined
\DontPrintSemicolon
\KwResult{A coupling of slice samplers marginally targeting 
$p( \cdot | m_j)$ and $p( \cdot | \tilde{m}_j)$, where
$p(\eta_j | m) \propto (\eta_j^{\frac{1-\nu}{2}}(1+\nu \eta_j)^{\frac{\nu + 1}{2}})^{-1} e^{-m \eta_j}$ on $(0,\infty)$.
}
\KwIn{$\eta_{t,j}, \tilde{\eta}_{t,j}, m_{t,j} := \xi_{t} \beta_{t,j}^2 / ( 2 \sigma_{t}^2 ), \tilde{m}_{t,j} := \tilde{\xi}_{t} \tilde{\beta}_{t,j}^2 / (2 \tilde{\sigma}_{t}^2) >0 $.}
\begin{enumerate}
\item Sample 
$U^{crn}_{j,*} \sim \Uniform(0,1)$, and set $U_{j,*} := U^{crn}_{j,*} (1+\nu \eta_{t,j})^{-\frac{\nu + 1}{2}} $ and 
$\tilde{U}_{j,*} := U^{crn}_{j,*} (1+\nu \tilde{\eta}_{t,j})^{-\frac{\nu + 1}{2}} $. 
\item Sample 
$(\eta_{t+1,j}, \tilde{\eta}_{t+1,j})  | U_{j,*}, \tilde{U}_{j,*}$ from a maximal coupling of distributions
$P_j$ and $\tilde{P}_j$ using Algorithm \ref{algo:max_coupling} in Appendix \ref{appendices:algos}, 
where $P_{j}$ and $\tilde{P}_{j}$ have unnormalized densities $\eta \mapsto \eta^{s-1} e^{ - m_j \eta}$ 
and $\eta \mapsto \eta^{s-1} e^{ - \tilde{m}_j \eta}$ on $(0, T_{j})$
and $(0, \tilde{T}_{j})$ respectively, where $T_{j}=(U_{j,*}^{-2/(1+\nu)}-1)/\nu$, 
$\tilde{T}_{j}=(\tilde{U}_{j,*}^{-2/(1+\nu)}-1)/\nu$ and $s=(1+\nu)/2$. 

\end{enumerate}
\Return $(\eta_{t+1,j}, \tilde{\eta}_{t+1,j})$.
\caption{Coupled Slice Sampler for Half-$t(\nu)$ priors using maximal couplings.} 
\label{algo:slice_sampling_exact_meeting_coupling} 
\end{algorithm}


Under the one-scale coupling, 
we prove that the meetings times have exponential tails and hence
finite expectation. Our analysis is based on identifying a suitable drift
function for a variant of the marginal chain in Section \ref{subsec:blocked_gibbs_mcmc}
and an application of \citet[Proposition 4]{jacob2020unbiasedJRSSB}.
We assume that the global shrinkage prior $\pi_\xi(\cdot)$ has a compact 
support on $(0, \infty)$. Such compactly supported priors 
for $\xi$ have been recommended by \citet{vanderpas2017adaptiveEJS} 
to achieve 
optimal posterior concentration for the Horseshoe in the Normal means model. 
For simplicity, we also assume that each component $\eta_{t+1,j} | \beta_{t,j}, \xi_t, \sigma_t^2$
and $\xi_{t+1} | \eta_{t+1}$ in \eqref{eq:blocked_gibbs_ergodicity_version}
are sampled perfectly, instead of slice sampling and Metropolis--Rosenbluth--Teller--Hastings
steps respectively.
Perfect sampling algorithms for each component $\eta_{t+1,j} | \beta_{t,j}, \xi_t, \sigma_t^2$
and $\xi_{t+1} | \eta_{t+1}$ are provided in Algorithms \ref{algo:xi_perfect_sampling} 
and \ref{algo:eta_perfect_sampling} of Appendix 
\ref{appendices:algos} for completeness; see also \cite{abjjwiley} for perfect sampling of the components of $\eta_{t+1,j} | \beta_{t,j}, \xi_t, \sigma_t^2$ for the Horseshoe. Perfectly 
sampling $\xi_{t+1} | \eta_{t+1}$ and $\eta_{t+1,j} | \beta_{t,j}, \xi_t, \sigma_t^2$ 
retains the  
$\mathcal{O}(n^2p)$ 
computational cost for the full blocked Gibbs sampler, 
though
in practice this
implementation is more expensive per iteration 
due to the computation of eigenvalue decompositions in lieu of 
Cholesky decompositions, and the computation of inverse of the
confluent hypergeometric function of the second kind. 


Proposition \ref{prop:drift} gives a geometric drift condition for this variant of the blocked Gibbs sampler.

\begin{Prop}(Geometric drift). \label{prop:drift} Consider the Markov chain $(\beta_t, \xi_t, \sigma^2_t)_{t\geq 0}$ generated from the blocked Gibbs sampler in \eqref{eq:blocked_gibbs_ergodicity_version}, where 
each component $\eta_{t+1,j} | \beta_{t,j}, \xi_t, \sigma_t^2$ and $\xi_{t+1} | \eta_{t+1}$
are sampled perfectly such that $(\beta_t, \xi_t, \sigma^2_t) \mapsto \eta_{t+1} \mapsto  (\beta_{t+1}, \xi_{t+1}, \sigma^2_{t+1})$ with $\eta_{t+1}$ intermediary for each $t\geq0$. Assume the global shrinkage prior $\pi_\xi(\cdot)$ has a compact support. 
For $m_j = \xi \beta_j^2/(2 \sigma^2)$, $c\in (0,1/2)$ and $d \in (0,1)$, define 
\begin{equation} \label{eq:lyapunov_function}
V(\beta, \xi, \sigma^2) = \sum_{j=1}^p m_j^{-c} + m_j^{d}.
\end{equation}
Then for each $\nu \geq 1$, there exist some 
$c\in (0,1/2), d \in (0,1)$ such that $V$ is a drift function, i.e., 
there exists $\gamma \in (0,1)$ and $K \in (0, \infty)$ such that
for all $\beta_{t} \in \mathbb{R}^p$, $\xi_t >0$ and $\sigma_t^2 >0$,
\begin{flalign} \label{eq:drfit}
\mathbb{E}[V(\beta_{t+1}, \xi_{t+1}, \sigma_{t+1}^2) | \beta_{t}, \xi_{t}, \sigma_{t}^2] \leq \gamma V(\beta_{t}, \xi_{t}, \sigma_{t}^2) + K.
\end{flalign}
\end{Prop}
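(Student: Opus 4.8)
The plan is to exploit the intermediary structure $(\beta_t,\xi_t,\sigma_t^2)\mapsto\eta_{t+1}\mapsto(\beta_{t+1},\xi_{t+1},\sigma^2_{t+1})$ and to bound the one-step expectation of each summand of $V$ coordinatewise via the tower property, integrating out $\beta_{t+1}$, then $\sigma^2_{t+1}$, then $\xi_{t+1}$, and finally $\eta_{t+1}$. Writing $\Sigma=X^TX+\xi_{t+1}\mathrm{Diag}(\eta_{t+1})$, $\mu_j=(\Sigma^{-1}X^Ty)_j$ and $v_j=(\Sigma^{-1})_{jj}$, the conditional law $\beta_{t+1,j}\mid\eta_{t+1},\xi_{t+1},\sigma^2_{t+1}\sim\mathcal{N}(\mu_j,\sigma^2_{t+1}v_j)$ gives the clean representation $m_{t+1,j}=\tfrac12\,\xi_{t+1}v_j\,W_j$, where $W_j\sim\chi^2_1(\lambda_j)$ is noncentral chi-square with one degree of freedom and noncentrality $\lambda_j=\mu_j^2/(\sigma^2_{t+1}v_j)$. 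First I would record the two elementary matrix bounds $1/\Sigma_{jj}\le v_j\le 1/(\xi_{t+1}\eta_{t+1,j})$ with $\Sigma_{jj}=(X^TX)_{jj}+\xi_{t+1}\eta_{t+1,j}$, together with the uniform bound $|\mu_j|\le B$ on the generalized ridge estimates announced in the introduction, and the compact support $[\xi_{\min},\xi_{\max}]$ of $\pi_\xi$.

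For the negative part I would use $\xi_{t+1}v_j\ge\xi_{t+1}/\Sigma_{jj}=1/((X^TX)_{jj}/\xi_{t+1}+\eta_{t+1,j})$ and the fact that $c<1/2$ makes $\sup_{\lambda\ge0}\mathbb{E}[W^{-c}]=\mathbb{E}[(\chi^2_1)^{-c}]$ finite, to obtain, after integrating out $\beta_{t+1},\sigma^2_{t+1},\xi_{t+1}$, a bound $\mathbb{E}[m_{t+1,j}^{-c}\mid\eta_{t+1}]\le A_1(b_j+\eta_{t+1,j}^{\,c})$ for explicit constants (here $(a+b)^c\le a^c+b^c$ for $c\in(0,1)$ is used). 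For the positive part, $\xi_{t+1}v_j\le1/\eta_{t+1,j}$ together with $\mathbb{E}[W_j^d\mid\cdot]\le C_d(1+\lambda_j^d)$, $\lambda_j^d\le(B^2\Sigma_{jj}/\sigma^2_{t+1})^d$, and the finiteness of the inverse-gamma moment $\mathbb{E}[\sigma_{t+1}^{-2d}]$, would similarly yield $\mathbb{E}[m_{t+1,j}^{d}\mid\eta_{t+1}]\le A_2(b_j'+\eta_{t+1,j}^{-d})$. This reduces the entire problem to controlling the two conditional moments $\mathbb{E}[\eta_{t+1,j}^{c}\mid m_{t,j}]$ and $\mathbb{E}[\eta_{t+1,j}^{-d}\mid m_{t,j}]$ of the intermediary density $\pi(\eta\mid m)\propto\eta^{-(1-\nu)/2}(1+\nu\eta)^{-(\nu+1)/2}e^{-m\eta}$.

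The heart of the argument is a lemma establishing $\mathbb{E}[\eta_{t+1,j}^{c}\mid m_{t,j}]\le\gamma_1\,m_{t,j}^{-c}+K_1$ and $\mathbb{E}[\eta_{t+1,j}^{-d}\mid m_{t,j}]\le\gamma_2\,m_{t,j}^{d}+K_2$ with explicit constants. Both moments are ratios of integrals of the form $\int_0^\infty\eta^{a-1}(1+\nu\eta)^{-(\nu+1)/2}e^{-m\eta}\,d\eta$, expressible through the confluent hypergeometric function of the second kind $U$; I would prove the claimed inequalities by combining the small-$m$ behaviour (where the nonintegrable-looking tail $\eta\mapsto\eta^{-1}$ is tamed by $e^{-m\eta}$) with the large-$m$ behaviour (where the law is asymptotically $\mathrm{Gamma}((\nu+1)/2,m)$, producing the $m^{\mp c}$ and $m^{\pm d}$ scalings), and uniform control over the intermediate range. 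Substituting back and summing over $j$ gives $\mathbb{E}[V_{t+1}\mid\beta_t,\xi_t,\sigma^2_t]\le A_1\gamma_1\sum_j m_{t,j}^{-c}+A_2\gamma_2\sum_j m_{t,j}^{d}+K$, so the proposition follows once $c\in(0,1/2)$ and $d\in(0,1)$ are chosen (depending on $\nu$) so that $\gamma:=\max(A_1\gamma_1,A_2\gamma_2)<1$.

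The main obstacle is precisely this final matching of constants. Several multiplicative factors, namely the powers $2^{\pm c}$, the chi-square moment $\mathbb{E}[(\chi^2_1)^{-c}]$, and the gamma-type prefactors in the $U$-ratio asymptotics, all tend to $1$ as $c,d\to0$, so one cannot merely push the exponents to zero. The delicate step is to show that for some strictly positive $c,d$ the contraction gained from the confluent-hypergeometric moment lemma strictly dominates the loss from the crude matrix and chi-square bounds; this is where the freedom to let $c,d$ depend on $\nu$, and the sharpness of the $U$-ratio estimates, become essential.
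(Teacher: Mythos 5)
Your skeleton is essentially the paper's: condition through the intermediary $\eta_{t+1}$, bound $\mathbb{E}[m_{t+1,j}^{d}\mid\eta_{t+1}]$ and $\mathbb{E}[m_{t+1,j}^{-c}\mid\eta_{t+1}]$ by multiples of $\eta_{t+1,j}^{-d}$ and $\eta_{t+1,j}^{c}$ (normal/noncentral-$\chi^2$ moments, $v_j\le(\xi_{t+1}\eta_{t+1,j})^{-1}$, the uniform ridge bound of Lemma \ref{lemma:uniform_bound_mean}, compact support of $\pi_\xi$, and $\mathbb{E}[\sigma^{-2}\mid\xi,\eta]\le(a_0+n)/b_0$), then close with moment bounds for the $\eta$ full conditional via ratios of $U$ functions, exactly as in Lemma \ref{lemma:eta_moments} and Corollary \ref{cor:eta_moments_bound}. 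But there is a genuine gap, and one step as written sabotages the endgame. In the positive part you bound $\lambda_j^d\le(B^2\Sigma_{jj}/\sigma^2_{t+1})^d$ with $\Sigma_{jj}=(X^TX)_{jj}+\xi_{t+1}\eta_{t+1,j}$: this reintroduces $\eta_{t+1,j}$, and after multiplying by $(\xi_{t+1}v_j/2)^d\le(2\eta_{t+1,j})^{-d}$ the coefficient of $\eta_{t+1,j}^{-d}$ becomes $2^{-d}\bigl(1+(\cdot)^d\bigr)$, which tends to $2$ as $d\to0$. Since the subsequent $\eta$-moment factor $\Gamma(\frac{\nu+1}{2}-d)/\Gamma(\frac{\nu+1}{2})$ tends to $1$ as $d\to 0$ --- and for $\nu=1$ exceeds $1$ for every $d\in(0,1)$, with $\min_{d}\,2^{-d}\,\Gamma(1-d)\approx 0.996$ --- no choice of $d$ then yields $\gamma<1$, in particular for the Horseshoe. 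The cure is not to bound $\lambda_j$ in isolation: one has the exact cancellation $(\xi_{t+1}v_j/2)\,\lambda_j=\xi_{t+1}\mu_j^2/(2\sigma^2_{t+1})$, so the noncentrality contribution goes entirely into the additive constant via $|\mu_j|\le C_p\|y\|_2$, leaving coefficient exactly $2^{-d}$ on $\eta_{t+1,j}^{-d}$; this is what the paper's Jensen/H\"{o}lder route in the proof of Proposition \ref{prop:drift} produces.

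What you defer as ``the main obstacle'' --- matching the constants --- is the actual content of the proposition, and it is not closed by a symmetric $c,d\to0$ balancing act: the two halves rely on different mechanisms. For the $-c$ half, the binding regime for $\mathbb{E}[\eta^{c}\mid m]\le\gamma_1 m^{-c}+K_1$ is $m\to0^{+}$, not the large-$m$ Gamma regime your sketch invokes: there $m^{c}\,\mathbb{E}[\eta^{c}\mid m]\to0$ (for $\nu=1$ a $\log(1/m)$ factor in the normalizing constant kills the coefficient), while for $m$ bounded away from zero the moment is decreasing in $m$ by stochastic domination; hence $\gamma_1$ can be taken \emph{arbitrarily small} (Lemma \ref{lemma:confluent_ratio_bound}, part 1). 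This is indispensable, because your prefactor on $\eta_{t+1,j}^c$ is $\Gamma(\tfrac12-c)/\sqrt{\pi}>1$ for every $c\in(0,1/2)$, so $A_1\gamma_1<1$ is achievable only with a freely small $\gamma_1$, for any fixed $c$. For the $d$ half the coefficient is pinned at $\gamma_2=\Gamma(\frac{\nu+1}{2}-d)/\Gamma(\frac{\nu+1}{2})$ (the large-$m$ Gamma regime), and the contraction comes from the digamma computation: $f(d):=2^{-d}\gamma_2$ satisfies $f(0)=1$ and $f'(0)=-\log 2-\psi(\tfrac{\nu+1}{2})<0$ for all $\nu\ge1$, since $-\psi(\tfrac{\nu+1}{2})\le-\psi(1)=0.5772\ldots<\log 2$, so some small $d>0$ depending on $\nu$ works. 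Supply these two mechanisms, together with the $v_j\lambda_j$ cancellation above, and your outline becomes the paper's proof; without them it does not close.
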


The drift (or ``Lyapunov'') function in \eqref{eq:lyapunov_function},
approaches infinity when any $\beta_j$ approaches the 
origin or infinity. This ensures that the corresponding sub-level 
sets, defined by 
$S(R) = \{ (\beta, \xi, \sigma^2) \in \mathbb{R}^p \times \mathbb{R}_{>0} 
\times \mathbb{R}_{>0} : V(\beta, \xi, \sigma) < R \}$ for $R>0$, exclude
the pole at the origin and are bounded. 
Such geometric drift condition, in conjunction with 
\citet[Proposition 4]{jacob2020unbiasedJRSSB}, 
helps verify that the meeting times under the one-scale coupling 
have exponential tails and hence finite expectation. 

\begin{Prop} \label{prop:one_scale_meeting}
Consider the blocked Gibbs sampler in \eqref{eq:blocked_gibbs_ergodicity_version}. 
In the setup of Proposition \ref{prop:drift}, assume that the global shrinkage 
prior $\pi_\xi(\cdot)$ has a compact support. Write $Z_t = (\beta_t, \xi_t, \sigma^2_t)$ and 
$\tilde{Z}_t = ( \tilde{\beta}_t, \tilde{\xi}_t, \tilde{\sigma}^2_t)$.
Consider the one-scale coupling given by 
$(Z_t, \tilde{Z}_t) \mapsto (\eta_{t+1}, \tilde{\eta}_{t+1}) \mapsto (Z_{t+1}, \tilde{Z}_{t+1})$
where $(\eta_{t+1}, \tilde{\eta}_{t+1})$ are maximally coupled component-wise, and 
$(Z_{t+1}, \tilde{Z}_{t+1})$ are coupled using common random numbers. Denote
 the meeting time by $\tau := \inf \{ t \geq 0: Z_t = \tilde{Z}_t \}$. Then
$\mathbb{P}( \tau > t ) \leq A_0 \kappa_0^t$
for some constants $A_0 \in (0, \infty)$ and $\kappa_0 \in (0,1)$, and for all $t \geq 0$.
\end{Prop}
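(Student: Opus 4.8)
The plan is to apply \citet[Proposition 4]{jacob2020unbiasedJRSSB}, which turns a geometric drift condition together with a one-step minorization realized by the coupling into an exponential tail bound on the meeting time. Concretely, I need to supply two ingredients: (i) a joint geometric drift for the coupled chain $(Z_t, \tilde{Z}_t)$, and (ii) a uniform lower bound, valid on a sub-level set of the drift function, on the probability that the two chains meet exactly in one step.

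For (i), I would first note that under the one-scale coupling each of $(Z_t)_{t \geq 0}$ and $(\tilde{Z}_t)_{t \geq 0}$ marginally evolves according to the kernel of Proposition \ref{prop:drift}, since the coupling affects only their joint law. Taking the product drift function $\bar{V}(Z, \tilde{Z}) = V(Z) + V(\tilde{Z})$ with $V$ as in \eqref{eq:lyapunov_function} and applying the single-chain inequality \eqref{eq:drfit} to each coordinate separately yields
\begin{equation*}
\mathbb{E}\big[\bar{V}(Z_{t+1}, \tilde{Z}_{t+1}) \mid Z_t, \tilde{Z}_t\big] \leq \gamma\, \bar{V}(Z_t, \tilde{Z}_t) + 2K ,
\end{equation*}
so the coupled chain inherits the contraction rate $\gamma \in (0,1)$ with a doubled additive constant. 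Fixing any $R > 2K/(1-\gamma)$, the set $C = \{(Z,\tilde{Z}) : \bar{V}(Z,\tilde{Z}) \leq R\}$ is then returned to at a geometric rate, as required by the tail-bound machinery.

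For (ii), the crucial observation is that on $C$ both $Z$ and $\tilde{Z}$ lie in the sub-level set $S(R)$. Because $V = \sum_{j} (m_j^{-c} + m_j^{d})$ diverges both as $m_j \to 0$ and as $m_j \to \infty$, membership in $S(R)$ confines every $m_{t,j}$ and $\tilde{m}_{t,j}$ to a common compact interval $[\underline{m}, \overline{m}] \subset (0,\infty)$. Under the coupling of the proposition the subsequent $(\xi, \sigma^2, \beta)$ updates are generated from $(\eta_{t+1}, \tilde{\eta}_{t+1})$ by common random numbers, so on the event $\{\eta_{t+1} = \tilde{\eta}_{t+1}\}$ the deterministic sampling maps produce $Z_{t+1} = \tilde{Z}_{t+1}$; it therefore suffices to lower-bound $\mathbb{P}(\eta_{t+1} = \tilde{\eta}_{t+1})$. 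Since the components are maximally coupled independently, following Algorithm \ref{algo:Half_t_meetings_every_step}, this probability equals $\prod_{j=1}^p \big(1 - \text{TV}(\pi(\cdot \mid m_{t,j}), \pi(\cdot \mid \tilde{m}_{t,j}))\big)$. The map $(m, \tilde{m}) \mapsto \text{TV}(\pi(\cdot\mid m), \pi(\cdot\mid\tilde{m}))$ is continuous and strictly below one on the compact square $[\underline{m}, \overline{m}]^2$, hence bounded away from one there, so each factor is at least some $\delta > 0$ and $\mathbb{P}(\eta_{t+1} = \tilde{\eta}_{t+1}) \geq \delta^p =: \epsilon > 0$. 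Combining the drift from (i), this meeting bound on $C$, and \citet[Proposition 4]{jacob2020unbiasedJRSSB} produces constants $A_0 \in (0,\infty)$ and $\kappa_0 \in (0,1)$ with $\mathbb{P}(\tau > t) \leq A_0 \kappa_0^t$.

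I expect the main obstacle to be making step (ii) fully rigorous, namely establishing that the total-variation overlap of the two local-precision conditionals is uniformly positive over the compact range of $m$-values. This requires confirming that the family of densities $\pi(\cdot \mid m)$ on $(0,\infty)$ varies continuously in $m$ in total variation and is never mutually singular — which holds because they share common support and differ only through the exponential tilt $e^{-m\eta}$ in \eqref{eq:blocked_gibbs_ergodicity_version} — and then invoking compactness. A secondary point to record is that $\delta$, and therefore $\epsilon = \delta^p$ together with $A_0$ and $\kappa_0$, may depend on $p$; this is harmless for the present statement, where $p$ is fixed, but it already foreshadows the deterioration of meeting times with dimension that motivates the two-scale coupling of Section \ref{section:coupled_MCMC_2}.
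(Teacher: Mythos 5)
Your proposal is correct and follows the same skeleton as the paper's proof: marginal drift from Proposition \ref{prop:drift}, a uniform lower bound on the one-step meeting probability over the sub-level set $S(R)$, and \citet[Proposition 4]{jacob2020unbiasedJRSSB} to convert the two into an exponential tail; your factorization $\mathbb{P}(Z_{t+1}=\tilde{Z}_{t+1}\mid Z_t,\tilde{Z}_t)=\prod_{j=1}^p \mathbb{P}(\eta_{t+1,j}=\tilde{\eta}_{t+1,j}\mid m_j,\tilde{m}_j)$, using that the CRN step makes equal $\eta$'s propagate to equal $Z$'s and that the maximal couplings act independently component-wise, is exactly the paper's. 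The one genuine divergence is how the per-component overlap is bounded below: you use a soft argument --- total-variation continuity of $m\mapsto \pi(\cdot\mid m)$ (exponential tilting plus Scheff\'e), strict positivity of the overlap by common support, and compactness of the square of $m$-values --- whereas the paper proves the quantitative Lemma \ref{lemma:eta_coupling_bound}, bounding the pointwise minimum of the two normalized densities by $e^{-b\eta^*}/U\big(\tfrac{1+\nu}{2},1,\tfrac{a}{\nu}\big)$ and integrating to obtain the explicit constant $\epsilon=\big(U(\tfrac{1+\nu}{2},1,R^{1/d}/\nu)\,/\,U(\tfrac{1+\nu}{2},1,R^{-1/c}/\nu)\big)^p$ with $(a,b)=(R^{-1/c},R^{1/d})$. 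Your route is slightly more general, needing only mutual absolute continuity and TV-continuity of the tilted family, and it would extend beyond the Half-$t$ conditionals; the paper's computation buys an explicit $\epsilon$ that is reused verbatim as the minorization constant in Proposition \ref{prop:minorization} for geometric ergodicity, and it exhibits $\epsilon$ as the $p$-th power of a fixed hypergeometric ratio, making quantitative the exponential-in-dimension degradation of $\kappa_0$ that your closing remark only records qualitatively. Two book-keeping points, neither affecting validity: the confinement of the $m_j$'s follows from $V<R$ in the form $m_j,\tilde{m}_j\in(R^{-1/c},R^{1/d})$ (each endpoint controlled by one of the two terms $m_j^{-c}$, $m_j^{d}$ of $V$), and the threshold on $R$ should be taken as in \citet[Proposition 4]{jacob2020unbiasedJRSSB} (the paper requires $\gamma+K/(1+R)<1$, together with $R>1$ so that its explicit $\epsilon$ lies in $(0,1)$), rather than your $R>2K/(1-\gamma)$, which is the analogous condition for your bivariate drift formulation.
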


Exponentially-tailed meeting times immediately imply that the blocked Gibbs sampler in Section 
\ref{subsec:blocked_gibbs_mcmc} is geometrically ergodic. This follows 
from noting that \citet[Proposition 4]{jacob2020unbiasedJRSSB} is closely linked to 
a minorization condition \citep{rosenthal1995minorizationJASA}, and details
are included in Proposition \ref{prop:minorization} of Appendix \ref{appendices:proofs}. 
For a class of Gibbs samplers targeting shrinkage 
priors including the Bayesian LASSO, the Normal-Gamma prior \citep{griffin2010inferenceBA}, and 
Dirichlet-Laplace prior \citep{bhattacharya2015dirichletJASA}, \citet{khare2013geometricEJS} and
\citet{pal2014geometricEJS} have proven geometric ergodicity based on drift and
minorization arguments \citep{meyn1993markov,roberts2004general}. For the Horseshoe prior,
\citet{johndrow2020scalableJMLR} has recently established geometric ergodicity. In
the high-dimensional setting with $p > n$ for the Horseshoe, the 
proof of \citet{johndrow2020scalableJMLR} required 
truncating the prior on each $\eta_j$ below by a small $\ell > 0$ to guarantee the uniform bound 
$\eta_j \geq \ell$. Our proof of geometric ergodicity for Half-$t(\nu)$ priors including 
the Horseshoe ($\nu=1$) works in both low and high-dimensional settings, without any 
modification of the priors on the $\eta_j$. In parallel work, 
\citet{bhattacharya2021geo} have also established geometric ergodicity 
for the Horseshoe prior without requiring such truncation.

\begin{Remark}
Our proofs of exponentially-tailed meeting times and geometric ergodicity
generalize to a larger class of priors satisfying some moment conditions on the full
conditionals of $\eta_j$. 
Consider a compactly supported prior $\pi_\xi$ on $\xi$, and a prior
$\pi_{\eta}$ on each $\eta_j$. Then 
the unnormalized density of the full conditionals of each $\eta_j$ is given by
$\pi \big(\eta_{j} \mid  m_j = \xi \beta_j^2/(2 \sigma^2) \big) \propto \eta_j^{1/2} e^{- m_j \eta_j} \pi_{\eta}(\eta_j)$.
Consider the following assumptions on $\pi_\eta$.
\begin{enumerate}
 \item For some $0<c < 1/2$, there exists $0 < \epsilon < \Gamma(\frac{1}{2}-c)^{-1}\sqrt{\pi}$ and $K^{(1)}_{c, \epsilon} < \infty$ such that
$\mathbb{E}[ \eta_{j}^c | m_j] \leq \epsilon m_j^{-c} + K^{(1)}_{c, \epsilon}$ for all $m_j > 0$.
\item For some $0 < d < 1$, there exists $0 < \epsilon < 2^d $ and $K^{(2)}_{d, \epsilon} < \infty$ such that
$\mathbb{E}[ \eta_{j}^{-d} | m_j] \leq \epsilon m_j^{d} + K^{(2)}_{d, \epsilon}$ for all $m_j > 0$.
\end{enumerate}
Then the blocked Gibbs sampler in \eqref{eq:blocked_gibbs_ergodicity_version} is geometrically ergodic. 
\end{Remark}

We now consider the performance of Algorithm
\ref{algo:Half_t_meetings_every_step} on synthetic datasets. In particular, we empirically illustrate that the rate $\kappa_0$ in Proposition \ref{prop:one_scale_meeting} can tend to $1$ exponentially fast as dimension $p$ increases.  
The design matrix $X \in \mathbb{R}^{n \times p}$ 
is generated with $[X]_{i,j}
\overset{\text{i.i.d.}}{\sim} \mathcal{N}(0,1)$, and a response vector as $y \sim
\mathcal{N}(X \beta_{*} , \sigma^2_{*}I_n)$, where $\beta_{*} \in \mathbb{R}^p$
is the true signal and $\sigma_{*} \geq 0$ is the true error standard
deviation. We choose $\beta_{*}$ to be sparse such that given sparsity parameter
$s$, $\beta_{*,j}=2^{(9-j)/4}$ for $1 \leq j \leq s$ and $\beta_{*,j}=0$ for
all $s >  j$. Figure \ref{fig:max_coupling_plot} shows the meeting times $\tau$
of coupled Markov chains (with a lag $L=1$) targeting the Horseshoe posterior
($\nu=1$) with $a_0=b_0=1$ under Algorithm
\ref{algo:Half_t_meetings_every_step}. We consider $n=100, s=10, \sigma_*=0.5$ and vary the dimension $p$. 
For each dimension $p$, we simulate $100$ meeting times 
under the one-scale coupling for independently generated synthetic datasets.
We initialize chains independently from the prior distribution.  
For the $\xi$ updates, we use a proposal step-size of $\sigma_{\text{MRTH}}=0.8$ as in
\citet{johndrow2020scalableJMLR}. 
Figure \ref{fig:max_coupling_plot}, with a vertical axis on a logarithmic scale, shows
that the meeting times increase exponentially and have heavier tails with increasing dimension. 




\begin{figure}[!ht]
\captionsetup[subfigure]{font=normalsize,labelfont=normalsize}
    \centering
    \includegraphics[width=0.32\textwidth]{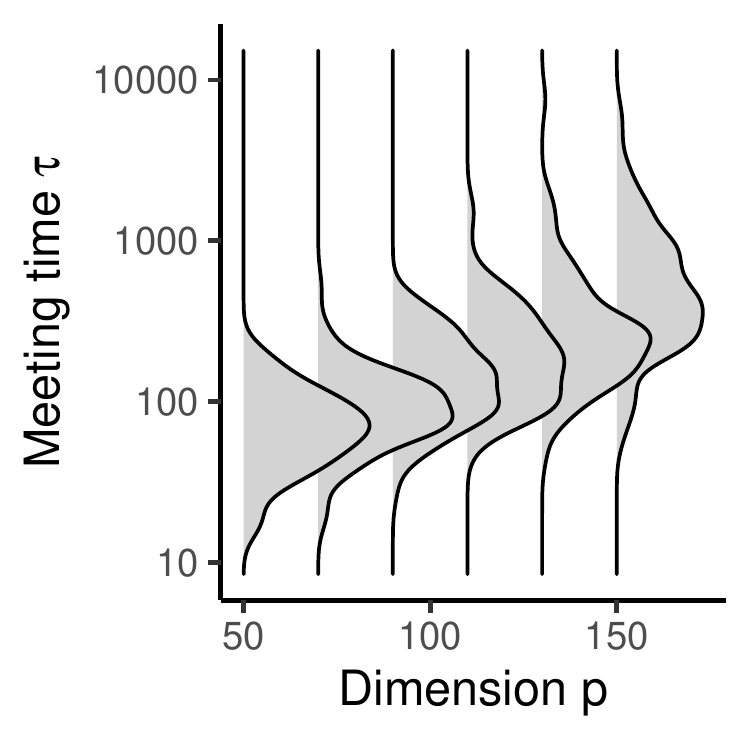}
  \caption{Meeting times against dimension for posteriors associated with a Horseshoe prior under the one-scale coupling of Algorithm \ref{algo:Half_t_meetings_every_step},
  with $n=100, s=10, \sigma_*=0.5$.}
    \label{fig:max_coupling_plot}
\end{figure}

\section{Coupling strategies for high-dimensional settings} \label{section:coupled_MCMC_2}

To address the scaling issues observed with the one-scale coupling in 
Section \ref{subsec:max_coupling}, 
we develop a \textit{two-scale} coupling in Section
\ref{subsec:two_scale_coupling} that delivers better empirical performance.
We also offer some formal results supporting the empirical improvements.
In Section \ref{subsec:coupling_variants} we develop other related coupling strategies 
which have exponentially-tailed meeting times and have similar empirical
performance as the two-scale coupling.
In Section \ref{subsec:degree_of_freedom}, we investigate the
impact of the degree of freedom $\nu$ for Half-$t(\nu)$ priors on our two-scale
coupling. We find that Half-$t(\nu)$ priors with higher degrees of freedom
$\nu > 1$ can give similar statistical performance to the Horseshoe ($\nu=1$),
whilst enjoying orders of magnitude computational improvements through shorter
meeting times. 

\subsection{A two-scale coupling} \label{subsec:two_scale_coupling}
We develop a \textit{two-scale} coupling algorithm,
which slightly differs from the terminology used in stochastic dynamics  
and in analysis of pre-conditioned HMC 
\citep{bou-rabee2020twoscale2020SPDEAC}. 
It is also closely linked to the delayed one-shot coupling 
of \citet{roberts2002oneshotSTOCHPA}, and
refers to coupled kernels that attempt exact meetings when the chains are close, and aim
for a contraction when the chains are far.
The motivation for this construction is that in Algorithm
\ref{algo:Half_t_meetings_every_step}, the components of
$(\eta_t, \tilde{\eta}_t)$ that fail to meet instead evolve independently. As
dimension grows, the number of components of $(\eta_t, \tilde{\eta}_t)$ that
fail to meet also tends to grow, making it increasingly difficult for all components 
to exactly meet on subsequent iterations. 
In the two-scale coupling, we only attempt to obtain exact meetings when the associated 
probability is large enough. 
This is done by constructing a metric $d$ and a corresponding threshold 
parameter $d_0\geq 0$ such that when the current states 
are $d$-close with $d(C_t, \tilde{C}_t) \leq d_0$, we apply Algorithm
\ref{algo:slice_sampling_exact_meeting_coupling} and try to obtain exact meetings.
When $d(C_t, \tilde{C}_t) >d_0$, we instead
employ 
common random numbers to sample $(\eta_t, \tilde{\eta}_t)$.
Our chosen metric $d$ on $\mathbb{R}^p \times \mathbb{R}^p_{> 0} \times \mathbb{R}_{> 0} \times \mathbb{R}_{> 0}$ is 
\begin{equation} \label{eq:distance_metric}
d(C_t, \tilde{C}_t) = \mathbb{P}\big( \eta_{t+1} \neq \tilde{\eta}_{t+1} | C_t, \tilde{C}_t \big),
\end{equation}
where recall that $ C_t = (\beta_t, \eta_t, \sigma^2_t, \xi_t) $, $\tilde{C}_t = (\tilde{\beta}_t, \tilde{\eta}_t, \tilde{\sigma}^2_t, \tilde{\xi}_t)$, and $(\eta_{t+1}, \tilde{\eta}_{t+1}) | C_t, \tilde{C}_t$ is sampled using
the coupled slice sampler in Algorithm \ref{algo:slice_sampling_exact_meeting_coupling}.  As the coupling in Algorithm 
\ref{algo:slice_sampling_exact_meeting_coupling} is independent component-wise, we can obtain a simplified expression involving only univariate probabilities as 
\begin{flalign}
d(C_t, \tilde{C}_t) &=  1 - \prod_{j=1}^p \mathbb{P} \big( \eta_{t+1,j} = \tilde{\eta}_{t+1,j} |  \eta_{t,j}, \tilde{\eta}_{t,j}, m_{t,j}, \tilde{m}_{t,j} \big) \label{eq:distance_metric_estimate2},
\end{flalign}
where $m_{t,j} = {\xi_{t} \beta_{t,j}^2}/(2 \sigma_{t}^2)$ and $m_{t,j} = {\tilde{\xi}_{t} \tilde{\beta}_{t,j}^2}/(2 \tilde{\sigma}_{t}^2)$. Under this metric $d$ and a threshold $d_0 \in [0, 1]$, an exact meeting is attempted when $\mathbb{P}(\eta_{t+1} = \tilde{\eta}_{t+1} | C_t, \tilde{C}_t) \geq 1-d_0$. Once $\eta_{t+1}$ and $\tilde{\eta}_{t+1}$ coincide, and if the scalars $\xi_{t+1}$ and $\tilde{\xi}_{t+1}$
also subsequently coincide, then
the entire chains meet. 
When $d_0 = 1$, the two-scale coupling reduces to the one-scale coupling of Algorithm \ref{algo:Half_t_meetings_every_step} since the maximal coupling slice sampler (Algorithm \ref{algo:slice_sampling_exact_meeting_coupling}) is invoked at each step. On the other hand, when threshold $d_0= 0$, we always apply the common random numbers. Then the chains
$C_t, \tilde{C}_t$ may come arbitrarily close
but will never exactly meet. Empirically we find that different thresholds $d_0$
values away from $0$ and $1$ give similar meeting times. Simulations 
concerning these choices can be found in Appendix
\ref{appendices:threshold_choice}.

We now discuss computation of the metric $d(C_t, \tilde{C}_t)$. Since the probability in \eqref{eq:distance_metric}
is unavailable in closed form, we can resort to various approximations. One option is to evaluate the one-dimensional integrals in \eqref{eq:distance_metric_estimate2} using numerical integration. Alternatively, one can form the Monte Carlo based estimate 
\begin{equation} \label{eq:distance_metric_estimate1}
  \widehat{d_R}^{(1)}(C_t, \tilde{C}_t) = \frac{1}{R} \sum_{r=1}^R \mathbbm{1} \{  \eta^{(r)}_{t+1} \neq \tilde{\eta}^{(r)}_{t+1} \},
\end{equation}
where $(\eta^{(r)}_{t+1}, \tilde{\eta}^{(r)}_{t+1} )$ are 
sampled independently for $r=1,\ldots,R$ using Algorithm
\ref{algo:slice_sampling_exact_meeting_coupling}. We recommend a Rao--Blackwellized estimate by combining Monte Carlo with analytical calculations
of integrals as
\begin{equation} \label{eq:distance_metric_estimate3}
\widehat{d_R}^{(2)}(C_t, \tilde{C}_t) = 1 - \prod_{j=1}^p \Big(\frac{1}{R} \sum_{r=1}^R \mathbb{P}\Big( \eta_{t+1,j} = \tilde{\eta}_{t+1,j} \Big| U^{(r)}_{j}, \tilde{U}^{(r)}_{j}, m_{t,j}, \tilde{m}_{t,j} \Big) \Big),
\end{equation}
where $\big(U^{(r)}_{j}, \tilde{U}^{(r)}_{j} \big)$ are sampled independently for
$r=1,\ldots,R$ using Algorithm \ref{algo:slice_sampling_exact_meeting_coupling}
and each 
$\mathbb{P}( \eta_{t+1,j} = \tilde{\eta}_{t+1,j} \Big| U^{(r)}_{j}, \tilde{U}^{(r)}_{j}, m_{t,j}, \tilde{m}_{t,j})$ 
is calculated analytically based on meeting probabilities from Algorithm \ref{algo:slice_sampling_exact_meeting_coupling}. 
Further metric calculation details are in Appendix \ref{appendices:metric_calculation}. 
For a number of samples $R$, estimates \eqref{eq:distance_metric_estimate1} 
and \eqref{eq:distance_metric_estimate3} both have computation cost
of order $pR$. Compared to the estimate in \eqref{eq:distance_metric_estimate1},
the estimate in \eqref{eq:distance_metric_estimate3}
has lower variance and faster numerical run-times as it
only involves sampling uniformly distributed random numbers. It suffices to choose a small
number of samples $R$, and often we take $R=1$. Indeed it appears unnecessary
to estimate $d(C_t, \tilde{C}_t)$ accurately, as we are only interested in
comparing that distance to a fixed
threshold $d_0\in [0,1]$. Often 
the trajectories $(d(C_t, \tilde{C}_t))_{t \geq 0}$ 
initially take values close to $1$, and then
sharply drop to values close to $0$. This leads to the estimates
in \eqref{eq:distance_metric_estimate1} and \eqref{eq:distance_metric_estimate3}
having low variance. Henceforth, we will use the estimate
in \eqref{eq:distance_metric_estimate3}
in our experiments with two-scale couplings, unless specified otherwise. 

Overall the two-scale coupling kernel
has twice the $\mathcal{O}(n^2p)$ cost of the single chain
kernel in \eqref{eq:blocked_gibbs_ergodicity_version}. 
When $C_t \neq \tilde{C}_t$, we calculate a
distance estimate and sample from a coupled slice sampling kernel. Calculating
the distance estimate involves sampling $2pR$ uniforms and has $\mathcal{O}(p)$ cost.
Coupled slice sampling with maximal couplings (Algorithm
\ref{algo:slice_sampling_exact_meeting_coupling}) or common random numbers
both have expected or deterministic cost $\mathcal{O}(p)$. The remaining steps of 
the two-scale coupling match the one-step coupling in Algorithm 
\ref{algo:Half_t_meetings_every_step}. 

We now consider the performance of the two-scale coupling
on synthetic datasets. The setup is identical to that introduced in Section
\ref{subsec:max_coupling}, where for each dimension $p$ 
we simulate $100$ meetings times based on independently generated synthetic datasets.
We target the Horseshoe posterior ($\nu=1$) with
$a_0=b_0=1$. Figure \ref{fig:two_scale_coupling_plot} shows the meeting times
$\tau$ of coupled Markov chains (with a lag $L=1$), for both the one-scale coupling 
and the two-scale coupling
with $R=1$ and threshold $d_0=0.5$. 
Figure \ref{fig:two_scale_coupling_plot} shows that the two-scale coupling 
can lead to orders of magnitude reductions in meeting times, compared to the one-scale 
coupling. 

\begin{figure}[!ht]
\captionsetup[subfigure]{font=normalsize,labelfont=normalsize}
    \centering
    \includegraphics[width=0.43\textwidth]{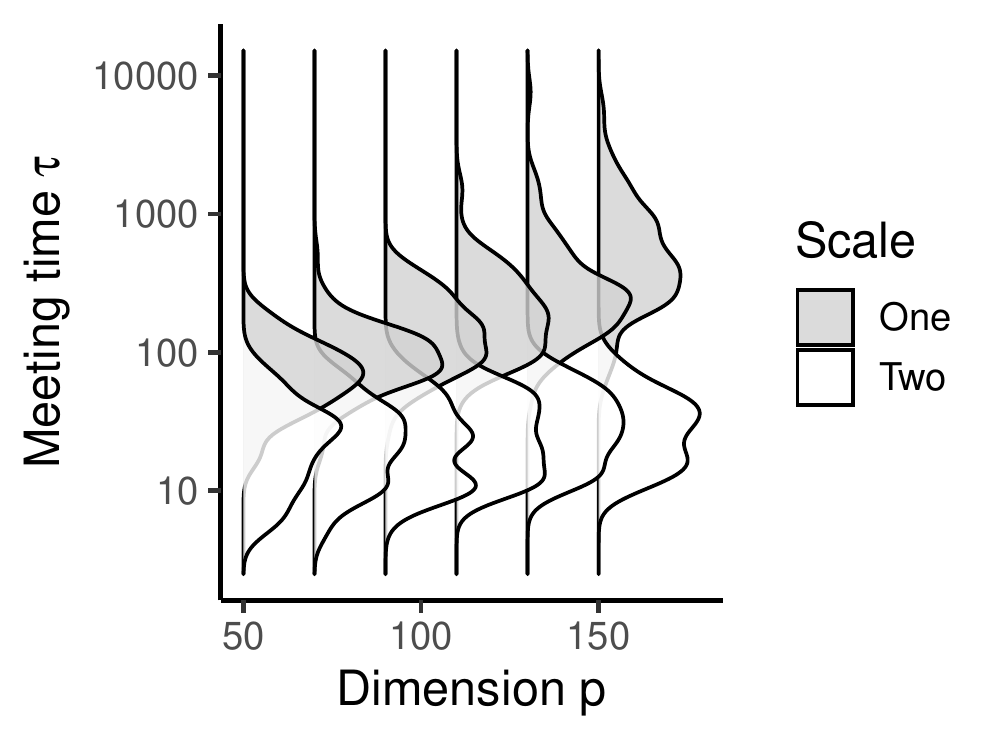}
  \caption{Meeting times against dimension for posteriors associated with a Horseshoe prior under 
    one-scale and two-scale couplings,
    with $n=100,s=10, \sigma_*=0.5$. }
    \label{fig:two_scale_coupling_plot}
\end{figure}
We next present some preliminary results which hint as to why 
the two-scale coupling leads to the drastic reduction in meeting times. 
We focus on the coupling of the $(\eta, \tilde{\eta})$ full conditionals, 
as the one-scale and two-scale algorithms differ exactly at this step. 
First, we show that for any monotone function $h:(0, \infty) \rightarrow \mathbb{R}$
and all components $j=1, \ldots, p$, the expected distance
$\mathbb{E} [ | h(\eta_{t+1,j}) - h(\tilde{\eta}_{t+1,j}) | ] $
is the same under both common random numbers and maximal coupling.

\begin{Prop} \label{prop:crn_max_expectation}
Consider the setup of Proposition \ref{prop:one_scale_meeting}, 
such that the $\eta$ full conditionals are sampled perfectly. Then 
for any monotone function $h:(0, \infty) \rightarrow \mathbb{R}$ and
all components $j=1, \ldots, p$, 
\begin{equation}
\mathbb{E}_{ \text{max} } 
\big[ | h(\eta_{t+1,j}) - h(\tilde{\eta}_{t+1,j}) | \big| m_{t}, \tilde{m}_{t} \big] =
\mathbb{E}_{ \text{crn} } 
\big[ | h(\eta_{t+1,j}) - h(\tilde{\eta}_{t+1,j}) | \big| m_{t}, \tilde{m}_{t} \big],
\end{equation}
where $\mathbb{E}_\text{max}$ and $\mathbb{E}_\text{crn}$ correspond to 
expectations under the maximal coupling and CRN coupling respectively of 
$(\eta_{t+1}, \tilde{\eta}_{t+1})$ given $m_t$ and $\tilde{m}_t$.
\end{Prop}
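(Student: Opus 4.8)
The plan is to reduce the claim to a single structural fact about the two $\eta$-full-conditionals and then invoke an elementary coupling identity. Because both couplings act component-wise and independently across $j$, and because the statement conditions on $m_t,\tilde m_t$, I would fix a component $j$ and work with the two univariate target laws. Under the perfect-sampling assumption inherited from Proposition \ref{prop:one_scale_meeting}, the marginal laws of $\eta_{t+1,j}$ and $\tilde\eta_{t+1,j}$ are exactly $p(\cdot\mid m_{t,j})$ and $p(\cdot\mid\tilde m_{t,j})$, where $p(\eta\mid m)\propto \eta^{(\nu-1)/2}(1+\nu\eta)^{-(\nu+1)/2}e^{-m\eta}$ on $(0,\infty)$. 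Without loss of generality I take $h$ nondecreasing (replace $h$ by $-h$) and $m_{t,j}\le \tilde m_{t,j}$. The elementary observation I would record first is this: if a coupling $(X,Y)$ of two laws $P,Q$ produces an almost surely ordered pair, say $X\ge Y$ a.s., then for nondecreasing $h$ one has $|h(X)-h(Y)|=h(X)-h(Y)$, so $\mathbb{E}[|h(X)-h(Y)|]=\mathbb{E}_P[h]-\mathbb{E}_Q[h]=|\mathbb{E}_P[h]-\mathbb{E}_Q[h]|$, a functional of the marginals alone. Since both couplings share the marginals $p(\cdot\mid m_{t,j})$ and $p(\cdot\mid\tilde m_{t,j})$, it therefore suffices to show that each coupling yields an a.s.\ ordered pair with the same orientation.

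The structural fact I would establish is a monotone likelihood ratio (equivalently single-crossing) property. The ratio of the two unnormalised densities is $p(\eta\mid m_{t,j})/p(\eta\mid\tilde m_{t,j})\propto e^{(\tilde m_{t,j}-m_{t,j})\eta}$, whose logarithm is affine and nondecreasing in $\eta$ since $\tilde m_{t,j}\ge m_{t,j}$. Hence $p(\cdot\mid m_{t,j})$ dominates $p(\cdot\mid\tilde m_{t,j})$ in the likelihood-ratio order, which gives both the stochastic order $p(\cdot\mid m_{t,j})\succeq p(\cdot\mid\tilde m_{t,j})$ and the fact that $p(\eta\mid m_{t,j})-p(\eta\mid\tilde m_{t,j})$ changes sign exactly once, from negative to positive, as $\eta$ increases.

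With this in hand I would treat the two couplings. For common random numbers I would use that, in the perfect-sampling idealization, sharing the uniform amounts to the comonotone (quantile) coupling $\eta_{t+1,j}=F_{m_{t,j}}^{-1}(V)$, $\tilde\eta_{t+1,j}=F_{\tilde m_{t,j}}^{-1}(V)$; stochastic ordering makes the two quantile functions pointwise ordered, so $\eta_{t+1,j}\ge\tilde\eta_{t+1,j}$ almost surely. For the maximal coupling with independent residuals (Algorithm \ref{algo:max_coupling}) I would use the single-crossing conclusion: on the agreement event the pair is trivially ordered, while on the disagreement event $\eta_{t+1,j}$ is drawn from the residual $\propto(p(\cdot\mid m_{t,j})-p(\cdot\mid\tilde m_{t,j}))_+$ and $\tilde\eta_{t+1,j}$ from $\propto(p(\cdot\mid\tilde m_{t,j})-p(\cdot\mid m_{t,j}))_+$. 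Single crossing forces these residual densities to have disjoint supports, the support of the first lying entirely to the right of that of the second, so $\eta_{t+1,j}\ge\tilde\eta_{t+1,j}$ holds surely \emph{even though the residuals are sampled independently}. Both couplings thus produce an a.s.\ ordered pair with the same orientation, and the elementary identity of the first paragraph completes the proof.

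I expect the only real obstacle to be this maximal-coupling step: one must argue that the ordering survives despite the residuals being drawn independently, and this is exactly where single crossing (disjoint, correctly oriented residual supports) is essential — without it the independent residuals could interleave and the two couplings would genuinely differ. As a cross-check I would note the equivalent route through the layer-cake identity $\mathbb{E}[|h(X)-h(Y)|]=\int \big(F_X(t)+F_Y(t)-2\,\mathbb{P}(X<t,Y<t)\big)\,dh(t)$, under which the claim reduces to the coincidence of the diagonal joint c.d.f.s $\mathbb{P}_{\text{max}}(X<t,Y<t)=\mathbb{P}_{\text{crn}}(X<t,Y<t)=\min(F_X(t),F_Y(t))$; this coincidence again holds precisely because single crossing arranges the accumulated left- and right-residual masses below each level $t$ so that their product, normalised by the total variation distance, equals their minimum for every $t$, confirming that single crossing is the crux of the statement.
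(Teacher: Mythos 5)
Your proof is correct and takes essentially the same route as the paper's: both reduce to a single coordinate, show the pair is almost surely ordered under each coupling (CRN via the shared-uniform quantile coupling plus stochastic dominance, and the maximal coupling via the single-crossing of the exponential factor $e^{-m\eta}$ forcing disjoint, correctly oriented residual supports even with independent residuals), and conclude that the expected absolute difference equals $\big|\mathbb{E}_{p(\cdot\mid m_{t,j})}[h]-\mathbb{E}_{p(\cdot\mid \tilde m_{t,j})}[h]\big|$ under both couplings. Your explicit monotone-likelihood-ratio framing and the layer-cake cross-check are valid embellishments but do not alter the underlying argument.
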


Proposition \ref{prop:crn_max_expectation} implies that such single-step 
expectations alone do not distinguish the behavior of CRN from maximal couplings. This compels us to investigate other distributional features of $| h(\eta_{t+1,j}) - h(\tilde{\eta}_{t+1,j}) |$ under either coupling to possibly tease apart differences in their behavior on high probability events. 
Focusing on the Horseshoe prior and making the choice $h(x) = \log(1+x)$, we uncover a distinction between the tail behavior of the two couplings which can substantially accumulate over $p$ independent coordinates. We offer an illustration in a stylized setting below. 
\begin{Prop} \label{prop:crn_max_high_prob} For the Horseshoe prior $(\nu=1)$,
consider when $m_t = \mu \mathbf{1}_p \in (0,\infty)^p$ and 
$\tilde{m}_t= \tilde{\mu} \mathbf{1}_p \in (0,\infty)^p$ for some positive scalars 
$\mu$ and $\tilde{\mu}$ with $\tilde{\mu} \neq \mu$. Then under a CRN coupling,
\begin{equation} \label{eq:crn}
\mathbb{P} \big( \max_{j=1}^p | \log(1+\eta_{t+1,j}) - \log(1+\tilde{\eta}_{t+1,j}) | > | \log \mu - \log \tilde{\mu} | 
\ \big| \ m_t, \tilde{m}_t \big) = 0
\end{equation}
for all $p \geq 1$. Define a positive constant  $L=\log(1+1/\max \{\mu , \tilde{\mu} \})/2$. Under maximal coupling, for 
any $\alpha \in (0,1)$ and any $\alpha' \in (\alpha,1)$, there
exists a constant $D>0$ which does not depend on $p$ such that for all $p>1$,
\begin{flalign}  \label{eq:max}
\mathbb{P} \big(\max_{j=1}^p | \log(1+\eta_{t+1,j}) - \log(1+\tilde{\eta}_{t+1,j}) | 
> \log ( \alpha L ) + \log \log p  \ \big| \ m_t, \tilde{m}_t \big) > 
1 - e^{-D p^{1-\alpha'}}.
\end{flalign}
\end{Prop}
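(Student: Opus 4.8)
The plan is to exploit that, with $m_t=\mu\mathbf 1_p$ and $\tilde m_t=\tilde\mu\mathbf 1_p$ constant across coordinates, the pairs $(\eta_{t+1,j},\tilde\eta_{t+1,j})$ are i.i.d.\ over $j$ under either coupling, so both displays reduce to a single-coordinate analysis of $W:=|\log(1+\eta_{t+1,1})-\log(1+\tilde\eta_{t+1,1})|$. Take $\mu\le\tilde\mu$ without loss of generality, and write $I(x)=\int_x^\infty(1+s)^{-1}e^{-ms}\,ds$ and $S_m(x)=I(x)/I(0)$ for the tail integral and survival function of the Horseshoe full conditional $p(\cdot\mid m)\propto(1+\eta)^{-1}e^{-m\eta}$. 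For \eqref{eq:crn} I would prove a deterministic per-coordinate bound; for \eqref{eq:max} I would lower-bound the upper tail of $W$ and feed it into the extreme-value estimate for a maximum of i.i.d.\ variables.

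For \eqref{eq:crn}, common random numbers couple the (perfectly sampled) conditionals comonotonically: within coordinate $j$ the two draws share a quantile $q_j$, so $S_\mu(\eta_{t+1,j})=S_{\tilde\mu}(\tilde\eta_{t+1,j})=q_j$. Fixing $q$ and setting $\phi(m)=\log(1+S_m^{-1}(q))$, I would differentiate $S_m(S_m^{-1}(q))=q$ implicitly; using $\partial_x S_m(x)=-(1+x)^{-1}e^{-mx}/I(0)$ together with the identity $\int_x^\infty\frac{s}{1+s}e^{-ms}\,ds=\frac{e^{-mx}}{m}-I(x)$, the expression collapses after cancellation to
\[
\phi'(m)=\frac1m\Big(\frac{e^{mx}I(x)}{I(0)}-1\Big),\qquad x=S_m^{-1}(q).
\]
The substitution $r=s-x$ gives $e^{mx}I(x)=\int_0^\infty\frac{e^{-mr}}{1+x+r}\,dr$, which is decreasing in $x$ and equals $I(0)$ at $x=0$; hence $e^{mx}I(x)/I(0)\in(0,1]$ and $\phi'(m)\in[-1/m,0]$. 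Integrating in $m$ from $\mu$ to $\tilde\mu$ yields $W=|\phi(\mu)-\phi(\tilde\mu)|\le\int_\mu^{\tilde\mu}m^{-1}\,dm=|\log\mu-\log\tilde\mu|$ for every coordinate and every realization, so the maximum over $j$ satisfies the same bound almost surely, giving \eqref{eq:crn}.

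For \eqref{eq:max} I would first bound $\mathbb P(W>t)$ from below at $t=\log(\alpha L)+\log\log p$, so that $e^t=\alpha L\log p$. Under the maximal coupling with independent residuals (Algorithm~\ref{algo:max_coupling}) one draws $X\sim p(\cdot\mid\mu)$ and, on rejection, draws the partner from the residual of $p(\cdot\mid\tilde\mu)$; since the acceptance ratio $\propto e^{-(\tilde\mu-\mu)X}\to0$, rejection has probability at least $1/2$ once $X$ exceeds a fixed threshold, while the $\tilde\mu$-residual places a fixed mass $c_0>0$ on a small interval $[0,y_0]$. On $\{X>(1+y_0)e^t,\ \text{reject},\ \tilde\eta_{t+1,1}\le y_0\}$ one checks $\log(1+X)-\log(1+\tilde\eta_{t+1,1})>t$, so with the elementary tail bound $S_\mu(x)\ge c'e^{-\mu x}/(1+x)$ I obtain
\[
\mathbb P(W>t)\ \ge\ \frac{c_0c'}{2}\,\frac{e^{-\mu(1+y_0)\alpha L\log p}}{1+(1+y_0)\alpha L\log p}\ =\ \frac{C\,p^{-\mu(1+y_0)\alpha L}}{\log p}.
\]
The constant $L=\tfrac12\log(1+1/\tilde\mu)$ is calibrated precisely so that $2\mu L=\mu\log(1+1/\tilde\mu)\le\mu/\tilde\mu\le1$; choosing $y_0$ small then gives $\mu(1+y_0)\alpha L\le(1+y_0)\alpha/2<\alpha'$, the slack coming from $\alpha<\alpha'$. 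Hence $\mathbb P(W>t)\ge p^{-\alpha'}$ for large $p$ (a polynomial dominates $\log p$), and by independence $\mathbb P(\max_{j}W_j\le t)=(1-\mathbb P(W>t))^p\le e^{-p\,\mathbb P(W>t)}\le e^{-Dp^{1-\alpha'}}$, which is \eqref{eq:max}.

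The hard part will be the single-coordinate tail lower bound for the maximal coupling: it forces one to track the coupling mechanics (a large $X$ both triggers rejection and pushes its partner into the small-valued residual), to control the incomplete integral $I(x)$ on both sides, and to bookkeep constants so the tail exponent stays strictly below $\alpha'$ — which is exactly where the specific form of $L$ is needed. By comparison the CRN direction is a clean one-line monotonicity estimate once the simplified form of $\phi'(m)$ is identified, and the passage from the per-coordinate tail to the maximum is the routine $(1-x)^p\le e^{-px}$ argument.
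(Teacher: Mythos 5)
Your proposal is correct. The CRN half is essentially the paper's own argument in different notation: the paper writes the quantile coupling through the exponential integral, $1+\eta = m^{-1}E_1^{-1}(U^*E_1(m))$, and shows $\tfrac{d \log(1+\eta)}{d \log m} = -1 + \tfrac{e^{m(1+\eta)}E_1(m(1+\eta))}{e^{m}E_1(m)} \in (-1,0)$ using that $x \mapsto e^x E_1(x)$ is decreasing; since your tail integral satisfies $I(x) = e^{m}E_1(m(1+x))$, your $\phi'(m) = m^{-1}\big(e^{mx}I(x)/I(0) - 1\big)$ is the identical quantity, and the mean-value/integration step is the same.

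The maximal-coupling half is where you genuinely diverge, and your route works. The paper exploits the exact structure of the independent-residuals coupling: the residuals have disjoint supports, $1+\eta^* \in [1,K)$ and $1+\tilde{\eta}^* \in [K,\infty)$ with $K = (\log E_1(\mu) - \log E_1(\tilde{\mu}))/(\tilde{\mu}-\mu)$, so the per-coordinate exceedance probability reduces to the closed form $E_1(\tilde{\mu}t)/E_1(\tilde{\mu}) - E_1(\mu t)/E_1(\mu)$, whose rate is extracted from $\lim_{x\to\infty} x e^x E_1(x) = 1$; converting the resulting threshold $\log(\alpha/(\tilde{\mu}K))$ into $\log(\alpha L)$ then requires the special-function inequality $e^x E_1(x) \geq \tfrac{1}{2}\log(1+1/x)$. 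You instead lower-bound the tail constructively through the coupling mechanics (large draw from the heavier-tailed component forces rejection with probability at least $1/2$, while the other residual puts fixed mass near zero), needing only an elementary lower bound on $E_1$ and $\log(1+x) \leq x$ to calibrate $\mu L \leq 1/2$, with the slack $\alpha < \alpha'$ absorbing your cruder constants and the $\log p$ factor. What you lose is sharpness of constants; what you gain is bypassing both the exact total-variation/residual computation and the $e^x E_1(x)$ inequality. One small bookkeeping point: your bound $\mathbb{P}(W>t) \geq p^{-\alpha'}$ is established only for $p$ beyond some $p_0$, whereas the statement requires all $p>1$; since the exceedance probability is strictly positive for each of the finitely many $2 \leq p \leq p_0$ (the $\mu$-residual has unbounded support), shrinking $D$ covers these cases — this one-line patch should be stated, and a similar uniformization is implicit in the paper's own claim that its constant $D_1$ works for all $t>0$.
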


Extensions of Proposition \ref{prop:crn_max_high_prob} which allow
different components of $m_t$ and $\tilde{m}_t$ to take different values
under some limiting assumptions are omitted here for simplicity. 
This stylized setting 
already captures why the one-scale algorithm may result in much larger
meeting times in high dimensions compared to the CRN-based 
two-scale algorithm.
We note that Proposition \ref{prop:crn_max_high_prob} remains informative 
whenever $| \log \mu - \log \tilde{\mu} | \ll \log \log p$, 
even when $\log \log p$ is not large. For example, consider
$\mu = \tilde{\mu}e^{\delta}$ for some small $\delta > 0$ with $\delta \ll \log \log p$. 
Under CRN, $\log (1+\eta_{t+1,j})$ and $\log (1+\tilde{\eta}_{t+1,j})$
remain $\delta$-close almost surely for all components $j$.
In contrast, under the maximal coupling, at least one component will have 
larger than $\log(\alpha L) + \log \log p$ deviations with
high probability. Since $\alpha$ and $L = \log(1+1/\mu)/2$ do not depend on $p$ or $\delta$, $\log(\alpha L) + \log \log p \gg \delta$ for fixed $\mu, \tilde{\mu} >0$. 
These larger deviations between some components of $\eta_{t+1}$
and $\tilde{\eta}_{t+1}$ under maximal couplings push the two chains
much further apart when sampling $(m_{t+1}, \tilde{m}_{t+1}) \big| \eta_{t+1}, \tilde{\eta}_{t+1}$, resulting in meeting probabilities that are much lower over multiple steps of the algorithm. 
Overall, Propositions \ref{prop:crn_max_expectation} and 
\ref{prop:crn_max_high_prob} show that multi-step kernel calculations 
taking into account such high-probability events may be necessary to further
distinguish the relative performances of one-scale and two-scale couplings.
A full analytic understanding of this difference in performance
remains an open problem.


Another way of understanding the benefits of two-scale coupling would 
be to find a distance under which the coupled chains under common random numbers exhibit a contraction. 
The CRN coupling part of Proposition \ref{prop:crn_max_high_prob} hints that 
$\bar d(m_t,\tilde m_t):=\sum_{j=1}^p | \log m_{t,j} - \log \tilde{m}_{t,j} |$ may be a natural  
metric. Appendix \ref{appendices:metric_comparison} contains
discussions and related simulations comparing this metric with 
alternatives. Proposition \ref{prop:tv_metric_upper_bound} 
verifies that such metric bounds the total variation distance between one-step transition kernels. 

\begin{Prop} \label{prop:tv_metric_upper_bound}
Let $\mathcal{P}$ denote the Markov transition kernel associated with the update from $(\beta_t,\xi_t, \sigma_t^2)$ to $(\beta_{t+1},\xi_{t+1}, \sigma_{t+1}^2)$ for the Gibbs sampler in \eqref{eq:blocked_gibbs_ergodicity_version}. 
Let $Z = (\beta, \xi, \sigma^2)$, $\tilde{Z} = (\tilde{\beta}, \tilde{\xi}, \tilde{\sigma}^2)$,
$m_{j} = \xi \beta_{j}^2/(2 \sigma^2)$, and $\tilde{m}_{j} = \tilde{\xi} \tilde{\beta}_{j}^2/(2 \tilde{\sigma}^2)$
for $j=1, \ldots, p$. Then,
\begin{equation} \label{eq:tv_metric_upper_bound}
\text{TV} \big( \mathcal{P}( Z, \cdot), \mathcal{P}( \tilde{Z}, \cdot) \big)^2 
\leq 
\frac{(1+\nu)(e-1)}{4} \bar{d}(Z, \tilde{Z}).
\end{equation}
\end{Prop}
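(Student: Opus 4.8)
The plan is to reduce the bound to a one-dimensional calculation on the full conditionals of $\eta$ and then tensorize. First I would exploit the conditional-independence structure of \eqref{eq:blocked_gibbs_ergodicity_version}: the update $\mathcal{P}$ proceeds as $Z \mapsto \eta_{t+1} \mapsto (\beta_{t+1},\xi_{t+1},\sigma_{t+1}^2)$, where the law of $\eta_{t+1}$ given $Z=(\beta,\xi,\sigma^2)$ depends on $Z$ only through the vector $m=(m_1,\dots,m_p)$ with $m_j=\xi\beta_j^2/(2\sigma^2)$, while the subsequent draw of $(\beta_{t+1},\xi_{t+1},\sigma_{t+1}^2)$ given $\eta_{t+1}$ is governed by a fixed Markov kernel $\Phi$ that does not otherwise depend on the current state (only on $\eta_{t+1}$ and the data $X,y$). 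Writing $Q_m=\prod_{j=1}^p q_{m_j}$ for the product law of $\eta_{t+1}$, where $q_m(\eta)\propto \eta^{(\nu-1)/2}(1+\nu\eta)^{-(\nu+1)/2}e^{-m\eta}$ is the univariate full conditional, we have $\mathcal{P}(Z,\cdot)=\Phi Q_m$ and $\mathcal{P}(\tilde Z,\cdot)=\Phi Q_{\tilde m}$. Since total variation cannot increase under a common channel (the data-processing inequality), $\text{TV}(\mathcal{P}(Z,\cdot),\mathcal{P}(\tilde Z,\cdot))\le \text{TV}(Q_m,Q_{\tilde m})$, so it suffices to bound the total variation between two product measures.

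Next I would pass to the squared Hellinger distance $H^2$, which is the natural divergence here: it is bounded, symmetric, and tensorizes additively, whereas the Kullback--Leibler divergence between the $q_m$ grows linearly in $\tilde m/m$ and is therefore incompatible with a bound in terms of the logarithmic metric $\bar d$. Using $\text{TV}^2\le 2H^2$ together with the tensorization $H^2(Q_m,Q_{\tilde m})=1-\prod_{j}\bigl(1-H^2(q_{m_j},q_{\tilde m_j})\bigr)\le \sum_{j}H^2(q_{m_j},q_{\tilde m_j})$, the claim reduces to the single-coordinate estimate
\begin{equation}
H^2(q_{m},q_{\tilde m})\le \tfrac{(1+\nu)(e-1)}{8}\,|\log m-\log\tilde m|,\qquad m,\tilde m>0 .
\end{equation}

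The crux, and the main obstacle, is this one-dimensional bound. Because $q_m$ is a one-parameter exponential family in the natural parameter $-m$ with base measure $\eta^{(\nu-1)/2}(1+\nu\eta)^{-(\nu+1)/2}\,d\eta$, the Hellinger affinity has the closed form $1-H^2(q_m,q_{\tilde m})=Z(\bar m)/\sqrt{Z(m)Z(\tilde m)}$, where $\bar m=(m+\tilde m)/2$ and $Z(m)=\int_0^\infty \eta^{(\nu-1)/2}(1+\nu\eta)^{-(\nu+1)/2}e^{-m\eta}\,d\eta$ is, up to a constant, the confluent hypergeometric function of the second kind evaluated at $\bigl((1+\nu)/2,\,1,\,m/\nu\bigr)$. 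Thus $G:=-\log(1-H^2)$ equals the midpoint Jensen gap $\tfrac12(\log Z(m)+\log Z(\tilde m))-\log Z(\bar m)$ of the convex function $\log Z$, whose second derivative is $\mathrm{Var}_{q_m}(\eta)$. I would then control $G$ through a uniform moment estimate of the form $m^2\,\mathrm{Var}_{q_m}(\eta)\le c_\nu$ with $c_\nu$ proportional to $1+\nu$ (supplied by the auxiliary confluent-hypergeometric moment results already invoked for Proposition \ref{prop:drift}): writing $G=\tfrac12\int_m^{\tilde m}\min(u-m,\tilde m-u)\,\mathrm{Var}_{q_u}(\eta)\,du$ and applying this estimate turns the integral into $\tfrac{c_\nu}{2}\log\!\bigl(\bar m^2/(m\tilde m)\bigr)\le \tfrac{c_\nu}{2}\,|\log m-\log\tilde m|$; the factor $e-1$ then enters through the elementary inequalities $1-e^{-G}\le e^{G}-1\le(e-1)G$ on $G\in[0,1]$ (the regime $G>1$ being handled trivially by $H^2\le 1$). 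The delicate part is obtaining the stated explicit constant uniformly in $m,\tilde m$ and $\nu\ge 1$: the normalizing-constant ratio must be controlled even in the regime where the Gamma-type tail of $q_m$ (shape $(1+\nu)/2$) dominates, and tracking the interplay of these constants—rather than any single inequality—is where the real work lies. I expect the resulting constant to be loose relative to the sharp one, but clean and dimension-free.
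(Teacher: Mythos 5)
Your reduction is sound and in fact coincides with the paper's: the data-processing step collapsing $\mathrm{TV}(\mathcal{P}(Z,\cdot),\mathcal{P}(\tilde Z,\cdot))$ to the $\eta$ full conditionals is exactly how the paper concludes, and your Hellinger tensorization is the paper's R\'enyi-$\tfrac12$ argument in disguise, since $D_{1/2}=-2\log(1-H^2)=2G$ and the paper's generalized Pinsker bound $\mathrm{TV}^2\le D_{1/2}$ is your $\mathrm{TV}^2\le 2H^2$ weakened by $2(1-e^{-G})\le 2G$. The genuine gap is in your one-dimensional crux. Your chain $H^2\le(e-1)G$ and $G\le\tfrac{c_\nu}{2}\,|\log m-\log\tilde m|$ forces the uniform variance bound $m^2\,\mathrm{Var}_{q_m}(\eta)\le c_\nu$ with $c_\nu\le\tfrac{1+\nu}{4}$ (or $c_\nu\le\tfrac{(1+\nu)(e-1)}{4}$ if you use the tighter $1-e^{-G}\le G$), and that bound is \emph{false}: as $m\to\infty$ the rescaled variable $m\eta$ under $q_m(\eta)\propto\eta^{(\nu-1)/2}(1+\nu\eta)^{-(\nu+1)/2}e^{-m\eta}$ converges to a $\mathrm{Gamma}\big(\tfrac{1+\nu}{2},1\big)$ law, so $m^2\,\mathrm{Var}_{q_m}(\eta)\to\tfrac{1+\nu}{2}$, strictly above both thresholds. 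With the best true constant $c_\nu=\tfrac{1+\nu}{2}$ your chain only yields $\mathrm{TV}^2\le\tfrac{1+\nu}{2}\,\bar d$, weaker than the stated $\tfrac{(1+\nu)(e-1)}{4}\,\bar d$. Moreover, the provenance you invoke is wrong: Lemmas \ref{lemma:eta_moments} and \ref{lemma:confluent_ratio_bound} give only \emph{upper} bounds on raw moments, $\mathbb{E}[\eta\,|\,m]\le\tfrac{1+\nu}{2}m^{-1}$ and $\mathbb{E}[\eta^2\,|\,m]\le\tfrac{(1+\nu)(3+\nu)}{4}m^{-2}$; a variance bound of the needed order additionally requires a matching \emph{lower} bound on the mean, which the paper nowhere supplies, so $m^2\,\mathrm{Var}_{q_m}(\eta)\le\tfrac{1+\nu}{2}$ would itself be a new result to prove.

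The paper avoids all of this by staying at the first-derivative level: with $L(m)=\log Z(m)$ convex and $L'(m)=-\mathbb{E}[\eta\,|\,m]$, the midpoint gap obeys $D_{1/2}=L(m)+L(\tilde m)-2L(\bar m)\le\big(L'(\tilde m)-L'(m)\big)\tfrac{\tilde m-m}{2}\le\mathbb{E}[\eta\,|\,m]\,\tfrac{\tilde m-m}{2}\le\tfrac{1+\nu}{4}\big(e^{\delta}-1\big)$ for $\tilde m\ge m$ and $\delta=\log\tilde m-\log m$, using only the mean bound that Lemma \ref{lemma:confluent_ratio_bound} actually provides; then $e^{\delta}-1\le(e-1)\delta$ (for $\delta\le1$) produces exactly the stated constant, and the multivariate claim follows by additivity of $D_{1/2}$ over products, just as in your tensorization. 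If you wish to salvage your second-derivative route, note that your (correct) integral representation of the Jensen gap actually gives the quadratic estimate $G\le c_\nu\log\cosh(\delta/2)\le\tfrac{c_\nu}{8}\delta^2$, which with $c_\nu=\tfrac{1+\nu}{2}$ proves the proposition after splitting into the regime $\bar d\le 2(e-1)$ (quadratic beats linear there) and the regime $\bar d\ge\tfrac{4}{(1+\nu)(e-1)}$ (where $\mathrm{TV}^2\le1$ makes the claim trivial, the two regimes overlapping for $\nu\ge1$) --- incidentally covering the $\delta>1$ case the paper leaves implicit --- but this still hinges on first proving the uniform variance bound, so the mean-based argument is both shorter and fully supported by the paper's lemmas.
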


By Proposition \ref{prop:tv_metric_upper_bound}, it suffices to show that
$\big(\bar{d}(Z_t, \tilde{Z}_t)\big)_{t \geq 0}$ contracts under CRN  to a
sufficiently small value such that the upper bound in
\eqref{eq:tv_metric_upper_bound} is strictly less that $1$. This would ensure
that CRN coupling will bring the marginal chains close enough that the probability of one-step max coupling resulting in a meeting is high. Note that \eqref{eq:tv_metric_upper_bound} corresponds to the total
variation distance of the full chain, so under this setup we would attempt to
maximally couple the full vectors $\eta_{t+1}$ and $\tilde{\eta}_{t+1}$ (rather
than maximally coupling component-wise) when close. Analytical calculations to
establish such a contraction, and to understand how the contraction rate scales with
dimension and other features of the problem such as the choice of
prior, sparsity, and signal to noise ratio 
requires further work. 

\subsection{Alternative coupling strategies} \label{subsec:coupling_variants}

The rapid meeting of the two-scale coupling in simulations recommends its use. However, we were unable to establish finite expected meeting times with this coupling, which prevents its use in constructing unbiased estimates. We now describe several small modifications of the two-scale coupling that have similar empirical performance but for which we can prove that the meeting time distribution has exponential tails. 


Our basic approach is to 
maximally couple each component $(\eta_{t,j}, \tilde{\eta}_{t,j})$ until a failed meeting 
$\{ \eta_{t,j} \neq \tilde{\eta}_{t,j} \}$ occurs, after which we 
switch to CRN for the remaining components. 
The ordering of the components $j=1,\ldots,p$ can 
be deterministic or randomized at each iteration. Under this construction, 
only one component of $(\eta_{t}, \tilde{\eta}_{t})$, corresponding to 
the failed meeting, will evolve independently; the other components will either meet 
or evolve under CRN. Therefore, we expect to obtain benefits similar to the two-scale 
coupling in high dimensions. An apparent advantage is that we can bypass the choice
of metric and associated threshold. While the switch-to-CRN coupling has the same $\mathcal{O}(n^2p)$ 
cost per iteration as the two-scale coupling, it can give faster run-times in practice when the 
dimension $p$ is large compared to the number of observations $n$ as it avoids repeated calculation of a metric. 
%

We report the performance of the switch-to-CRN coupling
with the ordering of the components
randomized at each iteration. 
The setup is identical to that in Section
\ref{subsec:two_scale_coupling}, where for each dimension $p$ 
we simulate $100$ meetings times based 
on independently generated synthetic datasets.
Figure \ref{fig:switch_to_crn_coupling_plot} shows that the switch-to-CRN coupling 
leads to similar meeting times as the two-scale coupling. 

\begin{figure}[!ht]
\captionsetup[subfigure]{font=normalsize,labelfont=normalsize}
    \centering
    \includegraphics[width=0.53\textwidth]{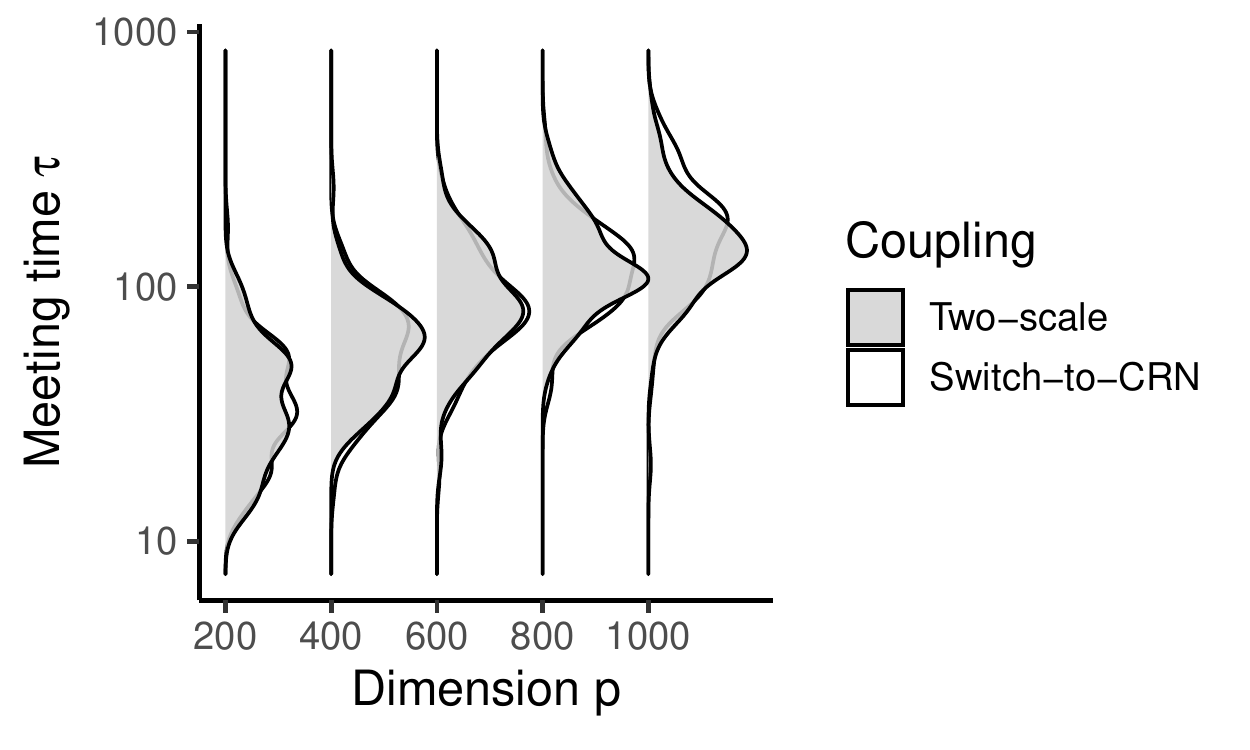}
  \caption{Meeting times for posteriors from the Horseshoe prior under the two-scale and the switch-to-CRN coupling algorithms.
  $n=100,s=10, \sigma_*=0.5$.
  }
    \label{fig:switch_to_crn_coupling_plot}
\end{figure}

We can establish that the meeting times under 
this \textit{switch-to-CRN} coupling have exponential tails and hence 
finite expectation. 

\begin{Prop} \label{prop:switch_crn_meeting}
Consider the blocked Gibbs sampler in \eqref{eq:blocked_gibbs_ergodicity_version}. 
We follow the setup in Proposition \ref{prop:drift}, and assume that the global shrinkage 
prior $\pi_\xi(\cdot)$ has a compact support. Denote $Z_t = (\beta_t, \xi_t, \sigma^2_t)$ and 
$\tilde{Z}_t = ( \tilde{\beta}_t, \tilde{\xi}_t, \tilde{\sigma}^2_t)$.
Consider the switch-to-CRN coupling given by 
$(Z_t, \tilde{Z}_t) \mapsto (\eta_{t+1}, \tilde{\eta}_{t+1}) \mapsto (Z_{t+1}, \tilde{Z}_{t+1})$
where $(\eta_{t+1}, \tilde{\eta}_{t+1})$ are maximally coupled component-wise
(under any fixed or random ordering of the components) until the 
first failed meeting, after which common random numbers are employed, and 
$(Z_{t+1}, \tilde{Z}_{t+1})$ are coupled using common random numbers. Denote
the meeting time by 
$\tau := \inf \{ t \geq 0: Z_t = \tilde{Z}_t \}$. Then
$\mathbb{P}( \tau > t ) \leq A_1 \kappa_1^t$
for some constants $A_1 \in (0, \infty)$ and $\kappa_1 \in (0,1)$, and for all $t \geq 0$.
\end{Prop}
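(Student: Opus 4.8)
The plan is to reduce the claim to Proposition~\ref{prop:one_scale_meeting} by showing that the switch-to-CRN coupling and the one-scale coupling induce the \emph{same} one-step probability of a full meeting from any fixed pair of states, and then to reuse the drift-and-minorization machinery already assembled for the one-scale coupling together with \citet[Proposition 4]{jacob2020unbiasedJRSSB}.

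First I would note that the marginal geometric drift of Proposition~\ref{prop:drift} is a statement about the single-chain kernel in \eqref{eq:blocked_gibbs_ergodicity_version}, and therefore holds for each of the marginal chains $(Z_t)_{t\ge0}$ and $(\tilde Z_t)_{t\ge0}$ under \emph{any} valid coupling: both the component-wise maximal coupling and the CRN coupling of $(\eta_{t+1},\tilde\eta_{t+1})$, as well as the downstream couplings of $(\xi,\sigma^2,\beta)$, preserve the correct marginal transitions. Hence $V(\beta,\xi,\sigma^2)=\sum_{j=1}^p m_j^{-c}+m_j^{d}$ still satisfies $\mathbb{E}[V(Z_{t+1})\mid Z_t]\le \gamma V(Z_t)+K$, and the associated sub-level sets exclude the pole at the origin and are bounded.

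The crux is the second step. A full meeting $\{Z_{t+1}=\tilde Z_{t+1}\}$ requires $\{\eta_{t+1}=\tilde\eta_{t+1}\}$, and under the switch-to-CRN construction the event that \emph{all} components of $\eta_{t+1}$ meet is exactly the event that no failed meeting occurs before the switch, i.e.\ that every component meets under its own maximal coupling. Because the component-wise maximal couplings use independent auxiliary randomness up to the first failure, and because CRN produces no exact component meetings when $m_{t,j}\ne\tilde m_{t,j}$, the probability of $\{\eta_{t+1}=\tilde\eta_{t+1}\}$ equals $\prod_{j=1}^p \mathbb{P}(\eta_{t+1,j}=\tilde\eta_{t+1,j})$, the identical product appearing under the one-scale coupling. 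Conditionally on $\{\eta_{t+1}=\tilde\eta_{t+1}\}$, the two schemes couple $(\xi_{t+1},\sigma^2_{t+1},\beta_{t+1})$ in exactly the same way (Steps (2)(a--c) of Algorithm~\ref{algo:Half_t_meetings_every_step}), so $\mathbb{P}(Z_{t+1}=\tilde Z_{t+1}\mid Z_t=Z,\tilde Z_t=\tilde Z)$ is identical for the two couplings. The minorization lower bound on small sets used in the proof of Proposition~\ref{prop:one_scale_meeting} therefore transfers verbatim. I would also check faithfulness: when $Z_t=\tilde Z_t$ all $m_{t,j}=\tilde m_{t,j}$, so every maximal coupling meets almost surely, giving $Z_t=\tilde Z_t\Rightarrow Z_{t+1}=\tilde Z_{t+1}$.

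With the marginal geometric drift, the one-step meeting lower bound on a sub-level set of $V$, and faithfulness in hand, I would invoke \citet[Proposition 4]{jacob2020unbiasedJRSSB} to conclude $\mathbb{P}(\tau>t)\le A_1\kappa_1^t$ for suitable $A_1\in(0,\infty)$ and $\kappa_1\in(0,1)$. The main obstacle is the identity of one-step full-meeting probabilities: precisely justifying that the sequential ``maximally-couple-until-first-failure'' rule yields exactly the product meeting probability of the one-scale scheme, which hinges on the independence of the per-component auxiliary draws and on CRN contributing no additional exact component meetings. Once this is secured, the remainder is a direct reuse of the already-established drift and minorization, since \citet[Proposition 4]{jacob2020unbiasedJRSSB} depends on the coupled kernel only through the marginal drift and the small-set meeting probability.
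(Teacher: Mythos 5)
Your proposal is correct and takes essentially the same route as the paper's proof. The paper likewise reduces everything to the one-scale argument: it factorizes $\mathbb{P}(\eta_{t+1}=\tilde{\eta}_{t+1}\mid Z_t,\tilde{Z}_t)$ by the chain rule, observes that conditionally on all earlier components having met the current component is still maximally coupled so Lemma \ref{lemma:eta_coupling_bound} applies term by term (your independence-of-auxiliary-draws observation in conditional-probability form), recovers the identical minorization constant $\epsilon=\big(U(\tfrac{1+\nu}{2},1,R^{1/d}/\nu)/U(\tfrac{1+\nu}{2},1,R^{-1/c}/\nu)\big)^p$ on sub-level sets of the drift function from Proposition \ref{prop:drift}, and concludes via \citet[Proposition 4]{jacob2020unbiasedJRSSB}.
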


As in Proposition \ref{prop:one_scale_meeting}, our proof of Proposition
\ref{prop:switch_crn_meeting} implies a rate $\kappa_1$ which tends to $1$
exponentially as dimension $p$ increases. To obtain more favorable rates with
respect to the dimension as hinted by Figure
\ref{fig:switch_to_crn_coupling_plot} would require a better
understanding of the CRN coupling, for example establishing a 
contraction in some parts of the state space as discussed above.


Overall Figures \ref{fig:two_scale_coupling_plot} and \ref{fig:switch_to_crn_coupling_plot}
show significant empirical improvements in 
using the two-scale coupling and related strategies, compared to the one-scale
coupling of Section \ref{subsec:max_coupling}. Therefore, we recommend the use 
of two-scales couplings and variants such as the switch-to-CRN coupling 
for practical high-dimensional applications. Furthermore, in settings where estimating 
the metric $d$ is computationally expensive or analytically intractable, 
variants such as the switch-to-CRN coupling may offer faster numerical run-times per iteration.
Henceforth, we will use the two-scale coupling of Section
\ref{subsec:two_scale_coupling} in our experiments 
and GWAS dataset applications, unless specified otherwise.

%
\subsection{Computational and statistical impact of degree of freedom $\nu$} \label{subsec:degree_of_freedom}

We empirically investigate the impact of the degree of freedom $\nu \geq 1$ for
Half-$t(\nu)$ priors.  Higher degrees of freedom $\nu$ corresponds to priors on
$\beta$ which have stronger shrinkage towards zero (Proposition
\ref{prop:marginal_beta_posterior}).  We consider meeting times resulting from
the two-scale coupling of Section \ref{subsec:two_scale_coupling}, as well as
the statistical performance of the posteriors.  On the computation side, recent
work on convergence analysis \citep{qin2019convergenceAOS, qin2019wasserstein}
has highlighted the impact of shrinkage on convergence of simpler MCMC
algorithms. Under a common random numbers coupling, \citet{qin2019wasserstein}
have studied the convergence of the Gibbs sampler
of 
\citet{albert1993bayesianJASA} for Bayesian probit regression, proving dimension-free convergence rates for priors with
sufficiently strong shrinkage towards zero. On the statistical estimation side,
Half-$t(\nu)$ priors have been long proposed \citep{gelman2006priorBA,
carvalho2009handlingPMLR, carvalho2010theBIOMETRIKA} for Bayesian hierarchical
models. \citet{vanderpas2014theEJS, vanderpas2017adaptiveEJS} have established
(near) minimax optimal estimation for the Horseshoe ($\nu=1$) in the Normal
means model. \citet{ghosh2017asymptoticBA} have extended the results of
\citet{vanderpas2014theEJS} to show minimax optimality of a wider class of
priors including Half-$t(\nu)$ priors. Recently, \citet{song2020bayesianEJS}
has established that the degree of freedom $\nu$ can impact the multiplicative
constant in the posterior contraction rate for the Normal means model.
Posterior contraction rates of continuous shrinkage priors in the 
regression setting beyond the Normal means model would deserve further
investigation.


\begin{figure}[!t]
\captionsetup[subfigure]{font=normalsize,labelfont=normalsize}
    \centering
    \begin{subfigure}[b]{0.43\textwidth}
        \includegraphics[width=\textwidth]{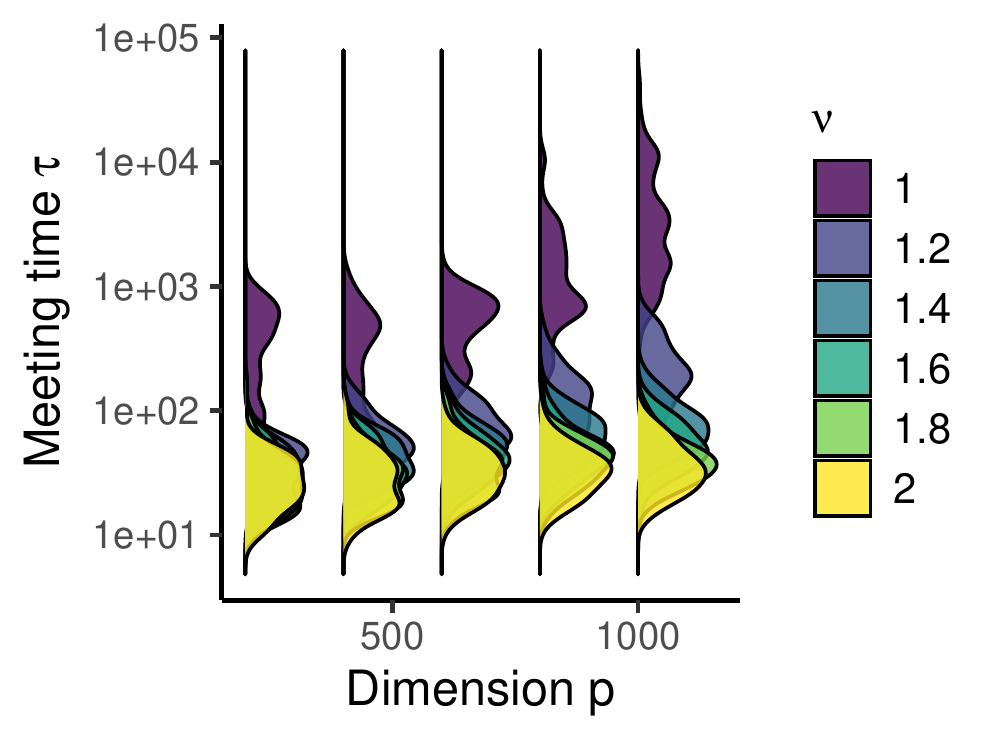}
        \caption{Meeting times}
        \label{fig:half_t_degree_of_freedom_vary_p}
    \end{subfigure}
    \begin{subfigure}[b]{0.43\textwidth}
        \includegraphics[width=\textwidth]{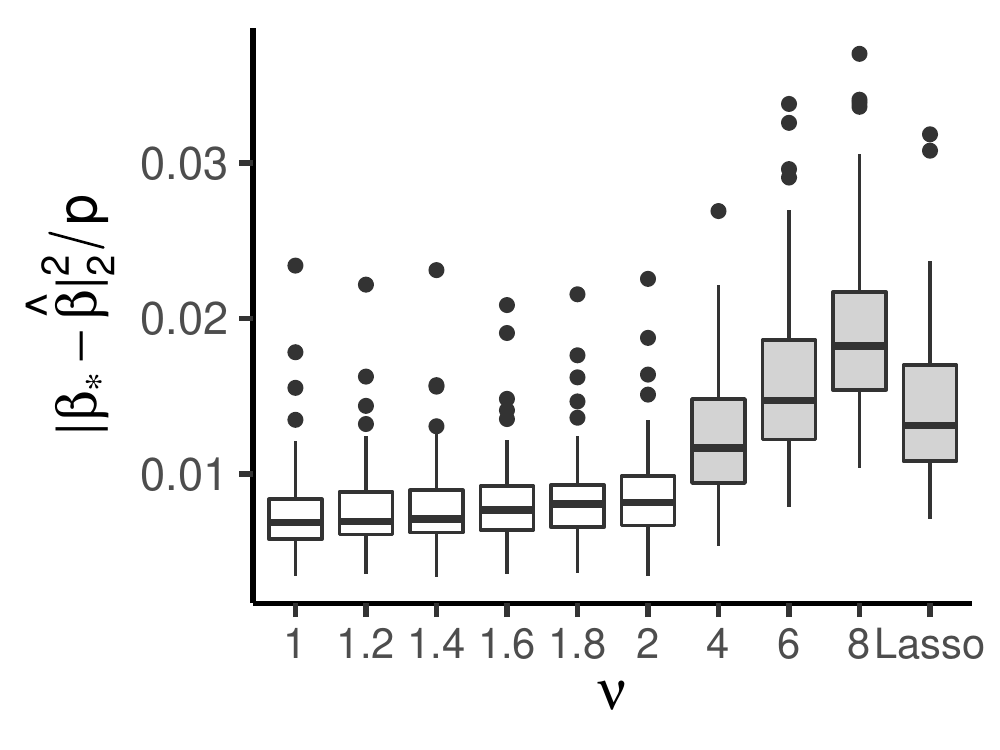}
        \caption{Mean Squared Error}
        \label{fig:beta_mse}
    \end{subfigure}
    \caption{Meeting times under the two-scale coupling algorithm and 
    mean squared error with Half-$t(\nu)$ priors and LASSO estimates. 
    In Figure \ref{fig:half_t_degree_of_freedom_vary_p} $n=100,s=20$ and $\sigma_*=2$; 
    in Figure \ref{fig:beta_mse} $n=100,p=500,s=20, \sigma_*=2$.}
    \label{fig:half_t_coupling_plot}
\end{figure}


We consider the meeting times obtained with Half-$t(\nu)$ priors under the
two-scale coupling (with $R=1$ and threshold $d_0=0.5$) on synthetic datasets.
The synthetic datasets are generated as per Figure \ref{fig:max_coupling_plot}
of Section \ref{subsec:max_coupling}. 
Figures \ref{fig:half_t_degree_of_freedom_vary_p} is then based on $20$
synthetic datasets generated independently for different degrees of freedom
$\nu \in [1,2]$.  Figure \ref{fig:half_t_degree_of_freedom_vary_p} plots
meeting times $\tau$ of $1$-lag coupled Markov chains against dimension.  It
indicates that even a small increase in the degree of freedom $\nu > 1$
compared to the Horseshoe ($\nu=1$) can lead to orders of magnitude
computational improvements through shorter meeting times. 

We also consider the statistical performance of Bayesian shrinkage regression with
Half-$t(\nu)$ priors on synthetic datasets in Figure \ref{fig:beta_mse}. 
The synthetic datasets are generated as per Figure \ref{fig:max_coupling_plot} of Section
\ref{subsec:max_coupling}. We use the blocked Gibbs sampler of Algorithm
\ref{algo:Half_t_Exact} to draw samples from the posteriors corresponding to
different degrees of freedom $\nu$, by simulating chains of length $1000$, with
a burn-in period chosen using the coupling-based total variation distance upper bounds 
of \citet{biswas2019estimating}.
Specifically we employ a burn-in of $300$ steps for $\nu \geq 1.2$ and of $600$ steps for $\nu = 1$. 
Figure \ref{fig:beta_mse} shows similar mean squared error (MSE) for different 
values of $\nu \geq 1$. For a fixed synthetic dataset, the MSE in Figure 
\ref{fig:beta_mse} is calculated as $\| \beta_* - \hat{\beta} \|_{2}^2/p$ 
where 
$\hat{\beta} := \sum_{t=1}^{1000} \beta_t / 1000$,
is the MCMC estimator obtained from the Markov chain after burn-in. 
The error bars in Figure \ref{fig:beta_mse} are generated by simulating synthetic datasets 
independently $100$ times and each time calculating the corresponding MSE from a Markov chain 
for different values of $\nu \in [1,2]$. 
The MSE of the LASSO is also included, 
with regularization parameter chosen with cross-validation using the 
$\texttt{glmnet}$ package \citep{friedman2010regularizationJOSS}.  For $\nu \in [1,2]$, 
we observe a lower MSE using Half-$t(\nu)$ priors compared to the LASSO. 
The grey plots in Figures \ref{fig:beta_mse} show that much larger 
$\nu \in \{4,6,8\}$ can lead to higher MSE, as the corresponding posteriors 
are strongly concentrated about zero and less able to identify non-null signals. 
Overall, Figure \ref{fig:beta_mse} suggests
that Half-$t(\nu)$ priors with degrees of freedom $\nu \in [1,2]$ 
can result in comparable statistical performance 
while much larger values of $\nu$ should be discouraged.

\section{Results on GWAS datasets} \label{section:gwas}
Section \ref{subsec:degree_of_freedom} suggests 
that Half-$t(\nu)$ priors with higher degrees of freedom $\nu > 1$
can give similar statistical performance to the Horseshoe ($\nu = 1$),
whilst allowing orders of magnitude computational improvements. 

Motivated by this observation, we apply our algorithms to genome-wide association study 
(GWAS) datasets using $\nu = 2$. We consider a bacteria dataset \citep{buhlmann2014highARSA} 
with $n=71$ observations (corresponding to production of the vitamin riboflavin), 
and $p=4,088$ covariates (corresponding to single nucleotide polymorphisms (SNPs) in the genome), 
and a maize dataset \citep{romay2013genotypingGENOMEBIO, liu2016iterativePLOSGENETICS, 
zeng2017nonparametricNATURECOMM, johndrow2020scalableJMLR}
with $n=2,266$ observations (corresponding to average number of days taken for silk emergence
in different maize lines), and $p=98,385$ covariates (corresponding to
SNPs). Bayesian methods are well-suited to such GWAS datasets, as they 
provide interpretable notions of uncertainty through marginal posterior 
probabilities which allow for multimodality, 
enable the use of prior information, and account for the uncertainty 
in parameter estimates \citep[e.g.][]{guan2011bayesianAOAS, zhou2013polygenicPLOS}.
Furthermore, heavy-tailed continuous shrinkage priors can be particularly effective in the 
GWAS setting, as the associations between SNPs and the phenotype are expected to be sparse,
and such priors have shown competitive empirical performance in polygenic prediction 
\citep{ge2019polygenicNATCOMM}. 

For both datasets, we target the Half-$t(2)$ posterior with $a_0=b_0=1$. 
We use the two-scale coupling with $R=1$ and threshold $d_0= 0.5$, 
and initialize chains independently from the prior. 
For the $\xi$ updates, we use a Metropolis--Rosenbluth--Teller--Hastings 
step-size of $\sigma_{\text{MRTH}}=0.8$. 
Based on 100 independent coupled chains,
Figure \ref{fig:gwas_llag_tv} shows 
upper bounds on the total variation distance to stationarity, $\text{TV}(\pi_t, \pi)$
as a function of $t$,
using $L$-lag couplings 
with $L=750$ and $L=200$ for the maize and riboflavin datasets respectively.
The results indicate that these Markov chains converge to the corresponding 
posterior distributions in less than $1000$ and $500$ iterations respectively.
We note that the lags $L=750$ and $L=200$ were selected based on 
some preliminary runs of the coupled chains to obtain informative total variation bounds, 
which incurs an additional preliminary cost. Any choice of lag results in valid upper bounds,
but only large enough lags lead to upper bounds that are in the interval $[0,1]$ even for small iterations $t$. 

In these GWAS examples, the run-time per iteration may be significant.
For the maize dataset, on a 2015 high-end laptop,
each iteration of the coupled chain takes approximately $60$ seconds, and running 
one coupled chain until meeting can take more than one day. This run-time is dominated 
by the calculation of the weighted matrix cross-product $X \text{Diag}(\eta)^{-1} X^T$, 
with an $\mathcal{O}(n^2p)$ cost. In this setting, 
a scientist wanting to run chains for $10^4$ or $10^5$ iterations may have
to wait for weeks or months. 
Using the proposed couplings, a scientist with access to parallel computers
can be confident that samples obtained after $10^3$ iterations
would be indistinguishable from perfect samples from the target;
for a similar cost they can also obtain unbiased estimators 
as described in \citet{jacob2020unbiasedJRSSB}. 
For the riboflavin dataset, each iteration of the coupled chain takes
approximately $0.1$ seconds, and running one coupled chain until meeting takes
only $10$ to $20$ seconds.  The run-time there is dominated by the component-wise
$\eta$ updates which have $\mathcal{O}(p)$ cost.  In the riboflavin example, 
the coupling-based diagnostics enable the scientist to perform reliable Bayesian computation
directly from personal computers. Overall, the GWAS examples emphasize that 
couplings of MCMC algorithms can aid practitioners in high-dimensional
settings.

\begin{figure}[!t]
\captionsetup[subfigure]{font=normalsize,labelfont=normalsize}
    \centering
        \begin{subfigure}[b]{0.43\textwidth}
        \includegraphics[width=\textwidth]{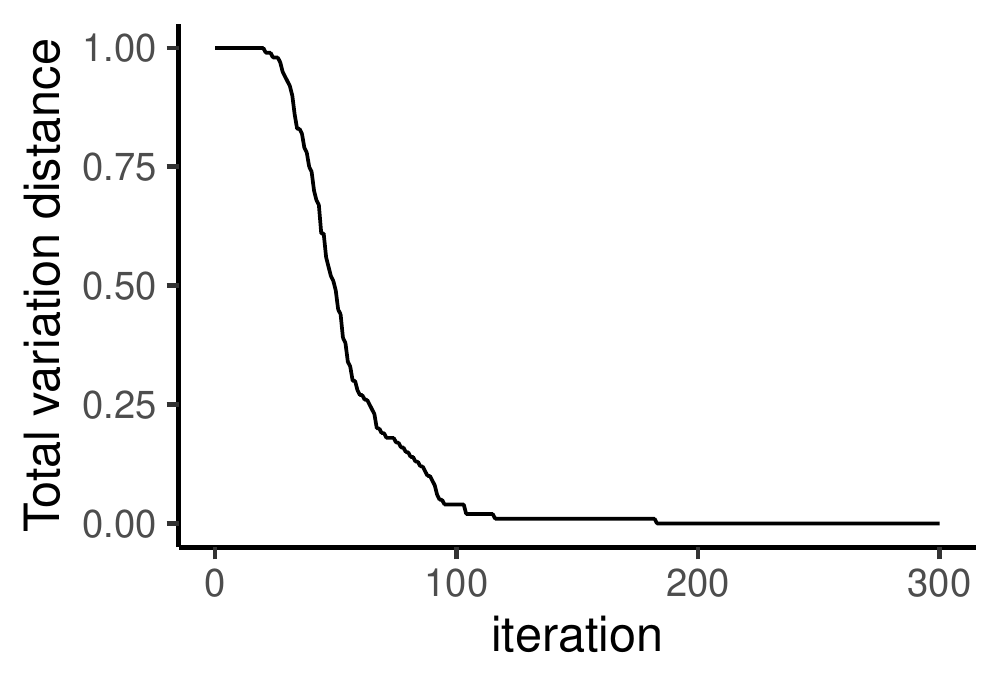}
        \caption{Riboflavin dataset with $n=71$ and $p=4,088$ \ }
        \label{fig:gwas_llag_tv_riboflavin}
    \end{subfigure}
            \begin{subfigure}[b]{0.43\textwidth}
        \includegraphics[width=\textwidth]{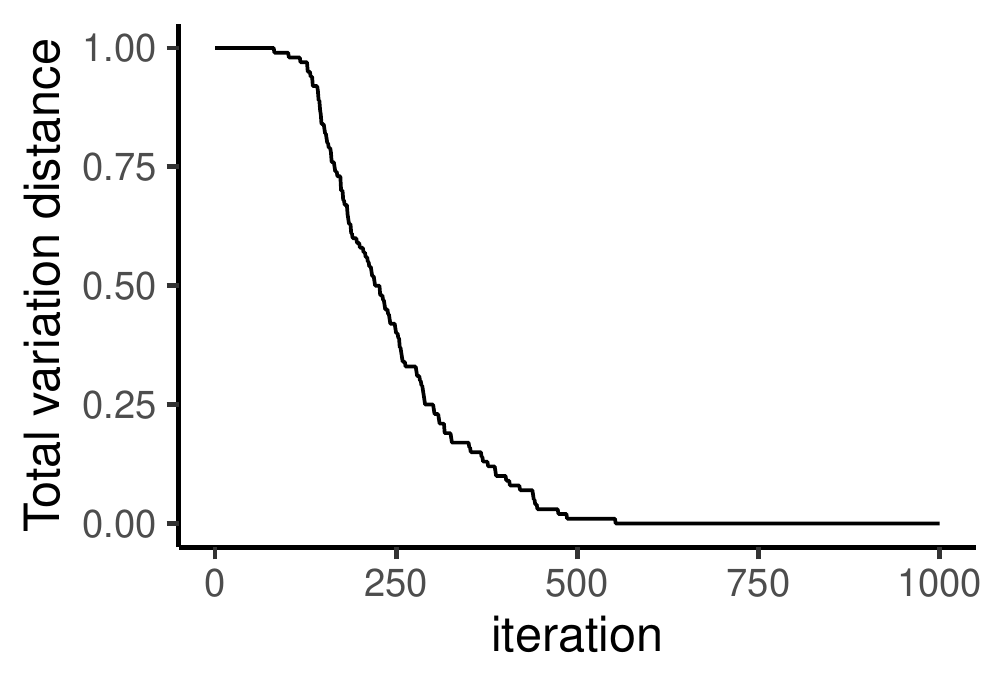}
        \caption{Maize dataset with $n=2,266$ and $p=98,385$ \ }
        \label{fig:gwas_llag_tv_maize}
    \end{subfigure}
    \caption{$L$-Lag coupling-based upper bounds on $\text{TV}(\pi_t, \pi)$,
    the distance between the chain at time $t$ and its limiting distribution,
      in Bayesian shrinkage regression with Half-$t(2)$ prior on GWAS datasets.
    Maize dataset: $n=2,266$ and $p=98,385$; Riboflavin dataset: $n=71$ and $p=4,088$.  }
    \label{fig:gwas_llag_tv}
\end{figure}


\section{Discussion} \label{section:discussion}

We have introduced couplings of Gibbs samplers
for Bayesian shrinkage regression with Half-$t(\nu)$ priors. 
The proposed two-scale 
coupling is operational in realistic 
settings, including a 
GWAS setting with $n \approx 2,000, p \approx 100,000$.


Firstly, our work participates in a wider effort to apply MCMC algorithms
and coupling techniques to ever more challenging settings 
\citep{middleton2020unbiasedEJS,Ruiz2020UnbiasedGE,Trippe2021OptimalTC,
lee2020coupled, xu2021couplings}. In particular, the short meeting times 
obtained here vindicate the use of coupling techniques in 
multimodal, high-dimensional sampling problems. 
This relates to questions raised in some comments of \citet{lee2020comment} 
and \citet{paulin2020comment} in the discussion of
\citet{jacob2020unbiasedJRSSB}, and shows that the dimension
of the state space may not necessarily be the most important driver of 
the performance of MCMC algorithms and couplings thereof. 

Secondly, we have applied $L$-lag couplings 
\citep{biswas2019estimating} to obtain upper bounds on the total variation
distance between the Markov chain at a finite time and its stationary distribution.
This allows empirical investigations of the mixing time
and how it varies 
with the inputs of the problem, including number of observations, 
number of covariates, signal to noise ratio and sparsity. 
We find that coupling techniques constitute
a convenient  non-asymptotic tool to monitor the performance of MCMC algorithms.

Thirdly, we observe that Half-$t(\nu)$ priors with degrees of freedom $\nu$
higher than one give similar statistical estimation performance than the Horseshoe
whilst providing significant computational advantages. 
This contributes towards the discussion on the impact of the prior 
on the trade-off between statistical estimation and computational
feasibility, in the setting of high-dimensional Bayesian regression.

The following questions arise from our work.
\begin{enumerate}
\item \textit{Convergence of the blocked Gibbs sampler.} 
Short meeting times suggest that the blocked Gibbs sampler converges quickly,
even in high dimensions. This motivates a more formal 
study of the convergence 
of that Markov chain, to understand how the convergence rate
varies with features of
the data generating process and of the prior. The Lyapunov function 
in Proposition \ref{prop:drift} and the two-scale coupling may prove useful for such analysis. 
Our initial work in this area suggests that finding a convenient metric that gives 
sharp bounds on the metric $d$ used here, while simultaneously being amenable to
theoretical analysis, will be a key step. 

\item \textit{Alternative coupling algorithms.} 
  Section \ref{subsec:coupling_variants} indicates that many coupling strategies 
  are available in the present setting, and more generally for various
  Gibbs samplers. 
  For example, we could combine the switch-to-CRN coupling with the
  two-scale coupling to form other coupling strategies. Under the
  two-scale coupling setup, we could apply the switch-to-CRN coupling only when
  the chains are close with respect to a chosen metric, and
  employ CRN couplings when far away. Alternatively, we could always apply the
  switch-to-CRN coupling, and estimate component-wise meeting probabilities to
  select the order of the updates, for example such that components more
  likely to meet are updated first. Some couplings may result in shorter meeting times 
for the Horseshoe prior. This may allow the Horseshoe prior to 
remain competitive with Half-$t(\nu)$ priors
with higher degrees of freedom. In our experiments
we did not find ways to obtain shorter meetings 
for the Horseshoe, but we certainly cannot rule out that 
possibility. If one could obtain short meetings for the Horseshoe
in high dimensions, the apparent trade-off between statistical performance and computing cost
identified in the present work would disappear.
    
\item \textit{Interplay between posterior concentration and MCMC convergence.}
For Bayesian regression with spike-and-slab priors, \citet{yang2016onAOS} and 
\citet{atchade2019approximate} have shown that 
posterior contraction can aid the convergence of MCMC in 
high dimensions. The performance of the proposed couplings
motivates similar investigations for continuous shrinkage priors. 
\end{enumerate}

\paragraph{Acknowledgement.} 
The authors thank Xiaolei Liu and Xiang Zhou for sharing the Maize GWAS data, 
and Yves F. Atchad\'{e}, Qian Qin and Neil Shephard for helpful comments. 
Niloy Biswas gratefully acknowledges support from a GSAS Merit Fellowship and
a Two Sigma Fellowship Award. 
Anirban Bhattacharya gratefully acknowledges support 
by the National Science Foundation through an NSF CAREER award (DMS-1653404).
Pierre E. Jacob gratefully acknowledges support by the National Science Foundation 
through grants DMS-1712872 and DMS-1844695.
The maize dataset GWAS simulations in this article were run on the FASRC Cannon cluster supported 
by the FAS Division of Science Research Computing Group at Harvard University.
\bibliographystyle{abbrvnat}
\bibliography{references.bib}

\appendix


In Section \ref{appendices:mcmc_challenges}, we describe some features of the posterior distributions arising in high-dimensional regression with shrinkage priors. 
In Section \ref{appendices:metric_details} we 
describe metrics to monitor the distance between pairs of chains and
provide details on the calculation of metrics to use in two-scale coupling strategies.
In Section \ref{appendices:proofs} we provide the proofs for all the theoretical results in this document.
Section \ref{appendices:algos} provides pseudo-codes for various algorithms mentioned in this document. Section \ref{appendices:algo_derivations} provides some derivations of certain steps in the Gibbs sampler.

\section{Features of the target complicate use of generic MCMC} \label{appendices:mcmc_challenges}

While posterior distributions resulting from heavy-tailed shrinkage priors 
have desirable statistical properties, their 
features pose challenges to generic MCMC
algorithms.
For Half-$t(\nu)$ priors, we show that the resulting posterior distributions 
present 1) multimodality, 2) heavy tails and 3) poles at zero. This hints at a trade-off
between statistical accuracy and computational difficulty, since the very
features that present computational challenges are crucial for optimal
statistical performance across sparsity levels and signal strengths.

Proposition \ref{prop:marginal_beta_posterior} gives the marginal prior and
posterior densities of $(\beta, \sigma^2, \xi)$ up to normalizing constants,
i.e. integrating over $\eta$. For any real-valued functions $f$ and $g$, we
write $f(x) \asymp g(x)$ for $x \rightarrow x_0$ when $m < \liminf_{x
\rightarrow x_0} \big| f(x)/g(x) \big| \leq \limsup_{x \rightarrow x_0}
\big| f(x)/g(x) \big| < M $ for some $0<m\leq M < \infty$.

\begin{Prop} \label{prop:marginal_beta_posterior}
The marginal prior of component $\beta_j$ on $\mathbb{R}$ given $\xi$ and $\sigma^2$ has density
\begin{equation} \label{eq:marginal_beta_prior}
\pi(\beta_j | \xi, \sigma^2) \propto U \Big( \frac{1+\nu}{2}, 1, \frac{\xi \beta_j^2}{2 \sigma^2 \nu} \Big) \asymp \begin{cases}
- \log|\beta_j| & \text{ for } \; |\beta_j| \rightarrow 0 \\
|\beta_j|^{-(1+\nu)} & \text{ for } \; |\beta_j| \rightarrow +\infty 
\end{cases},
\end{equation} 
where $\Gamma(\cdot)$ is the Gamma function, and $U(a,b,z) := \Gamma(a)^{-1} \int_0^\infty x^{a-1}(1+x)^{b-a-1}e^{-zx}dx$ is the confluent hypergeometric function of the second kind \citep[Section 13]{abramowitz1964handbook} for any $a, b, z>0$. 
The marginal posterior density of $\beta$ on $\mathbb{R}^p$ given $\xi$ and $\sigma^2$ is
\begin{flalign}
\pi( \beta | \sigma^2, \xi,  y ) &\propto \mathcal{N}( y; X \beta, \sigma^2 I_n ) \prod_{j=1}^p U \Big( \frac{1+\nu}{2}, 1, \frac{\xi \beta_j^2}{2 \sigma^2 \nu} \Big),
\label{eq:marginal_beta_fixed_xi_sigma2}
\end{flalign}
thus $ \pi(\beta | \sigma^2, \xi,  y ) \overset{\|\beta \| \rightarrow 0}{\asymp} - \prod_{j=1}^p \log ( |\beta_j| ) $ and $- \frac{\partial }{\partial \beta_j} \log \pi(\beta | \sigma^2, \xi,  y ) \overset{\beta_j \rightarrow 0}{\asymp} - (\beta_j \log |\beta_j|)^{-1}  $. 
\end{Prop}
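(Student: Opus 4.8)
The plan is to obtain the marginal prior of $\beta_j$ by integrating the conditional Gaussian prior against the Half-$t(\nu)$ density of $\eta_j$, recognize the resulting integral as a confluent hypergeometric function $U$, and then read off the two asymptotic regimes from standard small- and large-argument expansions of $U(a,b,z)$. Concretely, I would write
\[
\pi(\beta_j \mid \xi, \sigma^2) \propto \int_0^\infty \Big(\tfrac{\xi\eta_j}{2\pi\sigma^2}\Big)^{1/2} \exp\Big(-\tfrac{\xi\eta_j\beta_j^2}{2\sigma^2}\Big)\, \frac{\eta_j^{(\nu-2)/2}}{(1+\nu\eta_j)^{(\nu+1)/2}}\,d\eta_j,
\]
combining the $\eta_j^{1/2}$ from the Gaussian with the $\eta_j^{(\nu-2)/2}$ from $\pi_\eta$ in \eqref{eq:half_t_eta} to get an integrand proportional to $\eta_j^{(\nu-1)/2}(1+\nu\eta_j)^{-(\nu+1)/2}e^{-\xi\beta_j^2\eta_j/(2\sigma^2)}$. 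After the substitution $x=\nu\eta_j$ this matches the stated Euler representation $U(a,b,z)=\Gamma(a)^{-1}\int_0^\infty x^{a-1}(1+x)^{b-a-1}e^{-zx}\,dx$ with $a=(1+\nu)/2$, $b=1$, $z=\xi\beta_j^2/(2\sigma^2\nu)$, which is exactly \eqref{eq:marginal_beta_prior}. The posterior identity \eqref{eq:marginal_beta_fixed_xi_sigma2} then follows from Bayes' rule, using that the $\beta_j$ are conditionally independent given $(\xi,\sigma^2,\eta)$ and the $\eta_j$ are i.i.d., so the marginal prior factorizes across $j$, against the Gaussian likelihood $\mathcal{N}(y;X\beta,\sigma^2 I_n)$.

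For the factor-wise asymptotics I would invoke the known behaviour of $U(a,1,z)$. As $z\to\infty$ one has $U(a,1,z)\sim z^{-a}$, and since $z\propto\beta_j^2$ this gives the tail rate $|\beta_j|^{-(1+\nu)}$. The delicate regime is $z\to 0$ with the integer second argument $b=1$, where $U$ is logarithmic rather than power-law: the exceptional-case expansion yields $U(a,1,z)\sim -\log z/\Gamma(a)$, and since $\log z = 2\log|\beta_j| + \mathrm{const}$ this is $\asymp -\log|\beta_j|$ as $|\beta_j|\to 0$. For the product statement I would note that, for fixed $\sigma^2>0$, the Gaussian likelihood is continuous and bounded above and below by positive constants in a neighbourhood of $\beta=0$, so it contributes only a bounded factor; since each ratio $U(\cdots)/(-\log|\beta_j|)$ converges to the common positive constant $2/\Gamma(a)$, the finite product of ratios is bounded above and below, giving $\pi(\beta\mid\sigma^2,\xi,y)\asymp \prod_j(-\log|\beta_j|)$.

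For the score I would differentiate $\log\pi(\beta\mid\sigma^2,\xi,y)$ in $\beta_j$. The likelihood contributes $\sigma^{-2}(X^T(X\beta-y))_j = O(1)$ near the origin, so only the prior factor governs the singularity. Using $\tfrac{d}{dz}U(a,b,z)=-a\,U(a+1,b+1,z)$ together with the $z\to 0$ expansions $U(a,1,z)\sim -\log z/\Gamma(a)$ and $U(a+1,2,z)\sim z^{-1}/\Gamma(a+1)$ (the singular $b=2$ term), the logarithmic derivative satisfies $U'(a,1,z)/U(a,1,z)\sim (z\log z)^{-1}$; applying the chain rule with $\partial z/\partial\beta_j = 2z/\beta_j$ and $\log z \sim 2\log|\beta_j|$ gives $\partial_{\beta_j}\log U \sim (\beta_j\log|\beta_j|)^{-1}$, which diverges and hence dominates the $O(1)$ likelihood term, yielding the stated equivalence after negation. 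The main obstacle throughout is the accurate treatment of the small-$z$ asymptotics in the exceptional integer cases $b=1$ and $b=2$, where the usual power-law leading behaviour is replaced by logarithmic terms; care is needed to extract the correct leading constants from the confluent hypergeometric expansions and to confirm that the singular prior contribution dominates the smooth likelihood term as $\beta_j\to 0$.
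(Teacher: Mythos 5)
Your proposal is correct and follows essentially the same route as the paper's proof: the same Gaussian-against-Half-$t(\nu)$ integral with the substitution $x=\nu\eta_j$ to identify $U\big(\tfrac{1+\nu}{2},1,\tfrac{\xi\beta_j^2}{2\sigma^2\nu}\big)$, Bayes' rule with conditional independence for \eqref{eq:marginal_beta_fixed_xi_sigma2}, the standard expansions $U(a,1,z)\asymp -\log z/\Gamma(a)$ as $z\to 0^+$ and $U(a,1,z)\asymp z^{-a}$ as $z\to\infty$, and the identity $\tfrac{d}{dz}U(a,b,z)=-aU(a+1,b+1,z)$ together with $U(a+1,2,z)\asymp z^{-1}/\Gamma(a+1)$ for the score. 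Your treatment of the exceptional integer cases $b=1,2$ and of the dominance of the singular prior term over the $O(1)$ likelihood contribution matches the paper's argument, so there is nothing to add.
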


Figure \ref{fig:beta_marginal_plot} gives an example of such marginal posterior
in a stylized setting. Here 
\[n=2, \quad p=3, \quad X = \begin{pmatrix} 1 & 1
  & 0\\ 1 & 0 & 1 \end{pmatrix},\quad y = X  (1\; 0\; 0)^T = (1\; 1)^T,\]
with $\nu=2$ and $\xi = \sigma^2=0.25$. 
Component $\beta_1$ in Figure \ref{fig:beta_marginal_plot} illustrates
the potential multimodality in the posterior. The posterior distribution
also has polynomial tails in high dimensions, as stated in Corollary
\ref{cor:marginal_beta_posterior_heavy_tails}. 

\begin{figure}
\centering
\includegraphics[width=\textwidth]{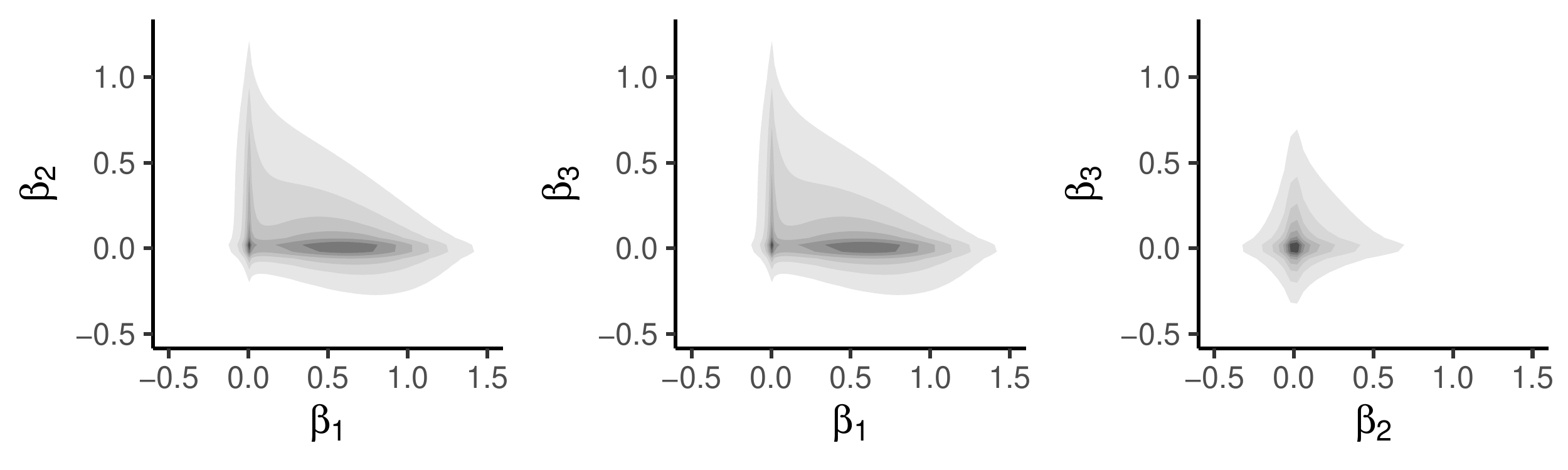}
\caption{Posterior densities $\pi( \beta_1, \beta_2 | \sigma^2, \xi,  y ), \pi( \beta_1, \beta_3 | \sigma^2, \xi,  y )$ and $\pi( \beta_2, \beta_3 | \sigma^2, \xi,  y )$.}
\label{fig:beta_marginal_plot}
\end{figure}

\begin{Cor} \label{cor:marginal_beta_posterior_heavy_tails}
Let $e^{\perp} \in \mathbb{R}^p$ be a unit vector such that $X e^{\perp} = 0$
and let $\lambda>0$. Assume entries $e^{\perp}_j \in \mathbb{R} $ are non-zero 
for all $j=1,\ldots,p$, which is necessary to ensure that 
$\pi( \lambda e^{\perp} | \sigma^2, \xi,  y )$ is
finite for all $\lambda>0$. Then
\begin{equation}
\pi( \lambda e^{\perp} | \sigma^2, \xi,  y ) \asymp \lambda^{-p(1+\nu)}
\end{equation}
as $\lambda \rightarrow + \infty$. 
That is, the marginal posterior of $\beta$ on $\mathbb{R}^p$ 
has polynomial tails along all directions in the null space $\ker(X)$ of $X$. 
\end{Cor}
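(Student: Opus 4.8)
The plan is to reduce everything to the large-argument tail asymptotic of the marginal prior already recorded in Proposition \ref{prop:marginal_beta_posterior}, exploiting the fact that translating along $\ker(X)$ leaves the Gaussian likelihood unchanged. First I would substitute $\beta = \lambda e^{\perp}$ into the marginal posterior density \eqref{eq:marginal_beta_fixed_xi_sigma2}. Because $e^{\perp} \in \ker(X)$, i.e.\ $X e^{\perp} = 0$, we have $X(\lambda e^{\perp}) = 0$ for every $\lambda > 0$, so the likelihood factor $\mathcal{N}(y; X(\lambda e^{\perp}), \sigma^2 I_n) = \mathcal{N}(y; 0, \sigma^2 I_n)$ is a strictly positive constant independent of $\lambda$. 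Together with the fixed normalizing constant of the posterior, this contributes only a multiplicative constant, which the symbol $\asymp$ absorbs. Hence the entire $\lambda$-dependence of $\pi(\lambda e^{\perp} | \sigma^2, \xi, y)$ is carried by the product $\prod_{j=1}^p U\big(\tfrac{1+\nu}{2}, 1, \tfrac{\xi \lambda^2 (e^{\perp}_j)^2}{2\sigma^2 \nu}\big)$.

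Next I would analyze each factor as $\lambda \to +\infty$. Since $e^{\perp}_j \neq 0$ by hypothesis, the $j$-th coordinate satisfies $|\beta_j| = \lambda |e^{\perp}_j| \to \infty$, so the corresponding argument of $U$ genuinely enters its large-argument regime. Applying the tail asymptotic from \eqref{eq:marginal_beta_prior}, namely $U\big(\tfrac{1+\nu}{2}, 1, \tfrac{\xi \beta_j^2}{2\sigma^2\nu}\big) \asymp |\beta_j|^{-(1+\nu)}$ as $|\beta_j| \to \infty$, gives for each $j$ that the factor is $\asymp (\lambda |e^{\perp}_j|)^{-(1+\nu)} = |e^{\perp}_j|^{-(1+\nu)} \lambda^{-(1+\nu)}$.

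Finally I would take the product over the $p$ coordinates. A finite product of functions each satisfying an $\asymp$ relation again satisfies the $\asymp$ relation obtained by multiplying the comparison functions, since the product of finitely many upper (resp.\ lower) constants remains a finite positive constant. This yields $\prod_{j=1}^p U(\cdots) \asymp \big(\prod_{j=1}^p |e^{\perp}_j|^{-(1+\nu)}\big)\lambda^{-p(1+\nu)}$, and as the bracketed quantity is a fixed positive constant it too is absorbed into $\asymp$, delivering $\pi(\lambda e^{\perp} | \sigma^2, \xi, y) \asymp \lambda^{-p(1+\nu)}$ as claimed.

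There is no serious obstacle here: the argument is essentially the one-line observation that $\ker(X)$ annihilates the likelihood, combined with the already-established prior tail. The only point requiring a modicum of care is the bookkeeping justifying that multiplying $p$ separate $\asymp$ equivalences preserves the $\asymp$ relation; this is immediate for a finite product but relies on every coordinate $e^{\perp}_j$ being nonzero so that each factor truly sits in its large-argument regime. That is precisely the hypothesis on $e^{\perp}$, which the corollary also notes is needed for finiteness of $\pi(\lambda e^{\perp} | \sigma^2, \xi, y)$, since a vanishing coordinate would force the $U$-factor into its logarithmic pole behavior and render the product non-integrable.
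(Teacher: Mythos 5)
Your proposal is correct and follows essentially the same route as the paper's own proof: both substitute $\beta = \lambda e^{\perp}$ into \eqref{eq:marginal_beta_fixed_xi_sigma2}, observe that $X e^{\perp}=0$ makes the Gaussian likelihood factor a $\lambda$-independent constant, and then apply the large-argument asymptotic $U\big(\tfrac{1+\nu}{2},1,z\big)\asymp z^{-(1+\nu)/2}$ coordinate-by-coordinate (valid since each $e^{\perp}_j \neq 0$) before taking the finite product. Your added remark on why multiplying finitely many $\asymp$ equivalences is harmless is a correct, if implicit, step in the paper's argument as well.
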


In our stylized example, $e^{\perp} = 3^{-1/2} (1\, -1 \, -1)^T$
satisfies $X e^{\perp} = 0$. It follows from Corollary
\ref{cor:marginal_beta_posterior_heavy_tails} that $ \pi( \lambda e^{\perp}
| \sigma^2, \xi,  y ) \asymp \lambda^{-3(1+\nu)} \asymp \lambda^{-9}) $ for
large $\lambda$. More generally, whenever $p > n$ and the marginal prior on
$\beta$ has polynomial tails, the subspace $\ker(X)$ of $\mathbb{R}^p$ has
vector space dimension at least $(p-n)$, and therefore the posterior has
polynomial tails over a non-trivial subspace. For such heavy-tailed targets,
the most popular ``generic'' MCMC algorithms encounter significant theoretical
and practical challenges. Here, we use the term generic for algorithms that
are not explicitly designed to target this class of distributions, but
instead rely on a general strategy to explore state spaces.
Specifically, for heavy-tailed targets, \textit{lack} of geometric ergodicity
has been established for Random Walk Metropolis--Hastings (RWMH)
\citep{jarner2003necessaryBERNOULLI}, Metropolis-Adjusted Langevin (MALA)
\citep{roberts1996exponentialBERNOULLI} and Hamiltonian Monte Carlo (HMC)
\citep{livingstone2019onBERNOULLI}\footnote{\citet{jarner2003necessaryBERNOULLI, roberts1996exponentialBERNOULLI}
and \citet{livingstone2019onBERNOULLI} show lack of geometric ergodicity for
targets which have heavy-tails in every direction.  Their arguments can be
modified to hold for targets which are heavy-tailed along any one direction as
in Corollary \ref{cor:marginal_beta_posterior_heavy_tails}, by considering a
projection of the full chain along that direction.}. Better performance
can sometimes be obtained using heavy-tailed proposals
\citep{jarner2007ConvergenceSJS,sherlock2010theSS,livingstone2019kinetic} or
variable transformations when available \citep{johnson2012variableAOS}. 

The posterior density in \eqref{eq:marginal_beta_fixed_xi_sigma2}
also approaches infinity when any component $\beta_j$ approaches zero, as 
$\lim_{\beta_j \rightarrow 0} U( (1+\nu)/2,1, (\xi \beta_j^2)/(2 \sigma^2 \nu)) = \infty$  
for any $\nu, \xi, \sigma^2> 0$. This is highlighted by the modes at the origin in Figure \ref{fig:beta_marginal_plot}. 
The \textit{pole} at zero corresponds to a discontinuity in the gradients of the negative log-density, as given in Proposition 
\ref{prop:marginal_beta_posterior}. 
Such diverging gradients can lead to numerically unstable
leapfrog integration for HMC samplers
\citep{bou-rabee2018geometricACTA}. 


\section{Details on the choice and calculation of metric} \label{appendices:metric_details}
\subsection{Comparison with other metrics} \label{appendices:metric_comparison}
Figure \ref{fig:metric_plot} plots
$d(C_t, \tilde{C}_t)$ alongside the $L_1$ metric and the log-transformed
$L_1$ metric on parameters $m_{t,j}$ and $\tilde{m}_{t,j}$ as in 
\eqref{eq:distance_metric_estimate2}. It is based on one trajectory of our
two-scale coupled chain under common random numbers (threshold $d_0=0$), 
for a synthetic dataset generated as per Section \ref{subsec:max_coupling} targeting the Horseshoe
posterior ($\nu=1$) with $a_0=b_0=1$. Figure \ref{fig:metric_plot} shows
that $d(C_t, \tilde{C}_t)$ is bounded above by both the $L_1$ metric and
the $L_1$ metric between the logarithms of $m_j, \tilde{m}_j$. Therefore, we
could consider the capped $L_1$ metric ($\min \{ \sum_j | m_{t,j} -
\tilde{m}_{t,j} | , 1 \}$) or the capped $L_1$ metric on the logs ($\min \{
\sum_j | \log( |m_{t,j}/ \tilde{m}_{t,j}|) | , 1 \}$) to obtain upper
bounds on $d$. Finding metrics which are easily calculable and accurate
could be investigated further and  may prove valuable
in the theoretical analysis of the proposed algorithms.  

\begin{figure}
\captionsetup[subfigure]{font=normalsize,labelfont=normalsize}
    \centering
        \includegraphics[width=0.6\textwidth]{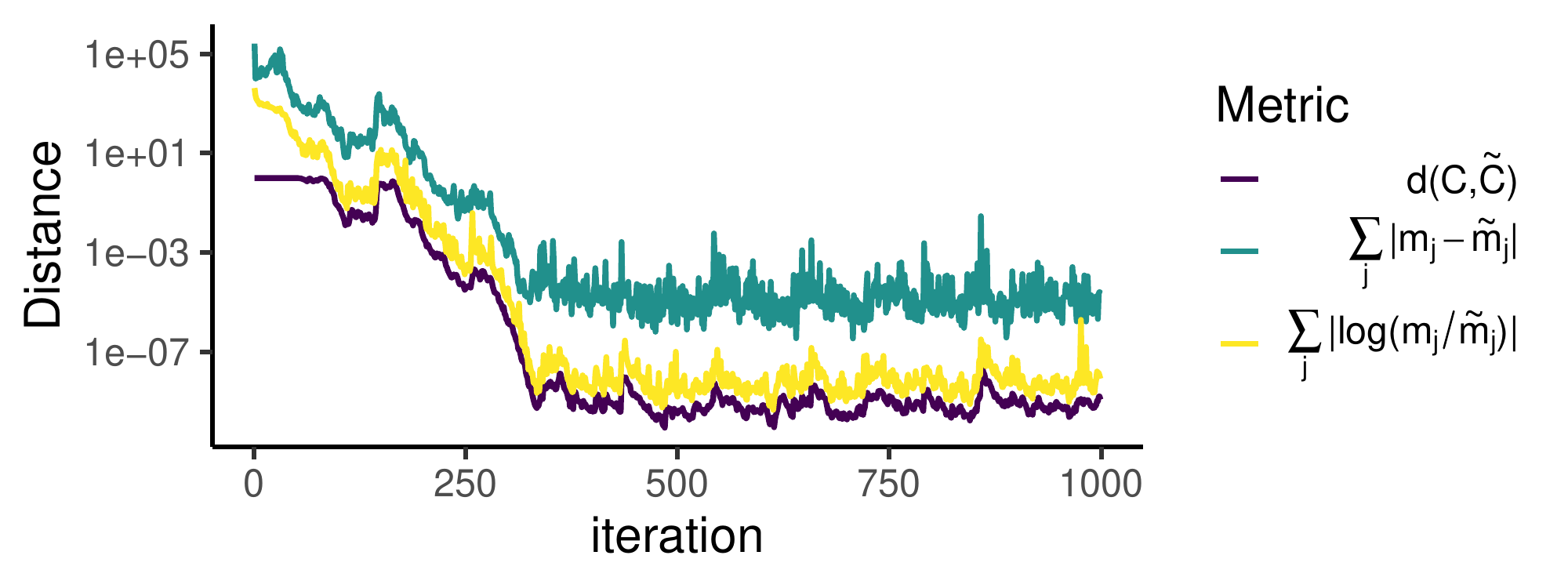}
    \caption{Metric trajectories for the posterior associated with a Horseshoe prior 
    under common random numbers coupling 
    with $n=100$, $p=1000$, $s=10$ and $\sigma_*=0.5$.}
    \label{fig:metric_plot}
\end{figure}

\subsection{Choice of metric threshold for the two-scale coupling} \label{appendices:threshold_choice}

For different values of $d_0$, Figure
\ref{fig:metric_by_threshold_p_500} shows averaged trajectories $(d(C_t,
\tilde{C}_t)) _{t \geq 0}$. When $d_0=1$ (one-scale coupling), metric
$d(C_t, \tilde{C}_t)$ remains near $1$, and when $d_0=0$ (common
random numbers coupling) metric $d(C_t, \tilde{C}_t)$ gets very close to
but does not exactly equal $0$. For all values of the threshold shown,
the chains exactly meet and $d(C_t, \tilde{C}_t)$ reaches zero at similar
times. Figure \ref{fig:crn_hist_by_p} considers higher dimensional settings. It
plots the histograms of $(d(C_t, \tilde{C}_t))_{t=1}^{2000}$ when
$d_0=0$. The histograms show that $d(C_t, \tilde{C}_t)$ takes
values close to either $1$ or $0$ for different dimensions $p$. This 
may explain why different threshold values sufficiently away from $0$ and $1$ give
similar meeting times. Figure \ref{fig:p_500_by_threshold} shows that different thresholds 
give similar meeting times for the Horseshoe. 

\begin{figure}[!t]
\captionsetup[subfigure]{font=normalsize,labelfont=normalsize}
    \centering
    \begin{subfigure}[b]{0.32\textwidth}
        \includegraphics[width=\textwidth]{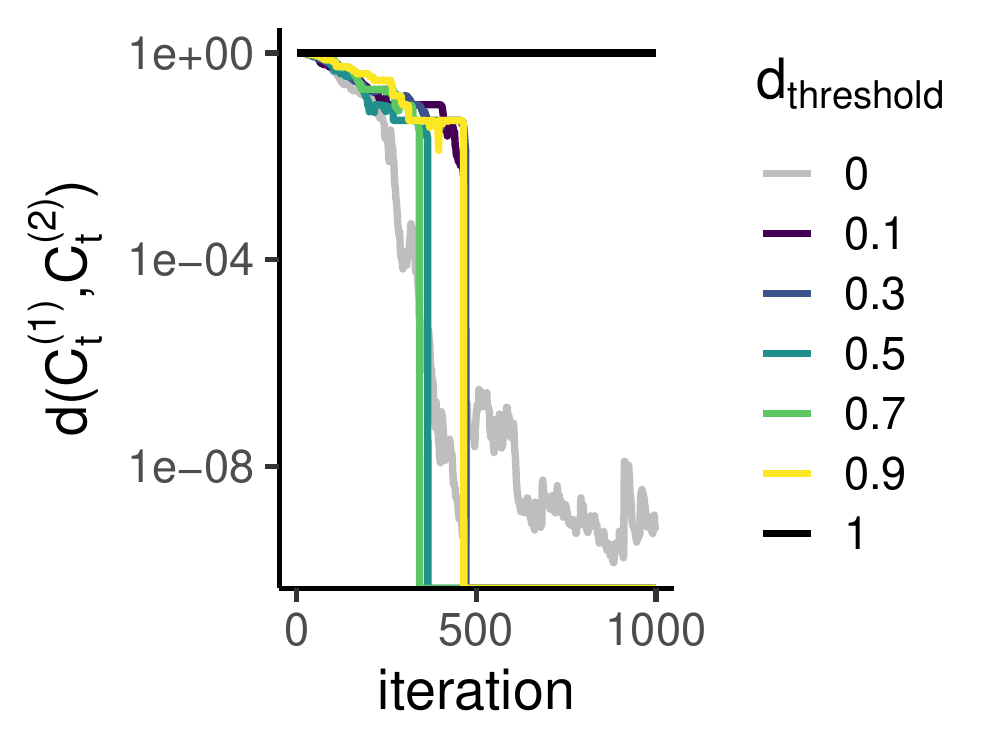}
        \caption{}
        \label{fig:metric_by_threshold_p_500}
    \end{subfigure}
    ~ 
    \begin{subfigure}[b]{0.32\textwidth}
        \includegraphics[width=\textwidth]{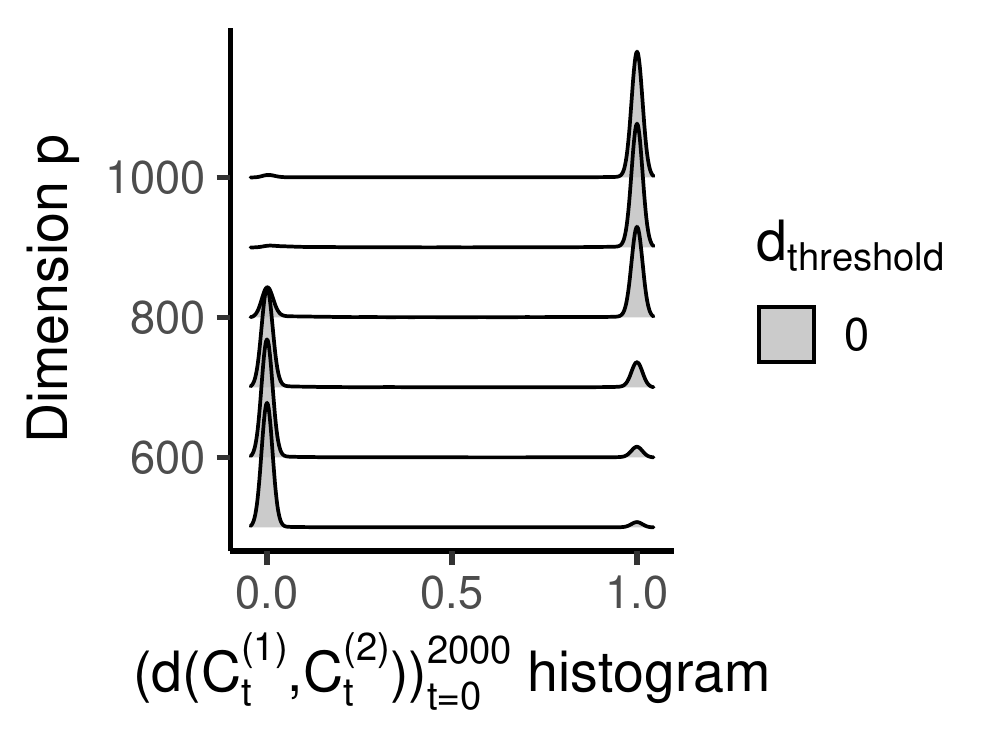}
        \caption{}
        \label{fig:crn_hist_by_p}
    \end{subfigure}
    \begin{subfigure}[b]{0.32\textwidth}
        \includegraphics[width=\textwidth]{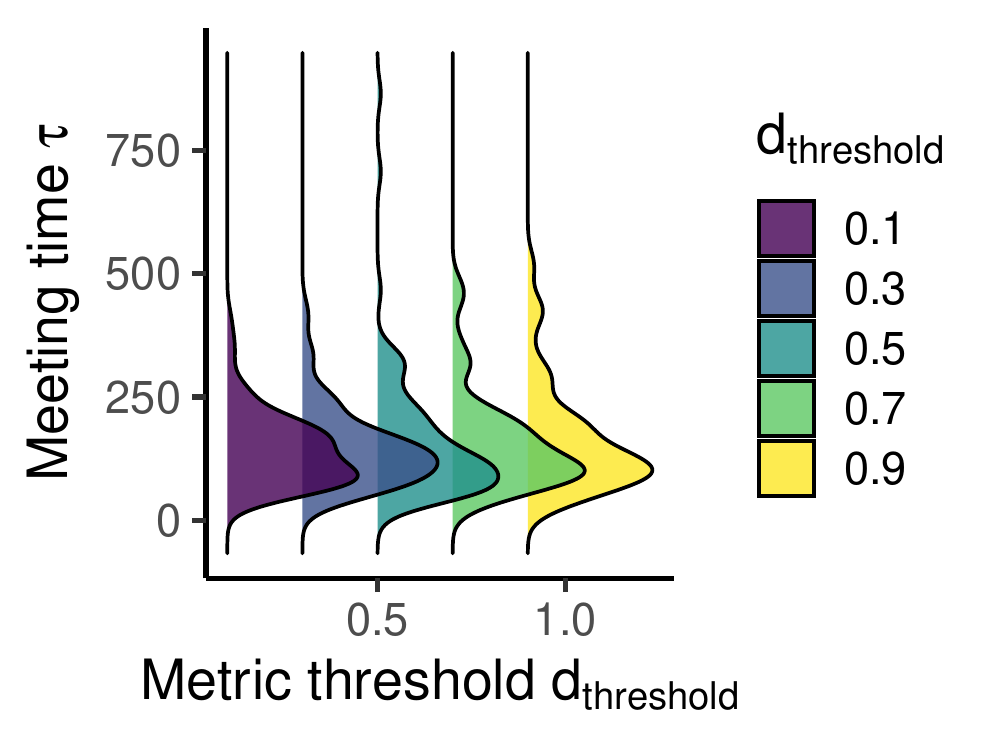}
        \caption{}
        \label{fig:p_500_by_threshold}
    \end{subfigure}
    \caption{Two-scale coupling algorithm performance for the Horseshoe 
    prior under different values of $d_0$.
    Unless specified otherwise, $n=100,p=500,s=20$ and $\sigma_*=2$.
    }
    \label{fig:d_threshold_plot}
\end{figure}

\subsection{Calculating metric $d$} \label{appendices:metric_calculation}
We present details of estimating the metric $d$ defined in \eqref{eq:distance_metric}. 
As the coupling in Algorithm \ref{algo:slice_sampling_exact_meeting_coupling} is independent component-wise, we have
\begin{flalign}
d(C_t, \tilde{C}_t) &= 1 - \mathbb{P}\big( \eta_{t+1} = \tilde{\eta}_{t+1} | C_t, \tilde{C}_t \big) = 1 - \prod_{j=1}^p \mathbb{P} \big( \eta_{t+1,j} = \tilde{\eta}_{t+1,j} |  \eta_{t,j}, \tilde{\eta}_{t,j}, m_{t,j}, \tilde{m}_{t,j} \big) 
\end{flalign}
for $m_j = \xi \beta_j^2 / (2 \sigma^2)$
and $\tilde{m}_j = \tilde{\xi} \tilde{\beta}_j^2 / (2 \tilde{\sigma}^2)$.
\subsubsection{An integration based approximation}
We consider the approximation 
\begin{equation}
\mathbb{P} \big( \eta_{t+1,j} \neq \tilde{\eta}_{t+1,j} |  \eta_{t,j}, \tilde{\eta}_{t,j}, m_{t,j}, \tilde{m}_{t,j} \big) \approx \text{TV}(M_{t,j}, \tilde{M}_{t,j}),
\end{equation}
where $M_{t,j}$ and $\tilde{M}_{t,j}$ correspond to target distributions of the 
marginal slice samplers in Algorithm \ref{algo:slice_sampling_exact_meeting_coupling}. 
This corresponds to approximating the probability of not meeting under
Algorithm \ref{algo:slice_sampling_exact_meeting_coupling}
with the total variation distance (i.e. probability of not meeting under a maximal coupling)
between the marginal target distributions $M_{t,j}$ and $\tilde{M}_{t,j}$ of the slice samplers. 
We can evaluate $\text{TV}(M_{t,j}, \tilde{M}_{t,j})$ analytically using Proposition \ref{prop:slice_sampler_tv}. 
This motivates the approximation
\begin{flalign} \tilde{d}(C_t, \tilde{C}_t) := 1 - \prod_{j=1}^p \Big( 1 - \text{TV}(M_{t,j}, \tilde{M}_{t,j}) \Big)
\end{flalign}
for $d(C_t, \tilde{C}_t)$. 
\begin{Prop} \label{prop:slice_sampler_tv}
Consider distributions $M$ and $\tilde{M}$ on $[0,\infty)$ with densities
\begin{equation} \label{eq:slice_sampler_target_density}
p(\eta_j | m) \propto \frac{1}{\eta_j^{\frac{1-\nu}{2}} (1+ \nu \eta_j)^{\frac{\nu+1}{2}}} e^{-m \eta_j} \text{ and } 
p(\eta_j | \tilde{m}) \propto \frac{1}{\eta_j^{\frac{1-\nu}{2}} (1+ \nu \eta_j)^{\frac{\nu+1}{2}}} e^{- \tilde{m} \eta_j}, 
\end{equation}
respectively on $[0,\infty)$, such that $M$ and $\tilde{M}$ are parameterized by the fixed constant $m > 0$ and $\tilde{m} > 0$. 
Then, $\text{TV}(M, \tilde{M}) = 0$ when $m=\tilde{m}$. When $m \neq \tilde{m}$, 
\begin{flalign} \label{eq:slice_sampler_tv}
\text{TV}(M, \tilde{M}) = \bigg| \frac{U_{\frac{1+\nu}{2}, 1, \frac{m}{\nu}}(\nu K)}{U(\frac{1+\nu}{2}, 1, \frac{m}{\nu})} - \frac{U_{\frac{1+\nu}{2}, 1, \frac{\tilde{m}}{\nu}}(\nu K)}{U(\frac{1+\nu}{2}, 1, \frac{\tilde{m}}{\nu})} \bigg|,
\end{flalign}
where $U_{a,b,z}(t) := \frac{1}{\Gamma(a)} \int_0^t x^{a-1}(1+x)^{b-a-1} e^{-zx} dx$ is defined as the \textit{lower incomplete} confluent hypergeometric function of the second kind, such that $U_{a,b,z}(\infty) = U(a,b,z)$ gives the confluent hypergeometric function of the second kind, and 
\begin{equation}
K := \frac{\log U(\frac{1+\nu}{2}, 1, \frac{m}{\nu})  - \log U(\frac{1+\nu}{2}, 1, \frac{\tilde{m}}{\nu}) }{\tilde{m}-m}.
\end{equation}
\end{Prop}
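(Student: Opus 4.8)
The plan is to write both densities against the shared kernel $h(\eta) := \eta^{\frac{\nu-1}{2}}(1+\nu\eta)^{-\frac{\nu+1}{2}}$, so that $M$ and $\tilde{M}$ have densities $Z_m^{-1} h(\eta) e^{-m\eta}$ and $Z_{\tilde{m}}^{-1} h(\eta) e^{-\tilde{m}\eta}$, where $Z_m = \int_0^\infty h(\eta) e^{-m\eta}\, d\eta$ and $Z_{\tilde{m}}$ are the normalizing constants. The first step is to identify these constants, and more generally the truncated integrals, with the confluent hypergeometric function. Substituting $x = \nu \eta$ converts $\int_0^T h(\eta) e^{-m\eta}\, d\eta$ into $C\, U_{\frac{1+\nu}{2},1,\frac{m}{\nu}}(\nu T)$, where $C = \nu^{-\frac{\nu+1}{2}}\Gamma(\tfrac{1+\nu}{2})$ does not depend on $m$; letting $T \to \infty$ gives $Z_m = C\, U(\tfrac{1+\nu}{2},1,\tfrac{m}{\nu})$ and likewise for $\tilde{m}$. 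A by-product is the closed form for the CDF, $F_M(T) = U_{\frac{1+\nu}{2},1,\frac{m}{\nu}}(\nu T)/U(\tfrac{1+\nu}{2},1,\tfrac{m}{\nu})$, since the common factor $C$ cancels. The case $m = \tilde{m}$ is immediate, as the densities then coincide and $\text{TV}(M,\tilde{M}) = 0$.

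Next I would locate the crossing of the two densities. Their ratio $f(\eta)/g(\eta) = (Z_{\tilde{m}}/Z_m)\, e^{(\tilde{m}-m)\eta}$ is strictly monotone in $\eta$ whenever $m \neq \tilde{m}$, so the densities cross at exactly one point $\eta^\ast$ of $(0,\infty)$. Solving $f(\eta^\ast) = g(\eta^\ast)$ gives $\eta^\ast = (\log Z_{\tilde{m}} - \log Z_m)/(m - \tilde{m})$; because $C$ cancels inside the logarithms, $\eta^\ast$ equals precisely the constant $K$ in the statement. Since $U(\tfrac{1+\nu}{2},1,\cdot)$ is strictly decreasing in its argument, the numerator and denominator defining $K$ share a sign, so $K > 0$ and the crossing genuinely lies in the support.

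Finally, for a single crossing at $K$ the total variation collapses to the gap between the two CDFs there. Using $\text{TV}(M,\tilde{M}) = M(A) - \tilde{M}(A)$ for the optimal set $A = \{f > g\}$, which equals $(K,\infty)$ or $(0,K)$ according to the sign of $\tilde{m} - m$, I would conclude $\text{TV}(M,\tilde{M}) = |F_M(K) - F_{\tilde{M}}(K)|$; inserting the CDF formula above with $T = K$ (hence upper limit $\nu K$) yields the stated expression. The one place demanding genuine care is the substitution bookkeeping that matches $h(\eta) e^{-m\eta}$ to the integrand defining $U(\tfrac{1+\nu}{2},1,\tfrac{m}{\nu})$ and its lower incomplete version: pinning down the parameters $a = \tfrac{1+\nu}{2}$, $b = 1$, $z = \tfrac{m}{\nu}$ and the rescaling of the upper limit to $\nu K$ is exactly what makes the prefactor $C$ cancel cleanly and produces the incomplete-$U$ ratios. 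The monotonicity, single-crossing, and TV-at-the-crossing arguments are then routine.
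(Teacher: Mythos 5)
Your proposal is correct and follows essentially the same route as the paper's proof: both identify the normalizing constants (and truncated integrals) with the confluent hypergeometric function via the substitution $x=\nu\eta$, locate the single density crossing at the same point $K$ from the monotone likelihood ratio, and reduce the total variation to a difference of incomplete-$U$ ratios at $K$. The only cosmetic differences are that the paper evaluates $1-\text{TV}$ as $\int \min(p(\cdot\,|\,m),p(\cdot\,|\,\tilde m))$ split at $K$ whereas you use the equivalent CDF-gap-at-the-crossing formulation, and you add a brief (correct, and welcome) check that $K>0$ via the monotonicity of $U(\tfrac{1+\nu}{2},1,\cdot)$, which the paper leaves implicit.
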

\begin{proof}
We have 
\begin{equation}
p(\eta_j | m) = \frac{1}{\eta_j^{\frac{1-\nu}{2}} (1+ \nu \eta_j)^{\frac{\nu+1}{2}}} e^{-m \eta_j} \frac{1}{Z}
\ \text{ and } \
p(\eta_j | \tilde{m}) = \frac{1}{\eta_j^{\frac{1-\nu}{2}} (1+ \nu \eta_j)^{\frac{\nu+1}{2}}} e^{-\tilde{m} \eta_j} \frac{1}{\tilde{Z}},
\end{equation}
where 
\begin{flalign}
Z &= \int_0^\infty \frac{1}{\eta_j^{\frac{1-\nu}{2}} (1+ \nu \eta_j)^{\frac{\nu+1}{2}}} e^{-m \eta_j} d\eta_j = \nu^{-\frac{\nu+1}{2}} \Gamma \Big( \frac{\nu+1}{2} \Big) U \Big(\frac{\nu+1}{2}, 1, \frac{m}{\nu} \Big), \\
\tilde{Z} &= \int_0^\infty \frac{1}{\eta_j^{\frac{1-\nu}{2}} (1+ \nu \eta_j)^{\frac{\nu+1}{2}}} e^{-\tilde{m} \eta_j} d\eta_j = \nu^{-\frac{\nu+1}{2}} \Gamma \Big( \frac{\nu+1}{2} \Big) U \Big(\frac{\nu+1}{2}, 1, \frac{\tilde{m}}{\nu} \Big).
\end{flalign}
\item \paragraph{Case $m = \tilde{m}$.} $\text{TV}(M, \tilde{M})=0$ is immediate.
\item \paragraph{Case $m < \tilde{m}$.} Note that 
\begin{flalign}
p(\eta_j | m) \leq p(\eta_j | \tilde{m}) &\Leftrightarrow  \exp \big( (\tilde{m}-m) \eta \big) \leq  \frac{Z}{\tilde{Z}} \\
&\Leftrightarrow \eta \leq \frac{\log( U(\frac{1+\nu}{2}, 1, \frac{m}{\nu}) ) - \log( U(\frac{1+\nu}{2}, 1, \frac{\tilde{m}}{\nu}) ) }{\tilde{m}-m} =: K.
\end{flalign}
Therefore, 
\begin{flalign}
1 - \text{TV}(M, \tilde{M}) &= \int_0^\infty \min(p( \eta_j | m), p( \eta_j | \tilde{m})) d \eta_j \\
&= \int_0^K p( \eta_j | m) d \eta_j + \int_K^\infty p( \eta_j | \tilde{m}) d \eta_j \\
&= \frac{1}{Z} \int_0^K \frac{1}{\eta_j^{\frac{1-\nu}{2}} (1+ \nu \eta_j)^{\frac{\nu+1}{2}}} e^{-m \eta_j} d \eta_j \\
& \qquad + \frac{1}{\tilde{Z}} \int_K^\infty \frac{1}{\eta_j^{\frac{1-\nu}{2}} (1+ \nu \eta_j)^{\frac{\nu+1}{2}}} e^{-\tilde{m} \eta_j} d \eta_j \\
&= \frac{U_{\frac{1+\nu}{2}, 1, \frac{m}{\nu}}(\nu K)}{U(\frac{1+\nu}{2}, 1, \frac{m}{\nu})} + 1 - \frac{U_{\frac{1+\nu}{2}, 1, \frac{\tilde{m}}{\nu}}(\nu K)}{U(\frac{1+\nu}{2}, 1, \frac{\tilde{m}}{\nu})} \\
&= 1 - \Bigg| \frac{U_{\frac{1+\nu}{2}, 1, \frac{\tilde{m}}{\nu}}(\nu K)}{U(\frac{1+\nu}{2}, 1, \frac{\tilde{m}}{\nu})} - \frac{U_{\frac{1+\nu}{2}, 1, \frac{m}{\nu}}(\nu K)}{U(\frac{1+\nu}{2}, 1, \frac{m}{\nu})} \Bigg|. 
\end{flalign}
Equation \eqref{eq:slice_sampler_tv} directly follows. 
\item \paragraph{Case $m > \tilde{m}$.} Follows from the $m < \tilde{m}$ case by symmetry.
\end{proof}

\subsubsection{A Rao--Blackwellized estimator}
We have
\begin{flalign}
& \mathbb{P}  \big( \eta_{t+1,j} = \tilde{\eta}_{t+1,j} |  \eta_{t,j}, \tilde{\eta}_{t,j}, m_{t,j}, \tilde{m}_{t,j} \big)\\
&= \mathbb{E} \Big[\mathbb{P} \big( \eta_{t+1,j} = \tilde{\eta}_{t+1,j} \Big| U_{j,*}, \tilde{U}_{j,*}, m_{t,j}, \tilde{m}_{t,j} \big) \big| \eta_{t,j}, \tilde{\eta}_{t,j}, m_{t,j}, \tilde{m}_{t,j} \Big],
\end{flalign}
where $m_{t,j} = \xi_{t} \beta_{t,j}^2 / (2 \sigma_{t}^2)$ and 
$\tilde{m}_{t,j} = \tilde{\xi}_{t} \tilde{\beta}_{t,j}^2 / (2 \tilde{\sigma}_{t}^2)$, 
$(U_{j,*}, \tilde{U}_{j,*}) | m_{t,j}, \tilde{m}_{t,j}$ are sampled using common random numbers as in Algorithm \ref{algo:slice_sampling_exact_meeting_coupling}, 
and $(\eta_{t+1,j}, \tilde{\eta}_{t+1,j}) | U_{j,*}, \tilde{U}_{j,*}, m_{t,j}, \tilde{m}_{t,j}$ are sampled as in Algorithm \ref{algo:slice_sampling_exact_meeting_coupling}. 
We can analytically evaluate probability $\mathbb{P} \big( \eta_{t+1,j} = \tilde{\eta}_{t+1,j} \Big| U_{j,*}, \tilde{U}_{j,*}, m_{t,j}, \tilde{m}_{t,j} \big)$, which motivates the Rao-Blackwellized estimator from \eqref{eq:distance_metric_estimate3}, 
\begin{equation}
\widehat{d_R}^{(2)}(C_t, \tilde{C_t}) := 1 - \prod_{j=1}^p \Big(\frac{1}{R} \sum_{r=1}^R \mathbb{P} \Big( \eta_{t+1,j} = \tilde{\eta}_{t+1,j} \Big| U_{j,r}, \tilde{U}_{j,r}, m_{t,j}, \tilde{m}_{t,j} \Big) \Big),
\end{equation}
where $R$ is the number of samples. For each $r=1,\ldots,R$, $(U_{j,r}, \tilde{U}_{j,r})$ is sampled independently as in Algorithm \ref{algo:slice_sampling_exact_meeting_coupling} and maximal coupling probability $\mathbb{P} \big( \eta_{t+1,j} = \tilde{\eta}_{t+1,j} \big| U_{j,r}, \tilde{U}_{j,r}, m_{t,j}, \tilde{m}_{t,j} \big)$ is calculated analytically using Proposition \ref{prop:distance_metric_estimate3}. 

\begin{Prop} \label{prop:distance_metric_estimate3}
Suppose $\eta_j \big| U_{j,*}, m_{j} \sim P_j$ and $\tilde{\eta}_j \big| \tilde{U}_{j,*}, \tilde{m}_{j} \sim \tilde{P}_j$
where marginal distributions ${P}_j$ and $\tilde{P}_j$ correspond to the distribution in the second step of the slice sampler 
in Algorithm \ref{algo:slice_sampling_1}. Then,
\begin{flalign} \label{eq:couple_prob_componentwise}
\mathbb{P} & \big( \eta_{j} = \tilde{\eta}_{j} \Big| U_{j,*}, \tilde{U}_{j,*}, m_{j}, \tilde{m}_{j} \big) 
 = \left\{\begin{matrix}
 \frac{\gamma_{s}(m_{j} \tilde{K}_{j})}{\gamma_{s}(m_{j} T_{j,*})} + \frac{\gamma_{s}(\tilde{m}_{j} (T_{j,*} \wedge \tilde{T}_{j,*}))-\gamma_{s}(\tilde{m}_{j} \tilde{K}_{j})}{\gamma_{s}(\tilde{m}_{j} \tilde{T}_{j,*})} & for \ m_{j} < \tilde{m}_{j} \\ 
 \frac{\gamma_{s}(m_{j} (T_{j,*} \wedge \tilde{T}_{j,*}))}{\gamma_{s}(m_{j} (T_{j,*} \vee \tilde{T}_{j,*}))} & for \ m_{j} = \tilde{m}_{j} \\
 \frac{\gamma_{s}(\tilde{m}_{j} \tilde{K}_{j})}{\gamma_{s}(\tilde{m}_{j} \tilde{T}_{j,*})} + \frac{\gamma_{s}(m_{j} (T_{j,*} \wedge \tilde{T}_{j,*}))-\gamma_{s}(m_{j} \tilde{K}_{j})}{\gamma_{s}(m_{j} T_{j,*})} & for \ m_{j} > \tilde{m}_{j}
\end{matrix}\right.
\end{flalign}
for $m_{t,j} = \xi_{t} \beta_{t,j}^2 / (2 \sigma_{t}^2)$ and 
$\tilde{m}_{t,j} = \tilde{\xi}_{t} \tilde{\beta}_{t,j}^2 / (2 \tilde{\sigma}_{t}^2)$, 
$T_{j,*} :=(U_{j,*}^{-\frac{2}{1+\nu}}-1)/\nu$ and
$\tilde{T}_{j,*} :=(\tilde{U}_{j,*}^{-\frac{2}{1+\nu}}-1)/\nu$, 
$s = \frac{1 + \nu}{2}$, 
$\gamma_{s}(x) = \frac{1}{\Gamma(s)} \int_0^x t^{s-1} e^{-t} dt \in [0,1]$ the regularized incomplete lower Gamma function, $a \wedge b := \min \{ a, b \} $, $a \vee b := \max \{ a, b \} $, and 
\begin{flalign} \tilde{K}_{j} := \bigg( 0 \vee \Big( \frac{ \log \Big(\Big(\frac{\tilde{m}_{j}}{m_{j}}\Big)^s \frac{\gamma_{s}(m_{j} T_{j,*})}{\gamma_{s}(\tilde{m}_{j} \tilde{T}_{j,*})}\Big)\Big)}{\tilde{m}_{j}-m_{j}} \Big) \bigg) \wedge \Big( T_{j,*} \wedge \tilde{T}_{j,*} \Big). \end{flalign}
\end{Prop}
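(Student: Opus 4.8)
The plan is to recognize that $\mathbb{P}\big(\eta_{j} = \tilde{\eta}_{j} \mid U_{j,*}, \tilde{U}_{j,*}, m_{j}, \tilde{m}_{j}\big)$ is exactly the overlap mass of the maximal coupling, namely $\int_0^\infty \min\big(p_j(\eta), \tilde{p}_j(\eta)\big)\, d\eta = 1 - \text{TV}(P_j, \tilde{P}_j)$, where $p_j, \tilde{p}_j$ are the normalized densities of $P_j, \tilde{P}_j$. From Algorithm \ref{algo:slice_sampling_exact_meeting_coupling}, these are truncated Gamma laws: $p_j \propto \eta^{s-1} e^{-m_j \eta}\,\mathbbm{1}_{(0,T_{j,*})}$ and $\tilde{p}_j \propto \eta^{s-1} e^{-\tilde{m}_j \eta}\,\mathbbm{1}_{(0,\tilde{T}_{j,*})}$ with $s=(1+\nu)/2$. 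First I would compute the two normalizing constants via the substitution $u = m_j \eta$, obtaining $Z_j = m_j^{-s}\,\Gamma(s)\,\gamma_s(m_j T_{j,*})$ and the analogous $\tilde{Z}_j$, so that both normalized densities, and all subsequent integrals, are expressed through the regularized lower incomplete Gamma function $\gamma_s$.

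Next I would locate the crossing of the two densities on their common support $(0, T_{j,*}\wedge\tilde{T}_{j,*})$. Since the shared factor $\eta^{s-1}/\Gamma(s)$ cancels, the ratio $p_j(\eta)/\tilde{p}_j(\eta)$ is proportional to $e^{(\tilde{m}_j - m_j)\eta}$, which is strictly monotone whenever $m_j \neq \tilde{m}_j$; hence there is at most one sign change. Solving $p_j(\eta) = \tilde{p}_j(\eta)$ gives the crossing abscissa $K^* = (\tilde{m}_j - m_j)^{-1}\log\big((\tilde{m}_j/m_j)^s\,\gamma_s(m_j T_{j,*})/\gamma_s(\tilde{m}_j \tilde{T}_{j,*})\big)$, and clamping it to $[0,\,T_{j,*}\wedge\tilde{T}_{j,*}]$ produces precisely the quantity $\tilde{K}_{j}$ in the statement. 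The clamping correctly handles the case in which the unclamped crossing lies outside the shared support, so that one density dominates the other throughout.

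I would then split the overlap integral at $\tilde{K}_{j}$ and integrate each piece using the substitution $u = m_j \eta$ (respectively $u = \tilde{m}_j \eta$), which converts every piece into a ratio of $\gamma_s$ values. For $m_j < \tilde{m}_j$ the monotone ratio shows that $p_j$ is the minimum on $(0,\tilde{K}_{j})$ and $\tilde{p}_j$ is the minimum on $(\tilde{K}_{j},\,T_{j,*}\wedge\tilde{T}_{j,*})$, yielding $\gamma_s(m_j \tilde{K}_{j})/\gamma_s(m_j T_{j,*}) + \big(\gamma_s(\tilde{m}_j (T_{j,*}\wedge\tilde{T}_{j,*})) - \gamma_s(\tilde{m}_j \tilde{K}_{j})\big)/\gamma_s(\tilde{m}_j \tilde{T}_{j,*})$, the first branch of \eqref{eq:couple_prob_componentwise}; the case $m_j > \tilde{m}_j$ follows by swapping the roles of $(m_j, T_{j,*})$ and $(\tilde{m}_j, \tilde{T}_{j,*})$, giving the third branch. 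For $m_j = \tilde{m}_j$ the unnormalized densities coincide, so their ratio is constant and set by the normalizing constants alone: the density with the smaller truncation bound is uniformly larger on $(0,\,T_{j,*}\wedge\tilde{T}_{j,*})$, and integrating the smaller one returns $\gamma_s(m_j (T_{j,*}\wedge\tilde{T}_{j,*}))/\gamma_s(m_j (T_{j,*}\vee\tilde{T}_{j,*}))$, the middle branch.

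The computation is essentially routine once the overlap-integral viewpoint is adopted. The main point demanding care, rather than a deep obstacle, is the correct handling of the clamping of $K^*$ to the shared support together with the correct identification, in each ordering of $m_j$ and $\tilde{m}_j$, of which density is the pointwise minimum on each sub-interval; getting the integration endpoints and the dominating density right across all three branches is where bookkeeping errors are most likely to arise.
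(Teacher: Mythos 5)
Your proposal is correct and takes essentially the same route as the paper's proof: both express the meeting probability as the overlap integral $\int_0^{T_{j,*}\wedge\tilde{T}_{j,*}} \min\{p_j(\eta),\tilde{p}_j(\eta)\}\,d\eta$ of the two truncated Gamma densities, locate the unique crossing via the monotone ratio $e^{(\tilde{m}_j-m_j)\eta}$, clamp it to the common support to obtain $\tilde{K}_j$, split the integral there and express each piece through $\gamma_s$, with the case $m_j=\tilde{m}_j$ settled by comparing normalizing constants and the case $m_j>\tilde{m}_j$ by symmetry. Your explicit check that the clamped endpoint cases ($\tilde{K}_j=0$ or $\tilde{K}_j=T_{j,*}\wedge\tilde{T}_{j,*}$) still return the stated formulas is slightly more careful than the paper, which handles this only implicitly through the definition of $\tilde{K}_j$.
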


\begin{proof}
We work component-wise and drop the subscripts for ease of notation. Then distributions 
$P$ and $\tilde{P}$ have densities
\begin{flalign}
p(\eta ; m, T) &= \eta^{s-1} e^{-m \eta} \frac{m^s}{\Gamma(s) \gamma_{s}(m T)} \quad \text{ on } (0, T] \\
p(\eta ; \tilde{m}, \tilde{T}) &= \eta^{s-1} e^{- \tilde{m} \eta} \frac{ \tilde{m}^s}{\Gamma(s) \gamma_{s}( \tilde{m} \tilde{T})} 
\quad \text{ on } (0, \tilde{T}]
\end{flalign}
respectively where $T :=(U_{*}^{-\frac{2}{1+\nu}}-1)/\nu$ and $\tilde{T} :=(\tilde{U}_{*}^{-\frac{2}{1+\nu}}-1)/\nu$, 
$s = \frac{1+\nu}{2}$ and \(\gamma_{s}(x) = \frac{1}{\Gamma(s)} \int_0^x t^{s-1} e^{-t} dt \in [0,1] \) is the regularized incomplete lower Gamma function. Note that 
\begin{flalign*}
\mathbb{P}_{max}\Big( \eta = \tilde{\eta} \Big| T, \tilde{T}, m, \tilde{m} \Big) = \int_0^{T \wedge \tilde{T}} p(\eta ; m, T) \wedge p(\eta ; \tilde{m}, \tilde{T}) d \eta.
\end{flalign*}
\item \paragraph{Case $m = \tilde{m}$. }
\begin{flalign*}
\int_0^{T \wedge \tilde{T}} p(\eta ; m, T) \wedge p(\eta ; \tilde{m}, \tilde{T}) d \eta
&= \int_0^{T \wedge \tilde{T}} \eta^{s-1} \frac{(m)^s}{\Gamma(s)} \Big( \frac{1}{\gamma_s(m T)} \wedge \frac{1}{\gamma_s(m \tilde{T})} \Big) d \eta \\
&= \int_0^{T \wedge \tilde{T}} \eta^{s-1} \frac{(m)^s}{\Gamma(s)} \frac{1}{\gamma_s(m(T \vee \tilde{T}) )} \\
&= \frac{\gamma_s(m(T \wedge \tilde{T}) )}{\gamma_s(m(T \vee \tilde{T}) )}
\end{flalign*}
as required.
\item \paragraph{Case $m < \tilde{m}$. } For $0 < \eta \leq T \wedge \tilde{T}$, 
\begin{flalign*}
p(\eta ; m, T) \leq p(\eta ; \tilde{m}, \tilde{T}) &\Leftrightarrow \exp((\tilde{m}-m)\eta) \leq \Big(\frac{\tilde{m}}{m}\Big)^s \frac{\gamma_{s}(m T)}{\gamma_{s}(\tilde{m} \tilde{T})} 
\Leftrightarrow \eta \leq \frac{ \log \Big(\Big(\frac{\tilde{m}}{m}\Big)^s \frac{\gamma_{s}(m T)}{\gamma_{s}(\tilde{m} \tilde{T})}\Big)\Big)}{\tilde{m}-m} =: K. 
\end{flalign*}
Denote $\tilde{K} := \big( 0 \vee K \big) \wedge \big( T \wedge \tilde{T} \big) $, such that 
\begin{flalign*}
\tilde{K} := \left\{\begin{matrix}
0 & when \ K \leq 0 \\ 
K &  when \ 0 < K \leq T \wedge \tilde{T} \\
 T \wedge \tilde{T} & when \  T \wedge \tilde{T} < K.
\end{matrix}\right.
\end{flalign*}
This gives
\begin{flalign*}
\int_0^{T \wedge \tilde{T}} p(\eta ; m, T) \wedge p(\eta ; \tilde{m}, \tilde{T}) d \eta
&= \int_0^{\tilde{K}} p(\eta ; m, T) d \eta + \int_{\tilde{K}}^{T \wedge \tilde{T}} p(\eta ; \tilde{m}, \tilde{T}) d \eta \\
&= \Big( \frac{\Gamma(s) \gamma_{s}(m \tilde{K})}{(m)^{s}} \Big) \Big( \frac{(m)^{s}}{\Gamma(s) \gamma_{s}(m T)}\Big) \\
& \qquad \qquad + \Big( \frac{\Gamma(s) \big(\gamma_{s}(\tilde{m} (T \wedge \tilde{T}))-\gamma_{s}(\tilde{m} K)\big)}{(\tilde{m})^{s}} \Big) \Big( \frac{(\tilde{m})^{s}}{\Gamma(s) \gamma_{s}(\tilde{m} \tilde{T}))}\Big) \\
&= \frac{\gamma_{s}(m \tilde{K})}{\gamma_{s}(m T)} + \frac{\gamma_{s}(\tilde{m} (T \wedge \tilde{T}))-\gamma_{s}(\tilde{m} \tilde{K})}{\gamma_{s}(\tilde{m} \tilde{T})}.
\end{flalign*}
as required. 
\item \paragraph{Case $m > \tilde{m}$. } Follows from the $m < \tilde{m}$ case by symmetry. 
\end{proof}

\section{Proofs} \label{appendices:proofs}
\subsection{Marginal $\beta$ prior and posterior densities}

\begin{proof}[Proof of Proposition \ref{prop:marginal_beta_posterior}]
\item \paragraph{Marginal Prior on $\beta$. }
Let $\pi_{\eta}$ denote the component-wise prior density of each $\eta_j$ based on the Half-$t(\nu)$ distribution, and
$\mathcal{N}( \cdot; \mu, \Sigma )$ the density of a multivariate Normal distribution with mean $\mu$ and $\Sigma$.
It suffices to calculate the marginal prior for each $\beta_j$ given $\xi, \sigma^2$, by marginalizing over each $\eta_j$.
We obtain,
\begin{flalign}
\pi(\beta_j | \xi, \sigma^2) &= \int_{0}^\infty \mathcal{N} \Big( \beta_j; 0, \frac{\sigma^2}{\xi \eta_j} \Big) \pi_{\eta}(\eta_j) d\eta_j \\
&\propto \int_{0}^\infty \frac{ \sqrt{\xi \eta_j} }{\sqrt{2 \pi \sigma^2}} \exp \Big( - \frac{\xi \beta_j^2}{2\sigma^2} \eta_j \Big) \frac{1}{\eta_j^{\frac{2-\nu}{2}}(1+\nu \eta_j)^{\frac{\nu + 1}{2}}} d\eta_j \\
&\propto \int_{0}^\infty  \frac{1}{\eta_j^{\frac{1-\nu}{2}}(1+\nu \eta_j)^{\frac{\nu + 1}{2}}} \exp \Big( - \frac{\xi \beta_j^2}{2\sigma^2} \eta_j \Big)  d\eta_j \\
&= \int_0^\infty \frac{1}{\big(\frac{t}{\nu}\big)^{\frac{1-\nu}{2}}(1+t)^{\frac{\nu + 1}{2}}} \exp \Big( -\frac{\xi \beta_j^2}{2\sigma^2 \nu} t\Big) dt \frac{1}{\nu} \\
&= \frac{\Gamma(\frac{\nu + 1}{2})}{\nu^{\frac{\nu + 1}{2}}} U \Big( \frac{\nu+1}{2}, 1, \frac{\xi \beta_j^2}{2\sigma^2 \nu} \Big) \\
&\propto U \Big( \frac{\nu+1}{2}, 1, \frac{\xi \beta_j^2}{2\sigma^2 \nu} \Big) \label{eq:prop1proofeq1},
\end{flalign}
where \eqref{eq:prop1proofeq1} follows from the definition of the confluent hypergeometric function of the second kind: $U(a,b,z) := \frac{1}{\Gamma(a)} \int_0^\infty x^{a-1}(1+x)^{b-a-1}e^{-zx}dx$ 
for any $a, b, z>0$. By the asymptotic expansion of the confluent hypergeometric function of the second kind \citep[Section 13.1]{abramowitz1964handbook}, we obtain
\begin{align} \label{eq:prop1proofeqmarginal_beta_prior}
\pi(\beta_j | \xi, \sigma^2) \propto U \Big( \frac{\nu+1}{2}, 1, \frac{\xi \beta_j^2}{2\sigma^2 \nu} \Big) \asymp \begin{cases}
-\frac{1}{\Gamma(\frac{1+\nu}{2})} \log \Big( \frac{\xi \beta_j^2}{2 \sigma^2 \nu} \Big) & \asymp - \log (|\beta_j|) \text{ for } |\beta_j| \rightarrow 0 \\
\Big( \frac{\xi \beta_j^2}{2 \sigma^2 \nu} \Big)^{-\frac{1+\nu}{2}} & \asymp  |\beta_j|^{-(1+\nu)} \text{ for } |\beta_j| \rightarrow +\infty 
\end{cases}
\end{align}
as required for \eqref{eq:marginal_beta_prior}. 
\item \paragraph{Marginal Posterior of $\beta$.}
Let $\pi_\xi$ denote the prior density on $\xi$ and $\pi_{\sigma^2}$ the $\InvGamma(\frac{a_0}{2}, \frac{b_0}{2})$ prior density on $\sigma^2$. Then,
\begin{flalign}
\pi(\beta | \xi, \sigma^2, y) \propto \mathcal{N}( y; X\beta, \sigma^2 I_n ) \Big( \prod_{j=1}^p \pi( \beta_j | \xi, \sigma^2 ) \Big) \propto \mathcal{N}( y; X\beta, \sigma^2 I_n ) \prod_{j=1}^p U \Big( \frac{1+\nu}{2}, 1, \frac{\xi \beta^2_j}{2 \sigma^2 \nu} \Big) \label{eq:prop1proofmarginal_beta_fixed_xi_sigma2}
\end{flalign}
as required for \eqref{eq:marginal_beta_fixed_xi_sigma2}. 
Equations \eqref{eq:prop1proofeqmarginal_beta_prior} 
and \eqref{eq:prop1proofmarginal_beta_fixed_xi_sigma2} directly give 
\begin{equation}
\pi (\beta | \xi, \sigma^2, y) \asymp \prod_{j=1}^p U \Big( \frac{\nu+1}{2}, 1, \frac{\xi \beta_j^2}{2\sigma^2 \nu} \Big)  \asymp - \prod_{j=1}^p \log | \beta_j | \text{ for } \ \|\beta \| \rightarrow 0.
\end{equation}
\item \paragraph{Gradient of negative log-density of posterior of $\beta$.} 
From \eqref{eq:prop1proofmarginal_beta_fixed_xi_sigma2}, we obtain
\begin{flalign}
-\frac{\partial}{\partial \beta_j} \Big( \log \pi(\beta | \xi, \sigma^2, y) \Big) &= 
-\frac{\partial}{\partial \beta_j} \bigg( - \frac{1}{2 \sigma^2} \|y-X \beta \|_2^2 + \sum_{j=1}^p \log U \Big( \frac{1+\nu}{2}, 1, \frac{\xi \beta^2_j}{2 \sigma^2 \nu} \Big)  \bigg) \\
&= -\Big[\frac{1}{\sigma^2} X^T (y-X \beta)\Big]_j -  \frac{ \frac{d}{d \beta_j} U \Big( \frac{1+\nu}{2}, 1, \frac{\xi \beta^2_j}{2 \sigma^2 \nu} \Big)}{U \Big( \frac{1+\nu}{2}, 1, \frac{\xi \beta^2_j}{2 \sigma^2 \nu} \Big)}. \label{eq:prop1gradnegativelogdensity}
\end{flalign}
Recall  \citep[Section 13.4]{abramowitz1964handbook} for $a,b > 0$, 
\begin{align}
\frac{d}{dz} U(a,b,z) \overset{ \forall z > 0}{=} -a U(a+1,b+1,z), \ 
U(a,2,z) \overset{z \rightarrow 0^+}{\asymp} \frac{1}{\Gamma(a)} \frac{1}{z}, \
U(a,1,z) \overset{z \rightarrow 0^+}{\asymp} -\frac{1}{\Gamma(a)} \log(z).
\end{align}
This gives 
\begin{flalign}
\frac{ \frac{d}{d \beta_j} U \Big( \frac{1+\nu}{2}, 1, \frac{\xi \beta^2_j}{2 \sigma^2 \nu} \Big)}{U \Big( \frac{1+\nu}{2}, 1, \frac{\xi \beta^2_j}{2 \sigma^2 \nu} \Big)} &= 
- \frac{1+\nu}{2} \frac{U \Big( \frac{1+\nu}{2}+1, 1+1, \frac{\xi \beta^2_j}{2 \sigma^2 \nu} \Big)}{U \Big( \frac{1+\nu}{2}, 1, \frac{\xi \beta^2_j}{2 \sigma^2 \nu} \Big)} \frac{\xi \beta_j}{ \sigma^2 \nu} \overset{\beta_j \rightarrow 0}{\asymp} \frac{1}{ \beta_j \log | \beta_j |}.
\end{flalign}
Therefore by \eqref{eq:prop1gradnegativelogdensity}, $-\frac{\partial}{\partial \beta_j} \Big( \log \pi(\beta | \xi, \sigma^2, y) \Big) \overset{\beta_j \rightarrow 0}{\asymp} - \frac{1}{ \beta_j \log | \beta_j |}$ as required. 
\end{proof}

\begin{proof}[Proof of Corollary \ref{cor:marginal_beta_posterior_heavy_tails}]
By Proposition \ref{prop:marginal_beta_posterior}, 
\begin{flalign}
\pi (\beta | \xi, \sigma^2, y) = \frac{1}{Z_{\xi, \sigma^2, \nu}} \mathcal{N}( y; X\beta, \sigma^2 I_n ) \prod_{j=1}^p U \Big( \frac{1+\nu}{2}, 1, \frac{\xi \beta^2_j}{2 \sigma^2 \nu} \Big),
\end{flalign}
where 
$Z_{\xi, \sigma^2, \nu}:= \int_{\mathbb{R}^p} \mathcal{N}( y; X\beta, \sigma^2 I_n ) \prod_{j=1}^p U \Big( \frac{1+\nu}{2}, 1, \frac{\xi \beta^2_j}{2 \sigma^2 \nu} \Big) d \beta$
is the normalizing constant. As $p >n$, there exists some $e^{\perp} \in \mathbb{R}^p$ such that $Xe^{\perp}=0$. 
When $e^{\perp}_j=0$ for some $j=1,\ldots,p$, the posterior density is infinite
by Proposition \ref{prop:marginal_beta_posterior}. 
Consider $e^{\perp} \in \mathbb{R}^p$ 
such that each $e^{\perp}_j \in \mathbb{R} $ is non-zero for $j=1,\ldots,p$. 
For any such fixed such $e^{\perp}$ and large $\lambda>0$, we obtain
\begin{flalign}
\pi(\lambda e^{\perp} | \xi, \sigma^2, y) &= \frac{1}{Z_{\xi, \sigma^2, \nu}} \mathcal{N}( y; 0, \sigma^2 I_n ) \prod_{j=1}^p U \Big( \frac{1+\nu}{2}, 1, \frac{\xi \lambda^2 (e^{\perp}_j)^2}{2 \sigma^2 \nu} \Big) \\
&\propto \prod_{j=1}^p U \Big( \frac{1+\nu}{2}, 1, \frac{\xi \lambda^2 (e^{\perp}_j)^2}{2 \sigma^2 \nu} \Big) \\
&\asymp \prod_{j=1}^p \Big( \frac{\xi \lambda^2 (e^{\perp}_j)^2}{2 \sigma^2 \nu} \Big)^{-\frac{1+\nu}{2}} \label{eq:prop1proofeq2} \\
&\asymp \lambda^{-p(1+\nu)}, 
\end{flalign}
where \eqref{eq:prop1proofeq2} follows from the asymptotic expansion of the 
confluent hypergeometric function of the second kind \citep[Section 13.1]{abramowitz1964handbook}. 
\end{proof}

\subsection{Geometric Ergodicity}
\paragraph{Initial technical results.} 
We first record some technical results. 
\begin{Lemma} \label{lemma:matrix_row_bound} Consider any matrix $A$ such that $A=\begin{bmatrix} B \\ C \end{bmatrix}$
for submatrices $B$ and $C$. Then $\|A\|_2 \leq \|B\|_2 + \|C\|_2$, where $\| \cdot \|_2$ is the spectral norm of a matrix. 
\end{Lemma}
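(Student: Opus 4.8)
The plan is to work directly from the operator-norm (variational) definition of the spectral norm,
\[
\|A\|_2 = \sup_{\|x\|_2 = 1} \|Ax\|_2,
\]
exploiting the fact that stacking $B$ above $C$ merely concatenates the images $Bx$ and $Cx$. Note that for this to make sense $B$ and $C$ must share the same number of columns, so that $Bx$ and $Cx$ are defined for a common $x$; this is implicit in writing $A = \begin{bmatrix} B \\ C \end{bmatrix}$.

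First I would fix an arbitrary $x$ and observe that $Ax = \begin{bmatrix} Bx \\ Cx \end{bmatrix}$, so that the squared Euclidean norm splits additively over the two disjoint blocks of rows:
\[
\|Ax\|_2^2 = \|Bx\|_2^2 + \|Cx\|_2^2.
\]
Next I would bound each summand using the definition of the spectral norms of the blocks, namely $\|Bx\|_2 \le \|B\|_2 \|x\|_2$ and $\|Cx\|_2 \le \|C\|_2 \|x\|_2$. Substituting these and applying the elementary inequality $a^2 + b^2 \le (a+b)^2$, valid for all $a,b \ge 0$, gives
\[
\|Ax\|_2^2 \le \big(\|B\|_2^2 + \|C\|_2^2\big)\|x\|_2^2 \le \big(\|B\|_2 + \|C\|_2\big)^2 \|x\|_2^2 .
\]
Taking square roots and then the supremum over unit vectors $x$ yields $\|A\|_2 \le \|B\|_2 + \|C\|_2$, as claimed.

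There is no genuine obstacle in this argument; it is entirely elementary. The only point deserving a moment's care is the additive splitting of $\|Ax\|_2^2$, which relies on $B$ and $C$ occupying disjoint sets of rows of $A$ (so the coordinates of $Bx$ and $Cx$ do not overlap) while sharing the same columns. I would remark in passing that the identical argument extends verbatim to any finite vertical concatenation, giving $\bigl\| \begin{bmatrix} B_1 \\ \vdots \\ B_k \end{bmatrix} \bigr\|_2 \le \sum_{i=1}^k \|B_i\|_2$, which is the form in which the lemma is presumably invoked later when bounding a generalized ridge regression operator built from several stacked blocks.
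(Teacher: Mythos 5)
Your proof is correct and follows essentially the same route as the paper's: both split $\|Ax\|_2^2 = \|Bx\|_2^2 + \|Cx\|_2^2$ over the row blocks, bound via the block spectral norms, apply $a^2 + b^2 \leq (a+b)^2$, and take the supremum over unit vectors. The observations about shared column dimension and the extension to $k$ stacked blocks are fine additions but do not change the argument.
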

\begin{proof}
Take any vector $v$ of length equal to the number of columns of $A$ with $\| v \|_2 =1$. Then
\begin{flalign}
\| Av \|_2^2 = \Big\| \begin{bmatrix} B \\ C \end{bmatrix} v \Big\|_2^2 = \| Bv \|_2^2 + \| Cv \|_2^2 \leq \|B\|_2^2 + \|C\|_2^2.
\end{flalign}
Taking the supremum over all such $v$ with $\| v \|_2 =1$ gives
$\|A\|_2^2 \leq \|B\|_2^2 + \|C\|_2^2 \leq (\|B\|_2 + \|C\|_2)^{2}$.
\end{proof}

\begin{Lemma} \label{lemma:woodbury} (Sherman--Morrison--Woodbury formula \citep{Hager1989updatingSIAM})
For any $p\times p$ invertible matrix $A$, $p\times n$ matrix $U$ 
and $n\times p$ matrix $V$, 
\begin{equation} \label{eq:woodbury}
(A+UV)^{-1}=A^{-1} - A^{-1}U(I_n+V A^{-1} U )^{-1} V A^{-1}.
\end{equation}
When $U=V^T=x \in \mathbb{R}^p$, we obtain 
$(A+xx^T)^{-1}=A^{-1} - A^{-1} x  x^T A^{-1}/(1+x^T A^{-1} x)$.
\end{Lemma}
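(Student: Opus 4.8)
The plan is to verify the stated identity by direct multiplication, checking that the proposed right-hand side is a genuine inverse of $A + UV$. This is the most transparent route for an algebraic matrix identity of this kind and avoids any appeal to block-matrix factorizations. Throughout I assume, as the formula implicitly requires, that both $A$ and $I_n + VA^{-1}U$ are invertible, so that every expression below is well defined.

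First I would set $M := A^{-1} - A^{-1}U(I_n + VA^{-1}U)^{-1}VA^{-1}$ and expand the product $M(A+UV)$. Using $A^{-1}A = I_p$ on the leading term, this yields
\[
M(A+UV) = I_p + A^{-1}UV - A^{-1}U(I_n+VA^{-1}U)^{-1}V - A^{-1}U(I_n+VA^{-1}U)^{-1}VA^{-1}UV.
\]
Next I would pull the common left factor $A^{-1}U$ and the common right factor $V$ out of the last three terms, writing $A^{-1}UV = A^{-1}U\,I_n\,V$ so that the entire correction becomes $A^{-1}U\,[\,\cdots\,]\,V$ with the bracketed $n\times n$ matrix equal to $I_n - (I_n+VA^{-1}U)^{-1} - (I_n+VA^{-1}U)^{-1}VA^{-1}U$. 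Grouping the final two terms as $-(I_n+VA^{-1}U)^{-1}(I_n + VA^{-1}U) = -I_n$ shows the bracket collapses to $I_n - I_n = 0$, whence $M(A+UV) = I_p$. Since $A+UV$ is a square $p\times p$ matrix, a left inverse is automatically two-sided, so $M = (A+UV)^{-1}$, which is exactly the claimed formula.

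Finally, the rank-one specialization follows by taking $n = 1$, $U = x$, and $V = x^T$: then $I_n + VA^{-1}U$ collapses to the scalar $1 + x^TA^{-1}x$, its inverse is the ordinary reciprocal, and the identity reads $(A+xx^T)^{-1} = A^{-1} - A^{-1}xx^TA^{-1}/(1+x^TA^{-1}x)$. There is no substantive obstacle here, as this is a standard identity; the only care required is bookkeeping of the inner dimensions (noting that $VA^{-1}U$ is $n\times n$) and explicitly invoking invertibility of $I_n + VA^{-1}U$, which is what legitimizes the cancellation in the bracket.
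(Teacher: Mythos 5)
Your verification is correct, and it is worth noting that the paper itself offers no proof of this lemma at all: it is stated as a known result with a citation to Hager (1989), so there is no in-paper argument to compare against. Your direct multiplicative check is the standard self-contained route, and every step goes through: the expansion of $M(A+UV)$, the factorization of the correction as $A^{-1}U[\,\cdot\,]V$, the collapse of the bracket via $-(I_n+VA^{-1}U)^{-1}(I_n+VA^{-1}U)=-I_n$, and the observation that a left inverse of a square matrix is automatically two-sided. You are also right to make explicit the hypothesis that $I_n+VA^{-1}U$ is invertible, which the paper's statement leaves implicit (its ``for any $U$, $V$'' phrasing is slightly too generous, since otherwise the right-hand side is undefined); in the paper's actual applications this is harmless, because there the matrix takes the form $I_n+XR^{-1}X^T$ with $R$ positive diagonal, which is positive definite. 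Likewise your rank-one specialization with $n=1$, $U=x$, $V=x^T$ correctly reduces $I_n+VA^{-1}U$ to the scalar $1+x^TA^{-1}x$, matching the paper's usage in Lemma \ref{lemma:uniform_bound_mean}. No gaps.
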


\begin{Lemma}  \label{lemma:uniform_bound_mean} 
(A uniform bound on the full conditional mean of $\beta$).
Let $X$ be an $n\times p$ matrix with columns $x_i \in \mathbb{R}^n$ for $i=1, \ldots, p$, 
and let $R=\text{Diag}(r_1,\ldots,r_p)$ be a $p\times p$ diagonal matrix with positive entries. Then
\begin{equation}
\sup_{r_1, \ldots, r_p>0} \|(X^TX + R)^{-1} X^T\|_2 \leq  C_p,
\end{equation}
for some $C_p < \infty$, where $\| \cdot \|_2$ is the spectral norm of a matrix. 
That is, the spectral norm of $(X^TX + R)^{-1} X^T$ is uniformly bounded over all positive diagonal matrices $R$.
\end{Lemma}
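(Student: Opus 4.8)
The plan is to read this quantity as a generalized ridge regression operator and control it through the associated optimization problem. Since the spectral norm equals $\sup_{\|y\|_2=1}\|(X^TX+R)^{-1}X^Ty\|_2$, I would fix a unit vector $y\in\mathbb{R}^n$, set $\beta^\star:=(X^TX+R)^{-1}X^Ty$, and observe that $\beta^\star$ is exactly the minimizer of $g(\beta)=\|y-X\beta\|_2^2+\beta^TR\beta$. Comparing with $\beta=0$ gives $g(\beta^\star)\le g(0)=\|y\|_2^2=1$, so both $(\beta^\star)^TR\beta^\star\le 1$ and $\|y-X\beta^\star\|_2\le 1$; the latter yields $\|X\beta^\star\|_2\le\|y\|_2+\|y-X\beta^\star\|_2\le 2$, a bound free of $R$. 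I note that when $X$ has full column rank (the regime $n\ge p$), the result is immediate: with $A=X^TX\succ0$ one has $(A+R)^{-1}A(A+R)^{-1}\preceq A^{-1}$ by operator monotonicity of the inverse, so the norm is at most $1/\sigma_{\min}(X)$. The substance is therefore the rank-deficient case $p>n$ of interest here, where $A$ is singular.

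First I would control the component of $\beta^\star$ in the row space of $X$. Writing $\beta^\star=p^\star+q^\star$ with $p^\star$ the orthogonal projection of $\beta^\star$ onto $\mathrm{range}(X^T)=(\ker X)^\perp$ and $q^\star\in\ker X$, one has $X\beta^\star=Xp^\star$, and since $X$ is injective on its row space with smallest nonzero singular value $\sigma_{\min}^+(X)>0$, it follows that $\|p^\star\|_2\le\|Xp^\star\|_2/\sigma_{\min}^+(X)\le 2/\sigma_{\min}^+(X)$, uniformly in $R$. Everything then reduces to a uniform bound on the kernel component $q^\star$, and this is where the difficulty lies: as entries of $R$ tend to zero, $(X^TX+R)^{-1}$ blows up precisely along directions of $\ker X$, so the claim amounts to showing that the right factor $X^T$ annihilates the offending directions fast enough.

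To handle $q^\star$ I would use the first-order condition $(X^TX+R)\beta^\star=X^Ty$. Projecting onto $\ker X$ annihilates both $X^TX\beta^\star$ and $X^Ty$ (which lie in the row space), leaving $QR\beta^\star=0$, where $Q$ is the orthogonal projection onto $\ker X$. Hence $\beta^\star$ lies in $V_R:=\{v:QRv=0\}$, the $R$-orthogonal complement of $\ker X$, and $q^\star=-\Pi_R p^\star$ where $\Pi_R$ is the oblique $R$-orthogonal projection onto $\ker X$ along $V_R$. A key non-degeneracy observation is that $\ker X$ and $V_R$ are transversal for every diagonal $R\succ0$: a nonzero $v\in\ker X\cap V_R$ would satisfy $Rv\in(\ker X)^\perp$, hence $v^TRv=\langle v,Rv\rangle=0$, impossible since $R\succ0$. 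This rules out exact alignment of $\ker X$ with the complementary subspace for each fixed $R$, but by itself it is only a pointwise (qualitative) statement.

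The remaining and main obstacle is to upgrade this pointwise transversality to a bound uniform over the non-compact family of all diagonal $R\succ0$. I would do so by a compactness argument: the map $r\mapsto\|(X^TX+\mathrm{Diag}(r))^{-1}X^T\|_2$ is continuous on $(0,\infty)^p$, and it suffices to show it is locally bounded at every point of the compactified cube $[0,\infty]^p$, after which a finite subcover gives a global finite bound $C_p$. Coordinates sent to $+\infty$ only shrink the corresponding components and reduce matters to a subproblem with those columns removed, so they are harmless, and at interior or finite-positive boundary points the matrix stays invertible and $f$ is continuous. The delicate faces are those where a subset $S$ of coordinates vanishes with $\ker X\cap\mathrm{span}\{e_j:j\in S\}\ne\{0\}$; there I would use the adjugate/Cramér representation of $(X^TX+R)^{-1}$ to exhibit the cancellation between the vanishing determinant of $X^TX+R$ and the numerators carrying the factor $X^T$, showing the limit is finite. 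I expect this last step, quantifying that the blow-up of $(X^TX+R)^{-1}$ along $\ker X$ is exactly matched by the $X^T$ factor on the degenerate faces, to be the crux of the argument.
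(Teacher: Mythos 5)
Your preparatory reductions are correct and genuinely relevant: the ridge-objective comparison giving $\|X\beta^\star\|_2\le 2$ uniformly in $R$, the bound $\|p^\star\|_2\le 2/\sigma^{+}_{\min}(X)$ on the row-space component, the first-order identity $R\beta^\star=X^T(y-X\beta^\star)\in(\ker X)^\perp$, and the transversality $\ker X\cap V_R=\{0\}$ all check out. But the proof stops exactly where the lemma begins. You reduce everything to a uniform bound on the oblique projection carrying $p^\star$ to $q^\star$, control it only pointwise in $R$, and then the decisive step---local boundedness of $f(r)=\|(X^TX+\mathrm{Diag}(r))^{-1}X^T\|_2$ near the degenerate faces of $[0,\infty]^p$ via an ``adjugate/Cram\'er cancellation''---is announced rather than carried out. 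This is not a routine verification: on such a face both $\det(X^TX+\mathrm{Diag}(r))$ and the relevant entries of $\mathrm{adj}(X^TX+\mathrm{Diag}(r))\,X^T$ vanish, and boundedness of their ratio on a full neighborhood inside the open orthant is a quantitative statement about two vanishing polynomials that is essentially equivalent to the lemma itself. Worse, a face-by-face limit computation cannot suffice for your finite-subcover argument: any neighborhood of the face $\{r_j=0:j\in S\}$ contains points where the coordinates in $S$ are small at arbitrarily different relative rates (compare $r_1=\epsilon,\,r_2=\epsilon^2$ with $r_1=\epsilon^2,\,r_2=\epsilon$), so the blow-up of $(X^TX+R)^{-1}$ along $\ker X$ is multi-scale, and the compactness scheme demands a bound uniform over all these regimes, which the sketched cancellation does not supply. (The ``harmless'' treatment of coordinates tending to $+\infty$ would likewise need to be made uniform on neighborhoods rather than along sequences, but that is minor by comparison.)

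For contrast, the paper closes precisely this gap with an induction on the number of columns that automatically handles the multi-scale degeneracy: by permutation invariance one may assume $0<r_1\le\cdots\le r_p$; the Woodbury identity gives $(X^TX+R)^{-1}X^T=R^{-1}X^T(I_n+XR^{-1}X^T)^{-1}$; Sherman--Morrison peels off the column $x_p$ carrying the \emph{largest} entry $r_p$; and after writing $x_p=X_{p-1}a_p+b_p$ with $b_p$ orthogonal to the span of the remaining columns, the ordering yields the key inequality $\|R_{p-1}\|_2/r_p\le 1$, producing the recurrence
\begin{equation}
C_p \le C_{p-1}\Big(1+\|x_p\|_2\big(\|a_p\|_2\,C_{p-1}+\tfrac{\mathbbm{1}_{b_p\neq 0}}{\|b_p\|_2}\big)\Big)+\|a_p\|_2\,C_{p-1}+\tfrac{\mathbbm{1}_{b_p\neq 0}}{\|b_p\|_2},
\end{equation}
which is finite by induction from the explicit $p=1$ case. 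If you wish to rescue your route, you must inject some quantitative input of this kind---for instance, ordering the coordinates and inducting on the set of vanishing ones, or invoking a \L{}ojasiewicz-type inequality to justify the polynomial cancellation on whole neighborhoods; as written, the crux of the lemma remains unproven.
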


In an earlier version of the present manuscript, we had explicitly
characterized the uniform upper bound as $\| X^\dagger y \|_2$ for
$X^{\dagger}$ the Moore-Penrose pseudoinverse of $X$, which is not true in
general. Existence of an uniform upper bound has been also established by
\citet{bhattacharya2021geo} using block matrix inversions and induction.

\begin{proof}
Fix any $n \geq 1$. We first consider the case $p=1$, such that $X=x_1 \in \mathbb{R}^n$, $R=r_1$ for some scalar $r_1>0$. We obtain
\begin{equation} \label{eq:one_dim_bound}
\| (X^TX + R)^{-1} X^T \|_2 = \frac{1}{r_1+x_1^T x_1} \| x_1^T \|_2 = \frac{\| x_1 \|_2}{r_1+ \|x_1\|_2^2} \leq 
\frac{\mathrm{I}_{x_1 \neq 0}}{\|x\|_2},
\end{equation}
which is an uniform bound for all $r_1>0$. 

We now consider $p > 1$. We have $X=[x_1 \ldots x_p]$ with $x_i \in \mathbb{R}^n$ for $i=1, \ldots, p$ and 
$R=\text{Diag}(r_1,\ldots,r_p)$ with $r_i >0$ for $i=1, \ldots, p$. For any
$p\times p$ permutation matrix $P$, the matrices $XP^T$ and $PRP^T$ correspond to a reordering of the columns of $X$ and the diagonal entries of $R$. We
have
\begin{equation}
\| \big( (XP^T)^T(XP^T) + PRP^T \big)^{-1} (XP^T)^T\|_2 = \| P(X^TX + R)^{-1}X^T \|_2 = \| (X^TX + R)^{-1}X^T \|_2.
\end{equation}
This implies that $\| (X^TX + R)^{-1}X^T \|_2$ is invariant under any
reordering of the $p$ columns of $X$
and the $p$ diagonal entries of $R$. By considering such reordering, we can assume the diagonal entries of 
$R$ are non-decreasing without loss of generality; that is $R=\text{Diag}(r_1,\ldots,r_p)$ with $0<r_1\leq \ldots\leq r_p$.
Denote $A_p :=(X^TX + R)^{-1}X^T$. Then,
\begin{flalign} 
A_p &= \big( R^{-1} - R^{-1} X^T ( I_n + X R^{-1} X^T )^{-1} X R^{-1} \big) X^T \ \text{by Lemma \ref{lemma:woodbury}} \\
&= R^{-1} X^T - R^{-1} X^T ( I_n - \big( I_n + X R^{-1} X^T )^{-1} \big) \\
&= R^{-1} X^T (I_n + X R^{-1} X^T)^{-1}. \label{eq:matrix_A1}
\end{flalign}
It therefore suffices to show that
\begin{equation} \label{eq:Cp_bound}
C_{p} := \sup_{0<r_1 \leq \ldots \leq r_{p}} \| A_p \|_2 = \sup_{0<r_1 \leq \ldots \leq r_{p}} \|{R}^{-1} {X}^T (I_n+ {X} {R}^{-1} {X}^T)^{-1}\|_2
\end{equation}
is finite for any $p > 1$. For $k=1, \ldots, p-1$, define 
\begin{equation} \label{eq:Ck_bound}
C_{k} := \sup_{0<r_1 \leq \ldots \leq r_{k}} \|A_{k}\|_2 = \sup_{0<r_1 \leq \ldots \leq r_{k}} \|{R}_k^{-1} {X}_k^T (I_n+ {X}_k {R}_k^{-1} {X}_k^T)^{-1}\|_2,
\end{equation}
where $X_k=[x_1 \ldots x_k]$ is the $n\times k$ matrix corresponding to the first $k$ columns of $X$
and ${R}_k = \text{Diag}(r_1, \ldots, r_k)$ with $0<r_1\leq \ldots \leq r_k$ 
is the $k\times k$ diagonal matrix corresponding to the top $k$ diagonal entries of $R$, 
and $A_{k} := R_{k}^{-1} X_{k}^T (I_n + X_{k} R_{k}^{-1} X_{k}^T)^{-1}$. 
Then $X_{p-1} R_{p-1}^{-1} X_{p-1}^T = \sum_{i=1}^{p-1} x_i x_i^T/r_i$.
By Lemma \ref{lemma:woodbury} with $U=V^T=x_{p} / r_p^{1/2}$, 
\begin{flalign}
(I_n+ X R^{-1} X^T)^{-1} &= \Big( (I_n+ X_{p-1} R_{p-1}^{-1} X_{p-1}^T) + \frac{x_p x_p^T}{r_p} \Big)^{-1} \\
&=(I_n+ X_{p-1} R_{p-1}^{-1} X_{p-1}^T)^{-1} \\
&\quad - \frac{(I_n+ X_{p-1} R_{p-1}^{-1} X_{p-1}^T)^{-1} x_{p} x_{p}^T (I_n+ X_{p-1} R_{p-1}^{-1} X_{p-1}^T)^{-1} }
{r_{p} + x_{p}^T (I_n+ X_{p-1} R_{p-1}^{-1} X_{p-1}^T)^{-1} x_{p}}.
\end{flalign}

Substituting this decomposition into \eqref{eq:matrix_A1}, we obtain
\begin{flalign} \label{eq:matrix_A}
A_p = \begin{bmatrix}
R_{p-1}^{-1} X_{p-1}^T \\
r_{p}^{-1} x_{p}^T
\end{bmatrix} (I_n+ X R^{-1} X^T)^{-1} =
\begin{bmatrix}
V_{p-1} \\
v_p^T
\end{bmatrix}
\end{flalign} 
where, after some simplifications, $v_p \in \mathbb{R}^n$ and $V_{p-1}$ is a $p-1$ by $n$ matrix such that
\begin{flalign}
v_p^{T} &:= r_{p}^{-1} x_{p}^T(I_n+ X R^{-1} X^T)^{-1} = 
\frac{x_{p}^T (I_n+ X_{p-1} R_{p-1}^{-1} X_{p-1}^T)^{-1}}{ r_{p} + x_{p}^T (I_n+ X_{p-1} R_{p-1}^{-1} X_{p-1}^T)^{-1} x_{p} },
\label{eq:matrix_part_one} \\
V_{p-1} &:= R_{p-1}^{-1} X_{p-1}^T (I_n+ X R^{-1} X^T)^{-1} = A_{p-1} \Big( I_n- x_{p} v_p^{T} \Big). \label{eq:matrix_part_two}
\end{flalign}

We have $\|V_{p-1}\|_2 \leq C_{p-1} (1 + \|x_p\|_2 \|v_p^{T}\|_2 )$ by \eqref{eq:Ck_bound}, subadditivity
and submultiplicativity. Then by Lemma \ref{lemma:matrix_row_bound}, 
\begin{flalign} 
\| A_p \|_2 &\leq \|V_{p-1}\|_2 + \|v_p^{T}\|_2 \leq C_{p-1} (1 + \|x_p\|_2 \|v_p^{T}\|_2 ) + \|v_p^{T}\|_2. \label{eq:lemma_apply}
\end{flalign}

Therefore to show $C_p$ in \eqref{eq:Cp_bound} is finite, it suffices to bound $\|v_p^{T}\|_2$ uniformly over all $R=\text{Diag}(r_1,\ldots,r_p)$ with 
$0<r_1\leq \ldots\leq r_p$. Note that $x_{p} = X_{p-1} a_{p} + b_{p}$ for some $a_{p} \in \mathbb{R}^{p-1}$ and some
$b_{p} \in \mathbb{R}^n$ such that $X_{p-1}^T b_{p} = 0$. Here
$X_{p-1} a_{p}$ corresponds to the projection of $x_{p}$ to the column space of $X_{p-1}$, and 
$b_{p}$ corresponds to the component of $x_{p}$ that is perpendicular to the column space of $X_{p-1}$. 
Note that $b_{p} = (I_n+ X_{p-1} R_{p-1}^{-1} X_{p-1}^T)b_{p}$. This implies
\begin{flalign}
 (I_n+ X_{p-1} R_{p-1}^{-1} X_{p-1}^T)^{-1} b_{p} &= b_{p}, \\  
b_{p}^T (I_n+ X_{p-1} R_{p-1}^{-1} X_{p-1}^T)^{-1} X_{p-1} a_{p}  &= b_{p}^T X_{p-1} a_{p} = 0 .
\end{flalign}

Therefore by \eqref{eq:matrix_part_one}, 
\begin{flalign}
\| v_p^{T} \|_2 &= \| \frac{a_{p}^T X_{p-1}^T (I_n+ X_{p-1} R_{p-1}^{-1} X_{p-1}^T)^{-1} + b_{p}^T}{ r_{p} + a_{p}^T X_{p-1}^T (I_n+ X_{p-1} R_{p-1}^{-1} X_{p-1}^T)^{-1} X_{p-1} a_{p} + b_{p}^T b_{p} } \|_2 \\
&\leq \big\| \frac{a_{p}^T X_{p-1}^T (I_n+ X_{p-1} R_{p-1}^{-1} X_{p-1}^T)^{-1}}{ r_{p} + a_{p}^T X_{p-1}^T (I_n+ X_{p-1} R_{p-1}^{-1} X_{p-1}^T)^{-1} X_{p-1} a_{p}} \big\|_2 +
\big\| \frac{b_{p}^T}{ r_{p} + b_{p}^T b_{p}} \big\|_2 \ \text{ by subadditivity} \\
&\leq \| \frac{a_{p}^T X_{p-1}^T (I_n+ X_{p-1} R_{p-1}^{-1} X_{p-1}^T)^{-1}}{ r_{p} } \|_2 + \frac{\mathrm{I}_{b_{p} \neq 0}}{\|b_{p}\|_2} \ \text{ as } (I_n+ X_{p-1} R_{p-1}^{-1} X_{p-1}^T)^{-1} \text{ is p.s.d.} \\
&= \| \frac{(R_{p-1} a_{p})^T R_{p-1}^{-1} X_{p-1}^T(I_n+ X_{p-1} R_{p-1}^{-1} X_{p-1}^T)^{-1}}{ r_{p} } \|_2 + \frac{\mathrm{I}_{b_{p} \neq 0}}{\|b_{p}\|_2} \\
&\leq \big\| \frac{R_{p-1}}{r_{p}} \big\|_2 \| a_{p} \|_2 C_{p-1}
+ \frac{\mathrm{I}_{b_{p} \neq 0}}{\|b_{p}\|_2} \ \text{ by submultiplicativity and \eqref{eq:Ck_bound}} \\
& \leq \big\| a_{p} \big\|_2 C_{p-1} + \frac{\mathrm{I}_{b_{p} \neq 0}}{\|b_{p}\|_2}
\ \text{ as $r_{p} \geq r_j$ for all $j = 1,\ldots,p-1$}. \label{eq:matrix_part_one_proof}
\end{flalign}

Note that $x_p, a_p$ and $b_p$  do not depend on entries of the diagonal matrix $R$. 
Taking the supremum over all $R=\text{Diag}(r_1,\ldots,r_p)$ with 
$0<r_1\leq \ldots\leq r_p$ in \eqref{eq:lemma_apply}, by \eqref{eq:matrix_part_one_proof} we obtain the recurrence relation 
\begin{flalign} \label{eq:rec_relation}
C_p \leq C_{p-1} \Big(1 + \|x_p\|_2 \big( \big\| a_{p} \big\|_2 C_{p-1} + \frac{\mathrm{I}_{b_{p} \neq 0}}{\|b_{p}\|_2} \big) \Big) + \Big( \big\| a_{p} \big\|_2 C_{p-1} + \frac{\mathrm{I}_{b_{p} \neq 0}}{\|b_{p}\|_2} \Big)
\end{flalign}
for all $p \geq 1$. As $C_1$ is finite by \eqref{eq:one_dim_bound},
\eqref{eq:rec_relation} implies that $C_{p}$ is finite for all $p$.  
\end{proof}

\begin{Lemma}(Moments of the full conditionals of each $\eta_j$). \label{lemma:eta_moments}
For fixed constants $m>0$ and $\nu>0$, define a distribution on $(0, \infty)$ 
with probability density function 
\begin{equation} \label{eq:eta_full_conditional_pdf}
p(\eta | m) \propto \frac{1}{\eta^{\frac{1-\nu}{2}} (1+ \nu \eta)^{\frac{\nu+1}{2}}} e^{-m \eta}.
\end{equation} 
Then for any $c > \max \{-\frac{\nu+1}{2}, -1 \}$ and $\eta \sim p(\cdot | m) $, we have 
\begin{flalign}
\mathbb{E}[\eta^{c} | m] =\frac{1}{\nu^{c}}  \frac{\Gamma(\frac{1+\nu}{2}+c)}{ \Gamma(\frac{1+\nu}{2})} \frac{U(\frac{1+\nu}{2}+c, 1+c, \frac{m}{\nu})}{U(\frac{1+\nu}{2}, 1, \frac{m}{\nu})}.
\end{flalign}
\end{Lemma}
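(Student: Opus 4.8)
The plan is to evaluate the normalizing constant and the unnormalized $c$-th moment as explicit integrals, to recognize each as an instance of the integral representation $U(a,b,z) = \Gamma(a)^{-1}\int_0^\infty x^{a-1}(1+x)^{b-a-1}e^{-zx}\,dx$ that defines the confluent hypergeometric function of the second kind, and then to form their ratio. The whole computation is driven by the single substitution $t = \nu\eta$, under which the awkward factor $\eta^{-(1-\nu)/2}$ becomes a clean power of $t$ and the term $(1+\nu\eta)^{-(\nu+1)/2}$ becomes $(1+t)^{-(\nu+1)/2}$, exactly the shape appearing in the representation above.

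First I would treat the normalizing constant
\[
Z = \int_0^\infty \eta^{-\frac{1-\nu}{2}}(1+\nu\eta)^{-\frac{\nu+1}{2}} e^{-m\eta}\, d\eta.
\]
Applying $t=\nu\eta$ turns $\eta^{-(1-\nu)/2}$ into $\nu^{(1-\nu)/2} t^{(\nu-1)/2}$ and contributes a Jacobian $\nu^{-1}$, so the integrand becomes a multiple of $t^{(\nu-1)/2}(1+t)^{-(\nu+1)/2} e^{-(m/\nu)t}$. Matching against the representation with $a=\frac{\nu+1}{2}$, $b=1$, $z=\frac{m}{\nu}$ gives $Z = \nu^{-\frac{\nu+1}{2}} \Gamma\!\big(\frac{\nu+1}{2}\big) U\!\big(\frac{\nu+1}{2}, 1, \frac{m}{\nu}\big)$. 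Since this identical computation already appears in the proof of Proposition \ref{prop:slice_sampler_tv}, I would simply reuse it.

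Next I would apply the same substitution to the unnormalized moment $\int_0^\infty \eta^{c}\,\eta^{-\frac{1-\nu}{2}}(1+\nu\eta)^{-\frac{\nu+1}{2}} e^{-m\eta}\, d\eta$. The extra factor $\eta^c=\nu^{-c}t^c$ raises the exponent of $t$ from $\frac{\nu-1}{2}$ to $c+\frac{\nu-1}{2}$ and multiplies the prefactor by $\nu^{-c}$, so matching against the representation now forces $a = c+\frac{\nu+1}{2}$ and $b = c+1$, with $z=\frac{m}{\nu}$ unchanged. This yields $\nu^{-c-\frac{\nu+1}{2}}\Gamma\!\big(c+\frac{\nu+1}{2}\big) U\!\big(c+\frac{\nu+1}{2}, c+1, \frac{m}{\nu}\big)$. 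Dividing by $Z$, the powers of $\nu$ collapse to $\nu^{-c}$ and the Gamma and $U$ factors assemble into precisely the claimed expression, so the identity itself requires only careful bookkeeping of exponents rather than any real ingenuity.

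The main point needing attention — the closest thing to an obstacle — is tracking the domain under which the two uses of the integral representation are legitimate, which is exactly what produces the hypothesis $c > \max\{-\frac{\nu+1}{2},\,-1\}$. Convergence of the moment integral at the origin requires $c+\frac{\nu-1}{2} > -1$, i.e.\ $c > -\frac{\nu+1}{2}$, which coincides with positivity of the first parameter $a=c+\frac{\nu+1}{2}$ of the numerator's $U$; the factor $e^{-(m/\nu)t}$ with $m>0$ secures convergence at infinity. The remaining bound $c>-1$ is what keeps the second parameter $b=c+1$ positive, so that $U\!\big(c+\frac{\nu+1}{2}, c+1, \frac{m}{\nu}\big)$ lies within the range $a,b,z>0$ on which $U$ is defined in this paper. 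Taking the larger of the two lower bounds gives the stated constraint, and no further subtlety arises.
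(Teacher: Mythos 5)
Your proposal is correct and follows essentially the same route as the paper's proof: the substitution $t=\nu\eta$, matching the resulting integrals against the representation $U(a,b,z)=\Gamma(a)^{-1}\int_0^\infty x^{a-1}(1+x)^{b-a-1}e^{-zx}\,dx$ with $(a,b)=\big(c+\tfrac{\nu+1}{2},\,c+1\big)$ for the moment and $(a,b)=\big(\tfrac{\nu+1}{2},\,1\big)$ for the normalizing constant, and taking the ratio. Your added accounting for the hypothesis $c>\max\{-\tfrac{\nu+1}{2},-1\}$ (convergence at the origin versus keeping $b=1+c>0$ in the paper's stated domain for $U$) is a correct and slightly more explicit treatment of a point the paper leaves implicit.
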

\begin{proof}
Recall the definition of the confluent hypergeometric function of the second kind $U(a,b,z) := \frac{1}{\Gamma(a)} \int_0^\infty x^{a-1}(1+x)^{b-a-1} e^{-zx} dx $. We obtain
\begin{flalign}
\int_0^\infty \frac{\eta^{c}}{\eta^{\frac{1-\nu}{2}} (1+ \nu \eta)^{\frac{\nu+1}{2}}} e^{-m \eta} d \eta &= \nu^{-\frac{\nu+1}{2}-c} \int_0^\infty x^{\frac{\nu+1}{2}+c-1}(1+x)^{-\frac{\nu+1}{2}} e^{- \frac{m}{\nu} x} dx \\
&= \nu^{-\frac{\nu+1}{2}-c} \Gamma \Big(\frac{\nu+1}{2}+c \Big) U \Big(\frac{\nu+1}{2}+c, 1+c, \frac{m}{\nu} \Big).
\end{flalign}
Therefore, 
\begin{flalign}
\mathbb{E}[\eta^{c} | m] &=  \frac{\nu^{-\frac{\nu+1}{2}-c} \Gamma \big(\frac{\nu+1}{2}+c \big) U \big(\frac{\nu+1}{2}+c, 1+c, \frac{m}{\nu} \big)}{\nu^{-\frac{\nu+1}{2}} \Gamma \big(\frac{\nu+1}{2} \big) U \big(\frac{\nu+1}{2}, 1, \frac{m}{\nu} \big)} \\
&= \frac{1}{\nu^c} \frac{\Gamma \big(\frac{\nu+1}{2}+c \big) U \big(\frac{\nu+1}{2}+c, 1+c, \frac{m}{\nu} \big)}{\Gamma \big(\frac{\nu+1}{2} \big) U \big(\frac{\nu+1}{2}, 1, \frac{m}{\nu} \big)}.
\end{flalign}
\end{proof}

Ratios of confluent hypergeometric function evaluations, such as the ones
arising in Lemma \ref{lemma:eta_moments},
can be upper-bounded using the following result.

\begin{Lemma} \label{lemma:confluent_ratio_bound}
Let $U$ denote the confluent hypergeometric function of the second kind, and fix $r>0$. 
\begin{enumerate}
\item Let $c>0$. Then 
\begin{equation} \label{eq:confluent_ratio_bound1}
\frac{U \big(r+c, 1+c, x \big)}{ U \big(r, 1, x \big)} < x^{-c} \quad \text{for all } x>0.
\end{equation}
Furthermore, for every $\epsilon > 0$, there exists a $K^{(1)}_{\epsilon, c, r} < \infty$ such that 
\begin{equation} \label{eq:confluent_ratio_bound2}
\frac{U \big(r+c, 1+c, x \big)}{ U \big(r, 1, x \big)} < \epsilon x^{-c} + K^{(1)}_{\epsilon, c, r} \quad \text{for all } x>0.
\end{equation}
\item Let $\max \{ -1, -r \} <c<0$. Then there exists some $K^{(2)}_{c, r} < \infty$ such that 
\begin{equation} \label{eq:confluent_ratio_bound3}
\frac{ U \big(r+c, 1+c, x \big)}{U \big(r, 1, x \big)} < x^{-c} + K^{(2)}_{c, r} \quad \text{for all } x>0.
\end{equation}
\end{enumerate}
\end{Lemma}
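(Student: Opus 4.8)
The plan is to rewrite the ratio $\phi(x) := U(r+c,1+c,x)/U(r,1,x)$ as a normalized moment, so that \eqref{eq:confluent_ratio_bound1} becomes a negative-correlation inequality and the additive refinements \eqref{eq:confluent_ratio_bound2}--\eqref{eq:confluent_ratio_bound3} reduce to the boundary behaviour of $\phi$. Using the integral representation $U(a,b,x)=\Gamma(a)^{-1}\int_0^\infty t^{a-1}(1+t)^{b-a-1}e^{-xt}\,dt$, both $U(r,1,x)$ and $U(r+c,1+c,x)$ share the weight $t^{\,\cdot\,-1}(1+t)^{-r}e^{-xt}$. I would introduce the Gamma$(r,x)$ probability measure $g_x$ with density proportional to $t^{r-1}e^{-xt}$ on $(0,\infty)$, and record $\mathbb{E}_{g_x}[T^{c}]=\frac{\Gamma(r+c)}{\Gamma(r)}x^{-c}$ together with
\[
\phi(x)=\frac{\Gamma(r)}{\Gamma(r+c)}\,\frac{\mathbb{E}_{g_x}[T^{c}(1+T)^{-r}]}{\mathbb{E}_{g_x}[(1+T)^{-r}]},\qquad \frac{\phi(x)}{x^{-c}}=\frac{\mathbb{E}_{g_x}[T^{c}(1+T)^{-r}]}{\mathbb{E}_{g_x}[T^{c}]\,\mathbb{E}_{g_x}[(1+T)^{-r}]}.
\]

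For \eqref{eq:confluent_ratio_bound1} ($c>0$) I set $f(t)=t^{c}$, which is strictly increasing, and $w(t)=(1+t)^{-r}$, which is strictly decreasing. The second identity reads $\phi(x)/x^{-c}=\mathbb{E}_{g_x}[fw]/(\mathbb{E}_{g_x}[f]\,\mathbb{E}_{g_x}[w])$, and Chebyshev's integral (FKG/Harris) inequality---obtained by integrating the pointwise bound $(f(T)-f(T'))(w(T)-w(T'))\le 0$ over two i.i.d.\ copies $T,T'\sim g_x$---gives $\mathbb{E}_{g_x}[fw]<\mathbb{E}_{g_x}[f]\,\mathbb{E}_{g_x}[w]$, strictly, since $g_x$ has full support and $f,w$ are nonconstant. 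Hence $\phi(x)<x^{-c}$ for all $x>0$. The same identity with $c<0$, where $f$ becomes decreasing and correlates \emph{positively} with $w$, gives $\phi(x)>x^{-c}$ and thereby explains why \eqref{eq:confluent_ratio_bound3} must instead carry an additive constant.

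For the refinements I would invoke the standard asymptotics of $U$ as $x\to 0^+$ and $x\to\infty$ \citep[Section 13]{abramowitz1964handbook}. For \eqref{eq:confluent_ratio_bound2} the key point is $\phi(x)=o(x^{-c})$ as $x\to0^+$: since $b=1+c>1$ one has $U(r+c,1+c,x)\sim \frac{\Gamma(c)}{\Gamma(r+c)}x^{-c}$ while $U(r,1,x)\sim-\Gamma(r)^{-1}\log x$, so $\phi(x)/x^{-c}\to 0$. Thus for any $\epsilon>0$ there is $\delta\in(0,1)$ with $\phi(x)\le\epsilon x^{-c}$ on $(0,\delta]$, and on $[\delta,\infty)$ the already-proved bound $\phi(x)<x^{-c}\le\delta^{-c}$ lets me take $K^{(1)}_{\epsilon,c,r}:=\delta^{-c}$. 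For \eqref{eq:confluent_ratio_bound3} ($\max\{-1,-r\}<c<0$) I would show $\phi(x)-x^{-c}$ is bounded: as $x\to0^+$, $b=1+c<1$ forces $U(r+c,1+c,x)\to\Gamma(-c)/\Gamma(r)$ finite while $U(r,1,x)\to\infty$, so $\phi(x)-x^{-c}\to 0$; as $x\to\infty$, the two-term expansion $U(a,b,x)=x^{-a}(1-a(a-b+1)x^{-1}+O(x^{-2}))$ yields $\phi(x)=x^{-c}(1-cr\,x^{-1}+O(x^{-2}))$, whence $\phi(x)-x^{-c}=-cr\,x^{-c-1}+O(x^{-c-2})\to 0$ because $c>-1$. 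A continuous function on $(0,\infty)$ with finite limits at both endpoints is bounded, so $K^{(2)}_{c,r}:=1+\sup_{x>0}\bigl(\phi(x)-x^{-c}\bigr)$ is finite and yields the strict bound.

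The main obstacle is the boundary asymptotics rather than the correlation inequality: one must match the correct leading terms of $U$ across the three regimes $b>1$, $b=1$, $b<1$, and, for \eqref{eq:confluent_ratio_bound3}, extract the \emph{second-order} term at infinity precisely enough to witness the exact cancellation $\phi(x)-x^{-c}\to 0$ (a first-order expansion is inadequate since $x^{-c}\to\infty$ there). Verifying strictness and the uniformity needed to splice the small-$x$ and large-$x$ estimates into a single finite constant is the remaining point requiring care.
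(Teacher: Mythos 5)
Your proof is correct, and it shares the overall architecture of the paper's proof --- a probabilistic monotonicity argument for the strict bound \eqref{eq:confluent_ratio_bound1}, followed by boundary asymptotics spliced together by continuity for the additive constants --- but it differs in two substantive local choices. For \eqref{eq:confluent_ratio_bound1}, the paper substitutes $s=xt$ and compares $\mathbb{E}[(x+S)^{-r}]$ for $S\sim\mathrm{Gamma}(r+c,1)$ against $\tilde{S}\sim\mathrm{Gamma}(r,1)$ via stochastic dominance, whereas you fix the single $\mathrm{Gamma}(r,x)$ measure and apply Chebyshev's integral (FKG) inequality to the increasing function $t^c$ and decreasing function $(1+t)^{-r}$; the two lemmas are interchangeable here, and your covariance formulation has the bonus of immediately showing $\phi(x)>x^{-c}$ when $c<0$, which explains \emph{why} \eqref{eq:confluent_ratio_bound3} genuinely needs the additive constant (a point the paper does not make). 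For \eqref{eq:confluent_ratio_bound2}, the paper is self-contained: it proves $\lim_{x\to0^+}\mathbb{E}[Y_x^c]=0$ directly from the integrals and handles $x>\delta$ by monotonicity of $x\mapsto\mathbb{E}[T_x^c]$ (another stochastic-dominance step), which yields the explicit constant $K^{(1)}_{\epsilon,c,r}=U(r+c,1+c,\delta)/U(r,1,\delta)$; you instead quote the standard small-$x$ asymptotics of $U$ in the $b>1$ versus $b=1$ regimes to get $\phi(x)/x^{-c}\to0$, and then reuse \eqref{eq:confluent_ratio_bound1} itself on $[\delta,\infty)$ (since $c>0$ gives $\phi(x)<x^{-c}\leq\delta^{-c}$ there), which is shorter and avoids the monotonicity-in-$x$ lemma entirely, at the cost of importing the $-\log x$ expansion of $U(a,1,\cdot)$ from the literature rather than deriving the limit from scratch. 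Your treatment of \eqref{eq:confluent_ratio_bound3} is essentially identical to the paper's: both use $U(r+c,1+c,0)$ finite while $U(r,1,x)\to\infty$ at the origin, and the same second-order large-$x$ expansion producing the cancellation $\phi(x)-x^{-c}=-cr\,x^{-c-1}+O(x^{-c-2})\to0$ (the paper packages it as $x^{c+1}(\phi(x)-x^{-c})\to-rc$), followed by continuity on $(0,\infty)$; your explicit margin $K^{(2)}_{c,r}:=1+\sup_{x>0}(\phi(x)-x^{-c})$ correctly secures strictness.
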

\begin{proof} \item \paragraph{Case: $c>0$.} 
By definition, we have 
\begin{flalign*}
x^c \frac{ \Gamma(r+c) U \big(r+c, 1+c, x \big)}{ \Gamma(r) U \big(r, 1, x \big)}  &= \frac{x^c \int_0^\infty t^{r+c-1} (1+t)^{-r} e^{-xt} dt}{\int_0^\infty t^{r-1} (1+t)^{-r} e^{-xt} dt} \\
&= 
\frac{\int_0^\infty s^{r+c-1} (x+s)^{-r} e^{-s} ds}{\int_0^\infty s^{r-1} (x+s)^{-r} e^{-s} ds} \text{ for } s=xt \\
&= \frac{\Gamma(r+c)}{\Gamma(r)} \frac{\mathbb{E}[ (x+S)^{-r} ]}{\mathbb{E}[ (x+\tilde{S})^{-r} ]} \text{ for } S \sim \mathrm{Gamma}(r+c, 1) \text{ and } \tilde{S} \sim \mathrm{Gamma}(r, 1) \\
&< \frac{\Gamma(r+c)}{\Gamma(r)}
\end{flalign*}
where the final equality follows as the distribution $\Gamma(r+c, 1)$ stochastically dominates the distribution $\Gamma(r, 1)$, and hence 
$\mathbb{E}[ (x+S)^{-r} ] < \mathbb{E}[ (x+\tilde{S})^{-r} ]$ for all $x>0$ as $r>0$ and $c>0$. This gives $U \big(r+c, 1+c, x \big) / U \big(r, 1, x \big) < x^{c}$ as required for \eqref{eq:confluent_ratio_bound1}.

We now prove \eqref{eq:confluent_ratio_bound2}. By definition, we have 
\begin{flalign*}
\frac{ \Gamma(r+c) U \big(r+c, 1+c, x \big)}{ \Gamma(r) U \big(r, 1, x \big)}  = \frac{\int_0^\infty t^{r+c-1} (1+t)^{-r} e^{-xt} dt}{\int_0^\infty t^{r-1} (1+t)^{-r} e^{-xt} dt} = \mathbb{E}[T_x^c ] = x^{-c} \mathbb{E}[Y_x^c ]
\end{flalign*}
where $T_x$ is a random variable with density $p_{T_x}(t) \propto t^{r-1}(1+t)^{-r} e^{-xt}$ on $[0, \infty)$, and $Y_x:= xT_x$ is a random variable with density $p_{Y_x}(t) \propto y^{r-1}(x+y)^{-r} e^{-y}$ on $[0, \infty)$. 

\begin{enumerate}
\item Consider $\mathbb{E}[Y_x^c ]$ as $x \rightarrow 0^+$. We obtain 
\begin{flalign*}
\lim_{x \rightarrow 0^+} \mathbb{E}[Y_x^c ] = \lim_{x \rightarrow 0^+}  \frac{\int_0^\infty y^{c + r-1}(x+y)^{-r} e^{-y} dy  }{ \int_0^\infty y^{r-1}(x+y)^{-r} e^{-y} dy } = \frac{\Gamma(c) }{  \lim_{x \rightarrow 0^+} \int_0^\infty \big( \frac{y}{x+y} \big)^r  y^{-1} e^{-y} dy } = 0.
\end{flalign*}
\item Consider $\mathbb{E}[T_x^c ]$ as a function of $x$ for $x >0$. From the density of $T_x$, we note that for all $0 < x_1 < x_2$, $T_{x_1}$ stochastically dominates $T_{x_2}$, and as $c >0$, $T^c_{x_1}$ also stochastically dominates $T^c_{x_2}$. Therefore, $\mathbb{E}[T_x^c ]$ is a decreasing function of $x$ for $x > 0$. 
\end{enumerate}

As $\lim_{x \rightarrow 0^+} \mathbb{E}[Y_x^c ]=0$, for any $\epsilon>0$, there exists $\delta>0$ such that $\mathbb{E}[Y_x^c ] < \frac{\Gamma(r+c) }{\Gamma(r)} \epsilon$ for all $x \leq \delta$. This gives $x^{-c}\mathbb{E}[Y_x^c ] < \frac{\Gamma(r+c) }{\Gamma(r)} \epsilon x^{-c}$ for all $x \leq \delta$. Also for all $x > \delta$, we have $x^{-c}\mathbb{E}[Y_x^c ] = \mathbb{E}[T_x^c ] \leq \mathbb{E}[T_\delta^c ] < \infty $ as $\mathbb{E}[T_x^c ]$ is a decreasing function of $x$ for $x > 0$. Overall, we obtain 
\begin{flalign*}
\frac{ \Gamma(r+c) U \big(r+c, 1+c, x \big)}{ \Gamma(r) U \big(r, 1, x \big)}  &= x^{-c} \mathbb{E}[Y_x^c ] < \frac{\Gamma(r+c) }{\Gamma(r)} \epsilon x^{-c} + \mathbb{E}[T_\delta^c ] \\
\Rightarrow \frac{U \big(r+c, 1+c, x \big)}{ U \big(r, 1, x \big)} &< \epsilon x^{-c} + K^{(1)}_{\epsilon, c, r} 
\end{flalign*} 
for $K^{(1)}_{\epsilon, c, r} = \frac{\Gamma(r)}{\Gamma(r+c)} \mathbb{E}[T_\delta^c ] = \frac{ U \big(r+c, 1+c, \delta \big)}{U \big(r, 1, \delta \big)} < \infty $ as required. 

\item \paragraph{Case: $\max \{ -1, -r \} <c<0$.} Note that $U \big(r+c, 1+c, x \big)$ and $U \big(r, 1, x \big)$ are well-defined as $r, r+c, c+1>0$.
\begin{enumerate}
\item Consider $\frac{ U \big(r+c, 1+c, x \big)}{U \big(r, 1, x \big)} - x^{-c}$ as $x \rightarrow \infty$. We first recall the asymptotic expansion of the confluent hypergeometric function of the second kind \citep[Section 13.5]{abramowitz1964handbook} at infinity,
\begin{equation} \label{eq:asympt_U_expansion}
U(a,b,z) = z^{-a} - a(a-b+1)z^{-(a+1)} + \mathcal{O}( z^{-(a+2)} )
\end{equation}
as $z \rightarrow \infty$ for any $a,b > 0$. 
For large $x > 0$, this gives
\begin{flalign}
x^{c+1} \Big( \frac{ U \big(r+c, 1+c, x \big)}{U \big(r, 1, x \big)} - x^{-c} \Big) &= x^{c+1} \Big( \frac{ x^r U \big(r+c, 1+c, x \big) - x^{r-c} U \big(r, 1, x \big) }{x^r U \big(r, 1, x \big)} \Big) \\
&=x \Big( \frac{ x^{r+c}  U \big(r+c, 1+c, x \big) - x^{r} U \big(r, 1, x \big) }{x^r U \big(r, 1, x \big)} \Big) \\
&=x \Big( \frac{(1 - \frac{(r+c)r}{x}) - (1 - \frac{r^2}{x} ) }{1} + \mathcal{O}(x^{-2}) \Big) \label{eq:apply_asympt_U_expansion} \\
&= -rc + \mathcal{O}(x^{-1})
\end{flalign}
where \eqref{eq:apply_asympt_U_expansion} follows from the asymptotic expansion in \eqref{eq:asympt_U_expansion}. As $c+1>0$, we obtain
\begin{flalign} \label{eq:hyper_geom_asymp1}
\lim_{x \rightarrow \infty} \Big( \frac{ U \big(r+c, 1+c, x \big)}{U \big(r, 1, x \big)} - x^{-c} \Big) = 0 .
\end{flalign}
\item Consider $\frac{ U \big(r+c, 1+c, x \big)}{U \big(r, 1, x \big)} - x^{-c}$ as $x \rightarrow 0^+$. By definition of confluent hypergeometric function of the second kind, we have  
\begin{flalign}
\lim_{x \rightarrow 0^+} \frac{ \Gamma(r+c) U \big(r+c, 1+c, x \big)}{ \Gamma(r) U \big(r, 1, x \big)}  = \lim_{x \rightarrow 0^+} \frac{\int_0^\infty t^{r+c-1}(1+t)^{-r} e^{-xt} dt}{\int_0^\infty t^{r-1}(1+t)^{-r} e^{-xt} dt}.
\end{flalign}
Note that as $r+c >0$ and $c-1 < -1$, 
\begin{equation}
\int_0^\infty t^{r+c-1}(1+t)^{-r} dt = \int_0^\infty \big(\frac{t}{1+t}\big)^r t^{c-1} dt \leq \int_0^1 t^{r+c-1} dt + \int_1^\infty t^{c-1} dt = \frac{1}{r+c} - \frac{1}{c} < \infty .
\end{equation}
Therefore, $\lim_{x \rightarrow 0^+} \int_0^\infty t^{r+c-1}(1+t)^{-r} e^{-xt}dt = \frac{1}{r+c} - \frac{1}{c}  < \infty$. However, we have 
\begin{equation}
\lim_{x \rightarrow 0^+} \int_0^\infty t^{r-1}(1+t)^{-r} e^{-xt} dt  = \lim_{x \rightarrow 0^+} \int_0^\infty \big(\frac{t}{1+t}\big)^r t^{-1} e^{-xt} dt  = \infty .
\end{equation}
Therefore $\lim_{x \rightarrow 0^+} \frac{ \Gamma(r+c) U \big(r+c, 1+c, x \big)}{ \Gamma(r) U \big(r, 1, x \big)}  = 0$. As $c < 0$, we obtain
\begin{equation} \label{eq:hyper_geom_asymp2}
\lim_{x \rightarrow 0^+} \frac{ U \big(r+c, 1+c, x \big)}{U \big(r, 1, x \big)} - x^{-c} = 0.
\end{equation}
\end{enumerate}

From \eqref{eq:hyper_geom_asymp1} and \eqref{eq:hyper_geom_asymp2}, we have $ \lim_{x \rightarrow \infty} \frac{ U \big(r+c, 1+c, x \big)}{U \big(r, 1, x \big)} - x^{-c} = 0$ and $ \lim_{x \rightarrow 0^+} \frac{ U \big(r+c, 1+c, x \big)}{U \big(r, 1, x \big)} - x^{-c} = 0$. As $x \mapsto \frac{ U \big(r+c, 1+c, x \big)}{U \big(r, 1, x \big)} - x^{-c}$ is a continuous function on $(0, \infty)$, this implies that $\frac{ U \big(r+c, 1+c, x \big)}{U \big(r, 1, x \big)} - x^{-c}$ is bounded above for $x \in (0, \infty)$. In particular, there exists some $K^{(2)}_{c, r}  < \infty$ such that 
\begin{equation}
\frac{ U \big(r+c, 1+c, x \big)}{U \big(r, 1, x \big)} - x^{-c} < K^{(2)}_{c, r} 
\end{equation}
for all $x > 0$ as required for \eqref{eq:confluent_ratio_bound3}. 
\end{proof}

\begin{Cor} (Bound on moments of the full conditionals of each $\eta_j$). \label{cor:eta_moments_bound}
Consider a random variable $\eta$ on $(0, \infty)$ with density as in \eqref{eq:eta_full_conditional_pdf} 
in Lemma \ref{lemma:eta_moments} for some $\nu >0$. 

Consider when $c>0$. There for every $\epsilon > 0$, 
there exists some $K^{(1)}_{\epsilon, c, \nu} < \infty$ such that
\begin{flalign}
\mathbb{E}[\eta^{c} | m] \leq \epsilon m^{-c} + K^{(1)}_{\epsilon, c, \nu} \quad \text{for all } m>0.
\end{flalign} 

Consider when $\max \{-1, - \frac{1+\nu}{2} \} < c < 0$. Then, there exists some
$K^{(2)}_{c, \nu} < \infty$ such that
\begin{flalign} \label{eq:eta_positive_moment_bound}
\mathbb{E}[\eta^{c} | m] \leq  \frac{\Gamma(\frac{1+\nu}{2}+c)}{ \Gamma(\frac{1+\nu}{2})} m^{-c} + K^{(2)}_{c, \nu}
 \quad \text{for all } m>0.
\end{flalign}
\end{Cor}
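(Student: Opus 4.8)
The plan is to feed the exact moment formula of Lemma~\ref{lemma:eta_moments} into the ratio bounds of Lemma~\ref{lemma:confluent_ratio_bound}. Throughout I set $r = \frac{1+\nu}{2} > 0$ and evaluate the confluent-function ratio at the argument $x = m/\nu$, so that Lemma~\ref{lemma:eta_moments} reads
\begin{equation*}
\mathbb{E}[\eta^c \mid m] = \frac{1}{\nu^c} \frac{\Gamma(r+c)}{\Gamma(r)} \frac{U(r+c, 1+c, m/\nu)}{U(r, 1, m/\nu)}.
\end{equation*}
Note that the admissibility condition $c > \max\{-\frac{\nu+1}{2}, -1\}$ required by Lemma~\ref{lemma:eta_moments} coincides with the range $\max\{-1, -r\} < c$ appearing in Lemma~\ref{lemma:confluent_ratio_bound}, so the two results match on their domains and no extra restriction on $c$ is introduced.

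For the case $c > 0$, I would apply the $\epsilon$-sharpened bound \eqref{eq:confluent_ratio_bound2} at $x = m/\nu$ with a free parameter $\epsilon' > 0$, giving $U(r+c, 1+c, m/\nu)/U(r, 1, m/\nu) < \epsilon' (m/\nu)^{-c} + K^{(1)}_{\epsilon', c, r}$. Substituting into the display above and using $(m/\nu)^{-c} = \nu^c m^{-c}$, the prefactor $\nu^{-c}$ cancels the $\nu^c$ in the leading term, so the coefficient of $m^{-c}$ becomes $\frac{\Gamma(r+c)}{\Gamma(r)} \epsilon'$. Choosing $\epsilon' = \epsilon\, \Gamma(r)/\Gamma(r+c)$ collapses this coefficient to the desired $\epsilon$, while the residual constant $K^{(1)}_{\epsilon, c, \nu} := \nu^{-c}\frac{\Gamma(r+c)}{\Gamma(r)} K^{(1)}_{\epsilon', c, r}$ is finite.

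For the case $\max\{-1, -\frac{1+\nu}{2}\} < c < 0$, the argument is even more direct: I would apply \eqref{eq:confluent_ratio_bound3} at $x = m/\nu$ to obtain $U(r+c, 1+c, m/\nu)/U(r, 1, m/\nu) < (m/\nu)^{-c} + K^{(2)}_{c, r}$. Substituting and again using $(m/\nu)^{-c} = \nu^c m^{-c}$, the factors $\nu^{-c}$ and $\nu^c$ cancel in the leading term, leaving precisely the stated coefficient $\frac{\Gamma(\frac{1+\nu}{2}+c)}{\Gamma(\frac{1+\nu}{2})}$ on $m^{-c}$, with the finite constant $K^{(2)}_{c, \nu} := \nu^{-c}\frac{\Gamma(r+c)}{\Gamma(r)} K^{(2)}_{c, r}$ absorbing the remainder.

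This corollary presents no genuine obstacle, since the analytic work is carried out entirely in the two preceding lemmas; the only care required is bookkeeping the powers of $\nu$ introduced by the change of argument $x = m/\nu$, and—in the positive-$c$ case—rescaling the free parameter $\epsilon'$ so that the target coefficient of $m^{-c}$ is exactly $\epsilon$ rather than a fixed multiple of it. The one conceptual check worth stating explicitly is that the admissibility ranges of the two lemmas coincide, which is what guarantees the combination is valid across the full stated range of $c$.
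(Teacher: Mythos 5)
Your proposal is correct and follows essentially the same route as the paper's own proof: both substitute the exact moment formula of Lemma \ref{lemma:eta_moments} into Lemma \ref{lemma:confluent_ratio_bound} with $r=\tfrac{1+\nu}{2}$ and $x=m/\nu$, using the cancellation $(m/\nu)^{-c}=\nu^{c}m^{-c}$ and, in the $c>0$ case, rescaling the free parameter to $\epsilon'=\epsilon\,\Gamma(\tfrac{1+\nu}{2})/\Gamma(\tfrac{1+\nu}{2}+c)$ (the paper's $\tilde{\epsilon}$) so the coefficient of $m^{-c}$ is exactly $\epsilon$. Your explicit check that the admissibility ranges of the two lemmas coincide is a point the paper leaves implicit, but it introduces no difference in the argument.
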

\begin{proof}
By Lemma \ref{lemma:eta_moments}, for all $\max \{-1, - \frac{1+\nu}{2} \} < c$,
\begin{flalign} \label{eq:eta_negative_moment_bound}
\mathbb{E}[\eta^{c} | m] =\frac{1}{\nu^{c}}  \frac{\Gamma(\frac{1+\nu}{2}+c)}{ \Gamma(\frac{1+\nu}{2})} \frac{U(\frac{1+\nu}{2}+c, 1+c, \frac{m}{\nu})}{U(\frac{1+\nu}{2}, 1, \frac{m}{\nu})}.
\end{flalign} 

Consider when $c >0$. We apply Lemma \ref{lemma:confluent_ratio_bound} with $r = \frac{1+\nu}{2}$, $x=\frac{m}{\nu}$, and $\tilde{\epsilon}=\epsilon \Big( \frac{\Gamma(\frac{1+\nu}{2}+c)}{ \Gamma(\frac{1+\nu}{2})} \Big)^{-1}$. Then there exists some $ \tilde{K}^{(1)}_{\epsilon, c, \nu} < \infty$ such that
\begin{flalign}
\mathbb{E}[\eta^{c} | m] \leq \frac{1}{\nu^{c}} \frac{\Gamma(\frac{1+\nu}{2}+c)}{ \Gamma(\frac{1+\nu}{2})} \Big(\tilde{\epsilon} \Big(\frac{m}{\nu} \Big)^{-c} + \tilde{K}^{(1)}_{\epsilon, c, \nu} \Big) = \epsilon m^{-c} + K^{(1)}_{{\epsilon}, c, \nu}
\end{flalign}
for $K^{(1)}_{{\epsilon}, c, \nu} = \frac{1}{\nu^{c}}  \frac{\Gamma(\frac{1+\nu}{2}+c)}{ \Gamma(\frac{1+\nu}{2})} \tilde{K}^{(1)}_{\epsilon, c, \nu} < \infty $. 

Consider when $\max \{-1, - \frac{1+\nu}{2} \} < c < 0$. By Lemma \ref{lemma:confluent_ratio_bound} with $r = \frac{1+\nu}{2}$ and $x=\frac{m}{\nu}$, there exists some $\tilde{K}^{(2)}_{c, \nu} < \infty$ such that
\begin{flalign}
\mathbb{E}[\eta^{c} | m] \leq \frac{1}{\nu^{c}}  \frac{\Gamma(\frac{1+\nu}{2}+c)}{ \Gamma(\frac{1+\nu}{2})} \Big( \Big( \frac{m}{\nu} \Big)^{-c} + \tilde{K}^{(2)}_{c, \nu}\Big) = \frac{\Gamma(\frac{1+\nu}{2}+c)}{ \Gamma(\frac{1+\nu}{2})} m^{-c} + K^{(2)}_{c, \nu}
\end{flalign}
for $K^{(2)}_{c, \nu} = \frac{1}{\nu^{c}}  \frac{\Gamma(\frac{1+\nu}{2}+c)}{ \Gamma(\frac{1+\nu}{2})} \tilde{K}^{(2)}_{c, \nu} < \infty $. 
\end{proof}

\begin{Lemma} \label{lemma:normal_inverse_moments}
Let $X \sim \mathcal{N}(\mu, \sigma^2)$ on $\mathbb{R}$. Then, for $c \in (0,1)$, 
\begin{equation}
\mathbb{E}[ |X|^{-c} ] \leq \frac{1}{\sigma^{c} 2^{c/2}} \frac{\Gamma(\frac{1-c}{2})}{\sqrt{\pi}}
\end{equation} 
\end{Lemma}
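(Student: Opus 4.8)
The plan is to show that among all Gaussians of variance $\sigma^2$, the quantity $\mathbb{E}[|X|^{-c}]$ is largest in the centered case $\mu=0$, and then to evaluate that centered expectation explicitly, since the claimed right-hand side is precisely its value. Writing $\phi_\sigma$ and $\Phi_\sigma$ for the density and CDF of $\mathcal{N}(0,\sigma^2)$, I would first fold the integral onto the positive half-line, using that $|x|^{-c}$ and $\phi_\sigma$ are both even:
\[
\mathbb{E}[|X|^{-c}] = \int_0^\infty x^{-c}\big[\phi_\sigma(x-\mu) + \phi_\sigma(x+\mu)\big]\,dx .
\]
Here $c\in(0,1)$ guarantees $x^{-c}$ is integrable near $0$ against the bounded weight, so the expectation is finite.

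The key device for the $\mu$-monotonicity is a layer-cake representation of the decreasing weight: for $x>0$, $x^{-c} = \int_0^\infty c\,r^{-c-1}\,\mathbbm{1}\{x<r\}\,dr$. Substituting this and swapping the order of integration by Tonelli (all integrands being nonnegative) gives
\[
\mathbb{E}[|X|^{-c}] = \int_0^\infty c\,r^{-c-1}\,G(\mu,r)\,dr, \qquad G(\mu,r) := \int_0^r \big[\phi_\sigma(x-\mu)+\phi_\sigma(x+\mu)\big]\,dx = \Phi_\sigma(r-\mu)+\Phi_\sigma(r+\mu)-1,
\]
where the last equality uses $\Phi_\sigma(\mu)+\Phi_\sigma(-\mu)=1$. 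Since the weights $c\,r^{-c-1}$ are nonnegative, it suffices to show $G(\mu,r)\le G(0,r)$ for every fixed $r>0$. Fixing $r$ and setting $H(\mu):=\Phi_\sigma(r+\mu)+\Phi_\sigma(r-\mu)$, I compute $H'(\mu)=\phi_\sigma(r+\mu)-\phi_\sigma(r-\mu)$; for $r,\mu>0$ one has $(r+\mu)^2-(r-\mu)^2=4r\mu>0$, so $|r+\mu|>|r-\mu|$, and hence $H'(\mu)<0$ because $\phi_\sigma$ is strictly decreasing in $|\cdot|$. As $H$ is even, it attains its maximum at $\mu=0$, yielding $G(\mu,r)\le G(0,r)$ and therefore $\mathbb{E}[|X|^{-c}]\le \mathbb{E}_{\mu=0}[|X|^{-c}]$.

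It then remains to evaluate the centered case. Writing $\mathbb{E}_{\mu=0}[|X|^{-c}] = \frac{2}{\sqrt{2\pi}\,\sigma}\int_0^\infty x^{-c}e^{-x^2/(2\sigma^2)}\,dx$ and substituting $t=x^2/(2\sigma^2)$ turns the integral into a Gamma integral $\int_0^\infty t^{(1-c)/2-1}e^{-t}\,dt = \Gamma(\tfrac{1-c}{2})$, convergent precisely because $c<1$. Collecting constants gives $\frac{1}{\sigma^c 2^{c/2}}\frac{\Gamma((1-c)/2)}{\sqrt{\pi}}$, which matches the claimed bound and closes the argument.

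I expect the main obstacle to be the $\mu$-monotonicity step, because it genuinely requires the weight: pointwise in $x$ the sum $\phi_\sigma(x-\mu)+\phi_\sigma(x+\mu)$ is \emph{not} maximized at $\mu=0$ (for $x$ and $\mu$ both large the term $\phi_\sigma(x-\mu)$ far exceeds its centered value), so the inequality only survives after integrating against the decreasing weight $x^{-c}$. The layer-cake reduction to a monotonicity statement for $\Phi_\sigma$ is what makes this tractable; equivalently one could appeal to the fact that the convolution of two symmetric unimodal functions is symmetric unimodal, but the explicit CDF computation keeps the proof self-contained.
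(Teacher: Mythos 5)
Your proof is correct, but it takes a genuinely different route from the paper's. The paper recalls the exact closed form $\mathbb{E}[|X|^{-c}] = \frac{1}{\sigma^{c}2^{c/2}}\frac{\Gamma(\frac{1-c}{2})}{\sqrt{\pi}}\,e^{-z}M\big(\frac{1-c}{2},\frac{1}{2},z\big)$ with $z=\mu^2/(2\sigma^2)$, where $M$ is the confluent hypergeometric function of the first kind, and then shows $e^{-z}M(a,b,z)\le 1$ for $a<b$ by rewriting Kummer's integral representation as $\frac{\Gamma(b)}{\Gamma(a)\Gamma(b-a)}\int_0^1 e^{-zt}(1-t)^{a-1}t^{b-a-1}\,dt$, which is decreasing in $z$ and equals $1$ at $z=0$. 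You reach the same conclusion --- that the expectation is maximized in the centered case, where it equals the stated constant --- but you prove the $\mu$-monotonicity from scratch: fold onto the half-line, apply the layer-cake identity $x^{-c}=\int_0^\infty c\,r^{-c-1}\mathbbm{1}\{x<r\}\,dr$ with Tonelli, and reduce to the elementary fact that $\mu\mapsto \Phi_\sigma(r+\mu)+\Phi_\sigma(r-\mu)$ is maximized at $\mu=0$; the closing Gamma-integral evaluation of the centered moment is routine and your constants check out ($\frac{2}{\sqrt{2\pi}\,\sigma}\cdot\frac{\sigma^{1-c}}{2^{(c+1)/2}}\Gamma(\frac{1-c}{2})=\frac{1}{\sigma^c 2^{c/2}}\frac{\Gamma(\frac{1-c}{2})}{\sqrt{\pi}}$). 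What each buys: your argument is fully self-contained and avoids any special-function input beyond the Gamma function --- and your remark that the pointwise bound $\phi_\sigma(x-\mu)+\phi_\sigma(x+\mu)\le 2\phi_\sigma(x)$ fails, so the decreasing weight is genuinely needed, correctly identifies why the layer-cake step is the crux; the paper's proof is shorter given the known moment formula and keeps the lemma within the confluent-hypergeometric toolkit ($U$, $M$, their asymptotics) that the surrounding drift analysis already relies on, and its monotonicity in $z$ additionally shows the bound degrades gracefully as $|\mu|$ grows.
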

\begin{proof} We follow the proof in \citet{johndrow2020scalableJMLR}, which is
included here for completeness. Recall the inverse moments of the univariate Normal distribution,
\begin{equation}
\mathbb{E}[ |X|^{-c} ]  = \frac{1}{\sigma^{c} 2^{c/2}} \frac{\Gamma(\frac{1-c}{2})}{\sqrt{\pi}} \exp( - \frac{\mu^2}{2 \sigma^2}) M \Big( \frac{1-c}{2}, \frac{1}{2}, \frac{\mu^2}{2 \sigma^2} \Big)
\end{equation}
where $M(a, b , z) := {}_{1}{F}_{1}(a ; b ; z) $ is the confluent hypergeometric function of the first kind.
For $a<b$, we know $M(a, b , z) = \frac{\Gamma(b)}{\Gamma(a) \Gamma(b-a)} \int_{0}^{1} e^{zt} t^{a-1}(1-t)^{b-a-1}dt$.
For $a=\frac{1-c}{2} < b = \frac{1}{2}$ and $z=\frac{\mu^2}{2 \sigma^2}$, this gives
\begin{flalign} 
\exp(-z) M(a, b, z) &= \frac{\Gamma(b)}{\Gamma(a) \Gamma(b-a)} \int_{0}^{1} e^{-z(1-t)} t^{a-1}(1-t)^{b-a-1}dt \\
&= \frac{\Gamma(b)}{\Gamma(a) \Gamma(b-a)} \int_{0}^{1} e^{-zt} (1-t)^{a-1} t^{b-a-1} dt.
\end{flalign}
Therefore, $\exp(-z) M(a, b , z)$ is decreasing for $z\geq 0$ and $\exp(-z) M(a, b , z) \leq \exp(-0) M(a, b , 0)=1$ for all $z\geq 0$. This gives 
\begin{equation}
\mathbb{E}[ |X|^{-c} ] = \frac{1}{\sigma^{c} 2^{c/2}} \frac{\Gamma(\frac{1-c}{2})}{\sqrt{\pi}} \exp( - \frac{\mu^2}{2 \sigma^2}) M \Big( \frac{1-c}{2}, \frac{1}{2}, \frac{\mu^2}{2 \sigma^2} \Big) \leq \frac{1}{\sigma^{c} 2^{c/2}} \frac{\Gamma(\frac{1-c}{2})}{\sqrt{\pi}}.
\end{equation}
\end{proof}

\begin{Lemma} (A uniform bound on the full conditional mean of $1/\sigma^2$). \label{lemma:uniform_bound_mean_sigma2} 
Consider the full conditional distribution of $\sigma^2$ given $\eta, \xi$ in Algorithm \ref{algo:Half_t_Exact}. Given $\eta \in \mathbb{R}_{>0}^p, \xi>0$, we have
\begin{equation}
\sigma^2 | \xi, \eta \sim \InvGamma \bigg( \frac{a_0+n}{2}, \frac{y^T M_{\xi, \eta}^{-1} y + b_0}{2} \bigg),
\end{equation}
where $M_{\xi, \eta} = I_n + \xi^{-1} X \text{Diag}(\eta^{-1}) X^T$. Then 
\begin{equation} \label{eq:uniform_bound_mean_sigma2}
\mathbb{E} \Big[ \frac{1}{\sigma^2} \Big| \xi, \eta \Big] = \frac{a_0+n}{y^T M_{\xi, \eta}^{-1} y + b_0} < \frac{a_0+n}{b_0} < \infty.
\end{equation}
\end{Lemma}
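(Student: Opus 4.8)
The plan is to combine the elementary formula for the mean of a gamma-distributed reciprocal with a positive-semidefiniteness argument for $M_{\xi,\eta}$; the statement is essentially a direct computation once these two ingredients are in place. First I would recall that if $\sigma^2 \sim \InvGamma(\alpha,\beta)$ then by definition $1/\sigma^2 \sim \mathrm{Gamma}(\alpha,\beta)$ in the shape--rate parameterization, so that $\mathbb{E}[1/\sigma^2] = \alpha/\beta$. Applying this with $\alpha = (a_0+n)/2$ and $\beta = (y^T M_{\xi,\eta}^{-1} y + b_0)/2$ immediately yields the claimed identity $\mathbb{E}[1/\sigma^2 \mid \xi, \eta] = (a_0+n)/(y^T M_{\xi,\eta}^{-1} y + b_0)$, since the factors of $1/2$ cancel.

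The second step is to bound the denominator below by $b_0$. Since $\xi > 0$ and each component $\eta_j > 0$, the diagonal matrix $\text{Diag}(\eta^{-1})$ is positive definite, and hence for any $v \in \mathbb{R}^n$ we have $v^T X \text{Diag}(\eta^{-1}) X^T v = (X^T v)^T \text{Diag}(\eta^{-1}) (X^T v) \geq 0$, so that $X \text{Diag}(\eta^{-1}) X^T$ is positive semidefinite. Consequently $M_{\xi,\eta} = I_n + \xi^{-1} X \text{Diag}(\eta^{-1}) X^T \succeq I_n$ is positive definite, and therefore so is its inverse $M_{\xi,\eta}^{-1}$. This gives $y^T M_{\xi,\eta}^{-1} y \geq 0$, whence $y^T M_{\xi,\eta}^{-1} y + b_0 \geq b_0 > 0$.

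Combining the two steps, dividing the positive quantity $a_0 + n$ by the denominator $y^T M_{\xi,\eta}^{-1} y + b_0 \geq b_0$ produces $\mathbb{E}[1/\sigma^2 \mid \xi, \eta] \leq (a_0+n)/b_0 < \infty$, establishing \eqref{eq:uniform_bound_mean_sigma2}. The inequality is in fact strict whenever $y \neq 0$, since positive definiteness of $M_{\xi,\eta}^{-1}$ then forces $y^T M_{\xi,\eta}^{-1} y > 0$ strictly, which covers any non-degenerate regression problem. The crucial point for the downstream drift analysis is that this upper bound $(a_0+n)/b_0$ does not depend on $\xi$ or $\eta$, so I would emphasize its uniformity in the conclusion.

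There is no serious obstacle here: the only care needed is bookkeeping of the positive-definiteness chain $\text{Diag}(\eta^{-1}) \succ 0 \Rightarrow M_{\xi,\eta} \succeq I_n \Rightarrow M_{\xi,\eta}^{-1} \succ 0$, and correctly matching the shape--rate convention so that the gamma mean is $\alpha/\beta$ rather than its reciprocal. The value of the lemma lies not in its difficulty but in supplying a clean, prior- and data-uniform control on $\mathbb{E}[\sigma^{-2}]$ that feeds into the verification of the geometric drift condition of Proposition \ref{prop:drift}.
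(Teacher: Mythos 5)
Your proposal is correct and follows essentially the same route as the paper's proof, which likewise combines the shape--rate mean formula for the Gamma distribution with positive semidefiniteness of $M_{\xi,\eta}^{-1}$ (your write-up simply makes the chain $\text{Diag}(\eta^{-1}) \succ 0 \Rightarrow M_{\xi,\eta} \succeq I_n \Rightarrow M_{\xi,\eta}^{-1} \succ 0$ explicit). Your observation that the strict inequality $<$ requires $y \neq 0$ is in fact slightly more careful than the paper's one-line justification, which nominally only yields $\leq$.
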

\begin{proof}
Recall that the mean of a $\mathrm{Gamma}(a,b)$ distribution (under shape and rate parameterization) is $a/b$. Equation \eqref{eq:uniform_bound_mean_sigma2} now directly follows as $M_{\xi, \eta}^{-1}$ is positive semi-definite. 
\end{proof}

\begin{Lemma} (Maximal coupling probability of each component $\eta_j$). \label{lemma:eta_coupling_bound}
For fixed constants $m>0$ and $\nu>0$, define a distribution on $(0, \infty)$ 
with probability density function 
\begin{equation} \label{eq:eta_full_conditional_pdf_1}
p(\eta | m) \propto \frac{1}{\eta^{\frac{1-\nu}{2}} (1+ \nu \eta)^{\frac{\nu+1}{2}}} e^{-m \eta}.
\end{equation}
Consider random variables $\eta \sim p(\cdot | m) $, and $\tilde{\eta} \sim p(\cdot | \tilde{m})$
for some fixed $\nu>0$. Fix some $0<a<b<\infty$, and take $m, \tilde{m} \in (a,b)$. 
Then under a maximal coupling,
\begin{equation} \label{eq:eta_coupling_bound}
\mathbb{P}( \eta = \tilde{\eta} | m, \tilde{m} ) = 
\int_0^\infty \min \{ p(\eta^*|m), p(\eta^*|\tilde{m}) \} d\eta^* \geq 
U \Big( \frac{1+\nu}{2}, 1, \frac{b}{\nu}\Big) \Big/ U \Big( \frac{1+\nu}{2}, 1, \frac{a}{\nu}\Big).
\end{equation}
\end{Lemma}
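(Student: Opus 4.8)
The plan is to reduce the overlap integral to a ratio of normalizing constants via a single pointwise lower bound on the minimum of the two densities. First I would write the densities in the form
\begin{equation}
p(\eta|m) = \frac{1}{Z(m)} g(\eta) e^{-m\eta}, \qquad g(\eta) := \frac{1}{\eta^{\frac{1-\nu}{2}}(1+\nu\eta)^{\frac{\nu+1}{2}}},
\end{equation}
where $Z(m) = \int_0^\infty g(\eta) e^{-m\eta}\,d\eta$ is the normalizing constant. As computed in the proof of Proposition \ref{prop:slice_sampler_tv}, this constant has the closed form $Z(m) = \nu^{-\frac{\nu+1}{2}}\Gamma(\frac{\nu+1}{2}) U(\frac{1+\nu}{2}, 1, \frac{m}{\nu})$, so that the target ratio in \eqref{eq:eta_coupling_bound} is exactly $Z(b)/Z(a)$ after the common prefactor $\nu^{-\frac{\nu+1}{2}}\Gamma(\frac{\nu+1}{2})$ cancels.

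The key step is a pointwise lower bound on the integrand of the overlap. Since $g(\eta) \geq 0$,
\begin{equation}
\min\{p(\eta|m), p(\eta|\tilde{m})\} = g(\eta)\min\left\{\frac{e^{-m\eta}}{Z(m)}, \frac{e^{-\tilde{m}\eta}}{Z(\tilde{m})}\right\}.
\end{equation}
For any $m \in (a,b)$ and all $\eta > 0$, two monotonicities apply: $e^{-m\eta} \geq e^{-b\eta}$ because $m < b$, and $Z(m) \leq Z(a)$ because $m > a$ and $Z(\cdot)$ is strictly decreasing (its integrand $g(\eta)e^{-m\eta}$ decreases in $m$ for every fixed $\eta > 0$). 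Combining these gives $e^{-m\eta}/Z(m) \geq e^{-b\eta}/Z(a)$, and the identical bound holds with $\tilde{m}$ in place of $m$ since $\tilde{m} \in (a,b)$ as well. Hence the minimum is at least $e^{-b\eta}/Z(a)$ for every $\eta > 0$.

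Finally I would integrate this bound:
\begin{equation}
\mathbb{P}(\eta = \tilde{\eta}|m,\tilde{m}) = \int_0^\infty g(\eta)\min\left\{\frac{e^{-m\eta}}{Z(m)}, \frac{e^{-\tilde{m}\eta}}{Z(\tilde{m})}\right\}d\eta \geq \frac{1}{Z(a)}\int_0^\infty g(\eta) e^{-b\eta}\,d\eta = \frac{Z(b)}{Z(a)},
\end{equation}
which is precisely the claimed bound once $Z(b)/Z(a)$ is rewritten using the closed form above. The argument is short, and there is no genuine obstacle: the overlap is uniformly bounded below by the single fixed sub-probability density $g(\eta)e^{-b\eta}/Z(a)$, independently of the exact locations of $m,\tilde{m}$ in $(a,b)$. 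The only point requiring care is tracking the directions of the two monotonicities — the exponential factor favoring the largest exponent $b$ and the normalizing constant favoring the smallest argument $a$ — so that both combine to produce a valid lower bound rather than cancelling.
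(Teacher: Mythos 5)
Your proof is correct and is essentially the same argument as the paper's: both bound the overlap integrand pointwise below by $g(\eta)e^{-b\eta}/Z(a)$, using that the exponential factor is minimized at $m=b$ and the normalizing constant (equivalently $x \mapsto U(\tfrac{1+\nu}{2},1,x)$) is decreasing, and then integrate to obtain the ratio $Z(b)/Z(a)$. Your explicit tracking of the prefactor $\nu^{-\frac{\nu+1}{2}}\Gamma(\tfrac{\nu+1}{2})$ and the direct verification that $Z$ is decreasing are slightly more careful in presentation than the paper, but introduce no new idea.
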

\begin{proof}[Proof of Lemma \ref{lemma:eta_coupling_bound}]
Fix some $0<a<b<\infty$, and take $m, \tilde{m} \in (a,b)$. Then 
\begin{flalign}
\mathbb{P}( \eta = \tilde{\eta} | m, \tilde{m} ) & = 
\int_0^\infty \min \{ p(\eta^*|m), p(\eta^*|\tilde{m}) \} d\eta^* \text{, as } (\eta, \tilde{\eta}) | m, \tilde{m} \text{ are maximally coupled} \\
&= \int \frac{1}{(\eta^*)^{\frac{1-\nu}{2}} (1+ \nu \eta^*)^{\frac{\nu+1}{2}}}  \min \Big\{ \frac{e^{-m \eta^*}}{U \big( \frac{1+\nu}{2}, 1, \frac{m}{\nu}\big)} , \frac{e^{-\tilde{m} \eta^*}}{U \big(\frac{1+\nu}{2}, 1, \frac{\tilde{m}}{\nu}\big)}  \Big\} d \eta^*.
\end{flalign}
Recall $m, \tilde{m} \in (a, b) $. Note also that $x \mapsto U \big(\frac{1+\nu}{2}, 1, x\big)$ is a decreasing function for $x>0$. This gives
\begin{equation}
\min \Big\{ \frac{e^{-m \eta^*}}{U \big( \frac{1+\nu}{2}, 1, \frac{m}{\nu}\big)} , \frac{e^{-\tilde{m} \eta^*}}{U \big(\frac{1+\nu}{2}, 1, \frac{\tilde{m}}{\nu}\big)}  \Big\}
> \frac{e^{-b \eta^*}}{U \big( \frac{1+\nu}{2}, 1, \frac{a}{\nu}\big)}. 
\end{equation}
Therefore for all $m, \tilde{m} \in (a,b)$,
\begin{flalign}
\mathbb{P}( \eta = \tilde{\eta} | m, \tilde{m} ) 
&> \int \frac{1}{(\eta^*)^{\frac{1-\nu}{2}} (1+ \nu \eta^*)^{\frac{\nu+1}{2}}}  \frac{e^{-b \eta^*}}{U \big( \frac{1+\nu}{2}, 1, \frac{a}{\nu}\big)} d \eta^* 
= \frac{U \big( \frac{1+\nu}{2}, 1, \frac{b}{\nu}\big)}{U \big( \frac{1+\nu}{2}, 1, \frac{a}{\nu}\big)}.
\end{flalign}
Equation \eqref{eq:eta_coupling_bound} follows from taking an infimum 
over all $m, \tilde{m} \in (a,b)$.
\end{proof}

\paragraph{Drift and Minorization conditions.}
\begin{proof}[Proof of Proposition \ref{prop:drift}] 
  For ease of notation, denote $\Sigma_{t+1}:= X^TX + \xi_{t+1} \text{Diag}(\eta_{t+1})$ and let $e_j$ be the $j^{th}$ unit vector of the standard basis on $\mathbb{R}^p$. We first consider $\sum_{j=1}^p m_j^{d}$. Note that
\begin{flalign}
\mathbb{E} \Big[ \sum_{j=1}^p m_{t+1,j}^{d} | \eta_{t+1}, \xi_{t+1}, \sigma_{t+1}^2 \Big] &= \Big( \frac{\xi_{t+1}}{2 \sigma_{t+1}^2} \Big)^{d} \sum_{j=1}^p  \mathbb{E} \Big[ \big| \beta_{t+1,j} \big|^{2d} \big| \eta_{t+1}, \xi_{t+1}, \sigma_{t+1}^2 \Big] \\
&\leq \Big( \frac{\xi_{t+1}}{2 \sigma_{t+1}^2} \Big)^{d} \sum_{j=1}^p  \mathbb{E} \Big[ \beta^2_{t+1,j} | \eta_{t+1} , \xi_{t+1}, \sigma_{t+1}^2 \Big]^{d} \label{eq:apply_jensen} \\
&= \Big( \frac{\xi_{t+1}}{2 \sigma_{t+1}^2} \Big)^{d} \sum_{j=1}^p  \bigg( \big(e_j^T \Sigma_{t+1}^{-1} X^Ty\big)^2 + \sigma_{t+1}^2 \big(e_j^T \Sigma_{t+1}^{-1} e_j\big) \bigg)^{d}\\
&\leq \Big( \frac{\xi_{t+1}}{2 \sigma_{t+1}^2} \Big)^{d} \sum_{j=1}^p  \Big( |e_j^T \Sigma_{t+1}^{-1} X^Ty|^{2d} + \sigma_{t+1}^{2d} \big(e_j^T \Sigma_{t+1}^{-1} e_j\big)^{d} \Big)  \label{eq:apply_(a+b)^d}  \\
&\leq \Big( \frac{\xi_{t+1}}{2 \sigma_{t+1}^2} \Big)^{d} \Big( \sum_{j=1}^p \big(|e_j^T \Sigma_{t+1}^{-1} X^Ty|^{2d}\big)^{1/d} \Big)^{d} p^{1-d} \nonumber \\
&\qquad + \sigma_{t+1}^{2d} \Big( \frac{\xi_{t+1}}{2 \sigma_{t+1}^2} \Big)^{d} \sum_{j=1}^p \big(e_j^T \big( \xi_{t+1} 
\text{Diag}(\eta_{t+1}) \big)^{-1} e_j\big)^{d} \label{eq:apply_holder} \\
&= \Big( \frac{\xi_{t+1}}{2 \sigma_{t+1}^2} \Big)^{d} \Big( \|\Sigma_{t+1}^{-1} X^Ty \|^2_2 \Big)^{d} p^{1-d} + \frac{1}{2^{d}} \sum_{j=1}^p \frac{1}{\eta_{t+1,j}^{d}} \\
&\leq \Big( \frac{\xi_{t+1}}{2 \sigma_{t+1}^2} \Big)^{d} \Big( C_p^{2} \| y\|_2^2 \Big)^d p^{1-d} + \frac{1}{2^{d}} \sum_{j=1}^p \frac{1}{\eta_{t+1,j}^{d}}, \label{eq:apply_mean_bound}
\end{flalign}
where \eqref{eq:apply_jensen} follows from Jensen's inequality, \eqref{eq:apply_(a+b)^d} follows as $(x+y)^d \leq x^d + y^d$ for $x, x \geq 0$ and $d\in (0,1)$, \eqref{eq:apply_holder} follows from H\"{o}lder's inequality and 
from $e_j^T \Sigma_{t+1}^{-1}e_j \leq e_j^T\big( \xi_{t+1} \text{Diag}(\eta_{t+1}) \big)^{-1} e_j$ as $X^TX$ is positive-definite matrix, and \eqref{eq:apply_mean_bound} follows from Lemma \ref{lemma:uniform_bound_mean} for some finite constant $C_p $ which does not depend on $\eta_{t+1}$ or $\xi_{t+1}$. 

As the prior $\pi_{\xi}(\cdot)$ on $\xi$ has compact support, we have $b \leq \xi_{t+1} \leq B$ for all $t \geq 0$ for some $0<b \leq B < \infty$. This upper bound on $\xi_{t+1}$, Lemma \ref{lemma:uniform_bound_mean_sigma2} and Jensen's inequality gives
\begin{flalign}
\mathbb{E} \Big[\Big( \frac{\xi_{t+1}}{2 \sigma_{t+1}^2} \Big)^{d} | \eta_{t+1} \Big] \leq \frac{B^{d}}{2^d} \mathbb{E} \Big[ \Big( \frac{1}{\sigma_{t+1}^2} \Big) | \eta_{t+1} \Big]^{d} \leq \frac{B^{d}}{2^d} \big(\frac{a_0+n}{b_0}\big)^d = \Big( \frac{B (a_0+n)}{2 b_0} \Big)^d . \label{eq:xi_sigma_bound}
\end{flalign}

Therefore, by \eqref{eq:apply_mean_bound} and \eqref{eq:xi_sigma_bound}, we obtain
\begin{flalign} \label{eq:m_positive_given_eta}
\mathbb{E} \Big[ \sum_{j=1}^p m_{t+1,j}^{d} | \eta_{t+1} \Big] = \mathbb{E} \Big[ \mathbb{E} \Big[ \sum_{j=1}^p m_{t+1,j}^{d} | \eta_{t+1}, \xi_{t+1}, \sigma_{t+1}^2 \Big] | \eta_{t+1} \Big] \leq \frac{1}{2^{d}} \sum_{j=1}^p \frac{1}{\eta_{t+1,j}^{d}} + K_1
\end{flalign}
for $K_1 := \Big( \frac{B (a_0+n)}{2 b_0} \Big)^d \Big( C_p^{2} \| y\|_2^2 \Big)^d p^{1-d} < \infty$. By \eqref{eq:m_positive_given_eta} and Corollary \ref{cor:eta_moments_bound}, we obtain
\begin{flalign}
\mathbb{E} \Big[ \sum_{j=1}^p m_{t+1,j}^{d} | \beta_{t}, \xi_{t}, \sigma_t^2 \Big] &= \mathbb{E} \Big[ \mathbb{E} \Big[ \sum_{j=1}^p m_{t+1,j}^{d} | \eta_{t+1} \Big] | \beta_{t}, \xi_{t}, \sigma_t^2 \Big] \\
&\leq \frac{1}{2^{d}} \sum_{j=1}^p \mathbb{E} \Big[ \eta_{t+1,j}^{-d} | \beta_{t}, \xi_{t}, \sigma_t^2 \Big] + K_1 \\
&\leq \frac{1}{2^{d}} \Big( \sum_{j=1}^p \frac{\Gamma(\frac{1+\nu}{2}-d)}{ \Gamma(\frac{1+\nu}{2})}  m_{t,j}^{d} + K^{(2)}_{d, \nu} \Big) + K_1 \\
&\leq \frac{1}{2^{d}} \frac{\Gamma(\frac{1+\nu}{2}-d)}{ \Gamma(\frac{1+\nu}{2})} \sum_{j=1}^p   m_{t,j}^{d} + K_2 \label{eq:lyapunov_cutaway_inf}
\end{flalign}
for $K_2 := \frac{p}{2^{d}} K^{(2)}_{d, \nu} + K_1 < \infty$. Note that for every $\nu \geq 1$, there exists $d \in (0,1)$ 
such that $\frac{1}{2^{d}} \frac{\Gamma(\frac{\nu+1}{2}-d)}{\Gamma(\frac{\nu+1}{2})} < 1$. We can verify this by considering the
function $f(d) := \frac{1}{2^{d}} \frac{\Gamma(\frac{\nu+1}{2}-d)}{\Gamma(\frac{\nu+1}{2})}$. It suffices to note that
$f(0) = 1$ and $f'(0)= -\psi(\frac{\nu+1}{2}) - \log(2) < 0$ for $\nu \geq 1$, where $\psi(x):= \frac{\Gamma'(x)}{\Gamma(x)}$
is the Digamma function.

We now consider $\sum_{j=1}^p m_j^{-c}$ for $c\in (0,1/2)$. We obtain
\begin{flalign}
\mathbb{E} \Big[ \sum_{j=1}^p & m_{t+1,j}^{-c} | \sigma^2_{t+1}, \xi_{t+1}, \eta_{t+1} \Big] = \Big( \frac{\xi_{t+1}}{2 \sigma_{t+1}^2} \Big)^{-c} \sum_{j=1}^p  \mathbb{E} \Big[ \big| \beta_{t+1,j} \big|^{-2c} \big| \sigma^2_{t+1}, \xi_{t+1}, \eta_{t+1}  \Big]  \\
&\leq \Big( \frac{\xi_{t+1}}{2 \sigma_{t+1}^2} \Big)^{-c} \frac{1}{2^{c}}  \frac{\Gamma(\frac{1-2c}{2})}{\sqrt{\pi}}  \sum_{j=1}^p \big( \sigma_{t+1}^2 e_j^T \Sigma_{t+1}^{-1} e_j \big)^{-c} \label{eq:apply_normal_inverse_moments} \\
&\leq \Big( \frac{\xi_{t+1}}{2 \sigma_{t+1}^2} \Big)^{-c} \frac{1}{2^{c}}  \frac{\Gamma(\frac{1-2c}{2})}{\sqrt{\pi}}  \sum_{j=1}^p \big( \sigma_{t+1}^2 e_j^T (\|X\|_2^2 I_p + \xi_{t+1} \text{Diag}(\eta_{t+1}))^{-1} e_j \big)^{-c} \label{eq:apply_matrix_norm_bound} \\
&= \Big( \frac{\xi_{t+1}}{2 \sigma_{t+1}^2} \Big)^{-c} \frac{1}{2^{c}}  \frac{\Gamma(\frac{1-2c}{2})}{\sqrt{\pi}}  \sum_{j=1}^p \frac{(\|X\|_2^2 + \xi_{t+1} \eta_{t+1,j})^c}{\sigma_{t+1}^{2c}} \\
&\leq  \frac{\Gamma(\frac{1}{2}-c)}{\sqrt{\pi}}  \sum_{j=1}^p \Big( \Big(\frac{\|X\|_2^{2}}{\xi_{t+1}} \Big)^c + \eta_{t+1,j}^c \Big) \label{eq:m_negative_given_eta1} \\
&\leq  \frac{\Gamma(\frac{1}{2}-c)}{\sqrt{\pi}}  \sum_{j=1}^p \eta_{t+1,j}^c + p \frac{\Gamma(\frac{1}{2}-c)}{\sqrt{\pi}} \Big(\frac{\|X\|_2^{2}}{b} \Big)^c \label{eq:m_negative_given_eta2} 
\end{flalign}
for some $b>0$. Here \eqref{eq:apply_normal_inverse_moments} follows from Lemma \ref{lemma:normal_inverse_moments}, \eqref{eq:apply_matrix_norm_bound} follows as $e_j^T ( \|X\|_2^2 I_p + \xi_{t} \text{Diag}(\eta_t) )^{-1} e_j \leq e_j^T \Sigma_{t+1}^{-1} e_j$ for $\| \cdot \|_2$ the $L_2$ operator norm of a matrix, \eqref{eq:m_negative_given_eta1} follows as $(x+y)^c \leq x^c + y^c$ for all $x,y>0$ and $0< c <1/2$, and \eqref{eq:m_negative_given_eta2} follows as the prior on $\xi$ has compact support, so $\xi_{t+1} \geq b$ for some $b>0$. Therefore, by \eqref{eq:m_negative_given_eta2},
\begin{flalign} \label{eq:m_negative_given_eta3}
& \mathbb{E} \Big[ \sum_{j=1}^p m_{t+1,j}^{-c} | \eta_{t+1} \Big] = \mathbb{E} \Big[ \mathbb{E} \Big[ \sum_{j=1}^p m_{t+1,j}^{-c} | \sigma^2_{t+1}, \xi_{t+1}, \eta_{t+1} \Big] | \eta_{t+1} \Big] \leq \frac{\Gamma(\frac{1}{2}-c)}{\sqrt{\pi}}  \sum_{j=1}^p \eta_{t+1,j}^c + K_3. &
\end{flalign}
for $K_3 := p \frac{\Gamma(\frac{1}{2}-c)}{\sqrt{\pi}} \Big(\frac{\|X\|_2^{2}}{b} \Big)^c < \infty$. By \eqref{eq:m_negative_given_eta3} and Corollary \ref{cor:eta_moments_bound} with $r=\frac{1+\nu}{2}$ and any $\epsilon < \frac{\sqrt{\pi}}{\Gamma(\frac{1}{2}-c)}$, there exists $K^{(1)}_{\epsilon, c, \nu} < \infty$ such that 
\begin{flalign} 
\mathbb{E} \Big[ \sum_{j=1}^p m_{t+1,j}^{-c} | \sigma^2_{t}, \xi_{t}, \beta_{t}  \Big] &= \mathbb{E} \Big[  \mathbb{E} \Big[ \sum_{j=1}^p m_{t+1,j}^{-c} | \eta_{t+1} \Big] | \sigma^2_{t}, \xi_{t}, \beta_{t}  \Big] \\
&\leq \frac{\Gamma(\frac{1}{2}-c)}{\sqrt{\pi}} \sum_{j=1}^p \mathbb{E} \Big[ \eta_{t+1,j}^c | \sigma^2_{t}, \xi_{t}, \beta_{t}  \Big] + K_3 \\
&\leq \frac{\Gamma(\frac{1}{2}-c)}{\sqrt{\pi}} \sum_{j=1}^p \Big( \epsilon m_{t,j}^{-c} +K^{(1)}_{\epsilon, c, \nu}  \Big) + K_3 \\
&= \Big( \frac{\Gamma(\frac{1}{2}-c)}{\sqrt{\pi}} \epsilon \Big) \sum_{j=1}^p m_{t,j}^{-c}  + K_4, \label{eq:lyapunov_cutaway_zero}
\end{flalign}
where $\frac{\Gamma(\frac{1}{2}-c)}{\sqrt{\pi}}  \epsilon < 1$ and 
$K_4 = p \frac{\Gamma(\frac{1}{2}-c)}{\sqrt{\pi}}K^{(1)}_{\epsilon, c, \nu} + K_3$. 

We can now combine \eqref{eq:lyapunov_cutaway_inf} and \eqref{eq:lyapunov_cutaway_zero} together. Then, for some fixed $d \in (0,1)$ such that $\frac{1}{2^{d}} \frac{\Gamma(\frac{\nu+1}{2}-d)}{\Gamma(\frac{\nu+1}{2})} < 1$ and any $c \in (0,1/2)$, we obtain 
\begin{flalign}
\mathbb{E} \Big[ \sum_{j=1}^p m_{t+1,j}^{-c} + m_{t+1,j}^{d} | m_{t} \Big] \leq \gamma_{drift} \Big( \sum_{j=1}^p m_{t,j}^{-c} + m_{t,j}^{d} \Big) + K_{drift}
\end{flalign}
for some
\begin{flalign}
0 < \gamma_{drift} &:= max \Big\{\frac{\Gamma(\frac{1}{2}-c)}{\sqrt{\pi}}  \epsilon, \frac{1}{2^{d}} \frac{\Gamma(\frac{\nu+1}{2}-d)}{\Gamma(\frac{\nu+1}{2})} \Big\} <1 \\
0 < K_{drift} &:= K_2 + K_4 < \infty.
\end{flalign}
Therefore, for any such $d \in (0,1)$ and $c\in (0,1/2)$,
\begin{equation}
V(\beta) = \sum_{j=1}^p m_j^{-c} + m_j^{d}
\end{equation}
for $m_j = \frac{\xi \beta_j^2}{2 \sigma^2}$ is a Lyapunov function. 
\end{proof}

\begin{Prop} (Minorization condition). \label{prop:minorization}
For $R>0$, let 
$S(R) = \{ (\beta, \xi, \sigma^2) \in \mathbb{R}^p \times \mathbb{R}_{>0} \times \mathbb{R}_{>0} : V(\beta, \xi, \sigma) < R \}$ 
denote the sub-level sets of the Lyapunov function in Proposition
\ref{prop:drift}. Let $\mathcal{P}$ denote the Markov transition kernel 
associated with the update from $(\beta_t,\xi_t, \sigma_t^2)$ to $(\beta_{t+1},\xi_{t+1}, \sigma_{t+1}^2)$ 
implied by the update rule in \eqref{eq:blocked_gibbs_ergodicity_version}. 
Let $Z = (\beta, \xi, \sigma^2)$ and $\tilde{Z} = (\tilde{\beta}, \tilde{\xi}, \tilde{\sigma}^2)$.
Then for all $R > 0$,
there exists $\epsilon \in (0,1)$ such that 
\begin{equation} \label{eq:minorization}
\text{TV} \Big( \mathcal{P} ( Z, \cdot ), \mathcal{P} (\tilde{Z}, \cdot) \Big) < 1 - \epsilon \quad \text{ for all } Z, \tilde{Z} \in S(R),
\end{equation}
where $\text{TV}$ stands for the total variation distance between two probability distributions.
In particular for $R > 1$, $\epsilon = ( U \big( \frac{1+\nu}{2}, 1, \frac{R^{1/d}}{\nu}\big) / U \big( \frac{1+\nu}{2}, 1, \frac{R^{-1/c}}{\nu}\big))^p$
suffices. 
\end{Prop}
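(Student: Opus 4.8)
The plan is to establish the minorization via a one-step coupling argument, turning it into a uniform lower bound on the probability that the two chains coincide after one transition. For any coupling $(Z_{t+1}, \tilde{Z}_{t+1})$ of the kernels $\mathcal{P}(Z, \cdot)$ and $\mathcal{P}(\tilde{Z}, \cdot)$ started from $Z$ and $\tilde{Z}$, one has the standard bound
\begin{equation}
\text{TV}\big( \mathcal{P}(Z, \cdot), \mathcal{P}(\tilde{Z}, \cdot) \big) \leq \mathbb{P}\big( Z_{t+1} \neq \tilde{Z}_{t+1} \big).
\end{equation}
I would take precisely the one-scale coupling of Proposition \ref{prop:one_scale_meeting}, maximally coupling $(\eta_{t+1}, \tilde{\eta}_{t+1})$ component-wise and then using common random numbers for the update $\eta_{t+1} \mapsto (Z_{t+1}, \tilde{Z}_{t+1})$. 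When $\eta_{t+1} = \tilde{\eta}_{t+1}$, the subsequent full conditionals of $\xi$, $\sigma^2$ and $\beta$ are identical across the two chains, so common random numbers force $Z_{t+1} = \tilde{Z}_{t+1}$. Because the component-wise maximal couplings are independent, this yields
\begin{equation}
\mathbb{P}\big( Z_{t+1} \neq \tilde{Z}_{t+1} \big) \leq \mathbb{P}\big( \eta_{t+1} \neq \tilde{\eta}_{t+1} \big) = 1 - \prod_{j=1}^p \mathbb{P}\big( \eta_{t+1,j} = \tilde{\eta}_{t+1,j} \big).
\end{equation}

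The next step is to convert membership in $S(R)$ into uniform control of the parameters $m_j = \xi \beta_j^2/(2\sigma^2)$ driving the $\eta_{t+1,j}$ full conditionals. For $Z \in S(R)$ with $R > 1$, the inequality $V = \sum_{j=1}^p (m_j^{-c} + m_j^d) < R$ forces each summand below $R$, so $m_j^{-c} < R$ and $m_j^d < R$ give $m_j \in (R^{-1/c}, R^{1/d})$, and identically $\tilde{m}_j \in (R^{-1/c}, R^{1/d})$; the interval is nonempty exactly because $R > 1$ makes $R^{-1/c} < 1 < R^{1/d}$. With $a = R^{-1/c}$ and $b = R^{1/d}$, Lemma \ref{lemma:eta_coupling_bound} then bounds each component meeting probability below by $U(\tfrac{1+\nu}{2}, 1, \tfrac{R^{1/d}}{\nu}) / U(\tfrac{1+\nu}{2}, 1, \tfrac{R^{-1/c}}{\nu})$, uniformly over all $Z, \tilde{Z} \in S(R)$.

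Multiplying these $p$ bounds gives $\prod_{j=1}^p \mathbb{P}(\eta_{t+1,j} = \tilde{\eta}_{t+1,j}) \geq \epsilon$ with
\begin{equation}
\epsilon = \left( \frac{U\big( \tfrac{1+\nu}{2}, 1, \tfrac{R^{1/d}}{\nu} \big)}{U\big( \tfrac{1+\nu}{2}, 1, \tfrac{R^{-1/c}}{\nu} \big)} \right)^p \in (0,1),
\end{equation}
and hence $\text{TV}(\mathcal{P}(Z, \cdot), \mathcal{P}(\tilde{Z}, \cdot)) \leq 1 - \epsilon$ on $S(R)$; the strict inequality $< 1 - \epsilon$ follows since the bound in Lemma \ref{lemma:eta_coupling_bound} is strict at interior points of $(a,b)$, which is where all the fixed $m_j, \tilde{m}_j$ lie. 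For the remaining case $R \in (0,1]$ I would simply observe that $S(R) \subseteq S(2)$, so the minorization already proven on $S(2)$ transfers to $S(R)$, and the claim holds for every $R > 0$. The only genuinely delicate point is the passage from the single Lyapunov inequality $V < R$ to a uniform interval containing every $m_j$ and $\tilde{m}_j$; once that is secured, Lemma \ref{lemma:eta_coupling_bound} supplies the estimate directly. It is worth noting that the product over the $p$ coordinates makes $\epsilon$ decay geometrically in $p$, which is precisely the mechanism behind the minorization constant, and hence the associated rate $\kappa_0$, degrading as the dimension grows.
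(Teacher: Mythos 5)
Your proposal is correct and follows essentially the same route as the paper: the coupling inequality applied to the one-scale coupling of Proposition \ref{prop:one_scale_meeting}, the reduction via CRN and component-wise independence to $\prod_{j=1}^p \mathbb{P}(\eta_{t+1,j}=\tilde{\eta}_{t+1,j})$, the deduction $m_j,\tilde{m}_j \in (R^{-1/c},R^{1/d})$ from $V<R$, and Lemma \ref{lemma:eta_coupling_bound} with $a=R^{-1/c}$, $b=R^{1/d}$, yielding exactly the paper's $\epsilon$. If anything, you are slightly more complete than the paper, which treats only $R>1$ explicitly, whereas you dispatch $R\in(0,1]$ via the inclusion $S(R)\subseteq S(2)$ and also justify the strictness of the inequality.
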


\begin{proof}[Proof of Proposition \ref{prop:minorization}]
Denote $Z = (\beta, \xi, \sigma^2)$ and $\tilde{Z} = 
(\tilde{\beta}, \tilde{\xi}, \tilde{\sigma}^2)$, and let $Z, \tilde{Z} \in S(R)$
for some $R>1$. We obtain
\begin{flalign}
\text{TV} & \big( \mathcal{P}(Z, \cdot), \mathcal{P}(\tilde{Z}, \cdot) \big) 
\leq \mathbb{P} \big( Z_{1} \neq \tilde{Z}_{1} \big| Z, \tilde{Z} \big) < 1 - \epsilon ,
\end{flalign}
where $(Z_{1}, \tilde{Z}_{1})$ corresponds to one-scale coupling of 
$\mathcal{P}(Z, \cdot)$ and $\mathcal{P}(\tilde{Z}, \cdot)$ as in
Proposition \ref{prop:one_scale_meeting}, the first inequality follows from the 
coupling inequality, and the second inequality follows from the proof of Proposition
\ref{prop:one_scale_meeting} for 
$\epsilon = \Big( U \big( \frac{1+\nu}{2}, 1, \frac{R^{1/d}}{\nu}\big) \Big/ U \big( \frac{1+\nu}{2}, 1, \frac{R^{-1/c}}{\nu}\big) \Big)^p \in (0,1)$ 
as $R>1$. Note that $\epsilon$ does not depend on $Z$ and $\tilde{Z}$
as required. 
\end{proof}

Geometric ergodicity follows from Proposition
\ref{prop:drift} and Proposition \ref{prop:minorization} with any $R > \max \{1,2K/(1-\gamma)\}$
\citep{rosenthal1995minorizationJASA}. 



\subsection{Coupled Kernel Proofs}

\begin{proof}[Proof of Proposition \ref{prop:one_scale_meeting}]
For $R>0$, let 
$S(R) := \{ (\beta, \xi, \sigma^2) \in \mathbb{R}^p \times \mathbb{R}_{>0} \times \mathbb{R}_{>0} : V(\beta, \xi, \sigma) < R \}$
denote the sub-level sets of the Lyapunov function in Proposition \ref{prop:drift}, with
corresponding contraction constant $\gamma \in (0,1)$ and additive constant $K>0$. 
Let $ Z_t := (\beta_t, \xi_t, \sigma_t^2)$ and $\tilde{Z}_t = (\tilde{\beta}_t, \tilde{\xi}_t, \tilde{\sigma}_t^2)$
for all $t \geq 0$. By \citet[][Proposition 4]{jacob2020unbiasedJRSSB}, it suffices to prove that  
for any $R>0$ sufficiently large such that $\gamma + K/(1+R) < 1$, there exists 
$\epsilon \in (0,1)$ such that 
\begin{equation} \label{eq:one_scale_condition}
\inf_{Z_t, \tilde{Z}_t \in S(R)} 
\mathbb{P} \big( Z_{t+1} = \tilde{Z}_{t+1} \big| Z_{t}, \tilde{Z}_{t} \big) > \epsilon,
\end{equation}
under the one-scale coupling. We now prove \eqref{eq:one_scale_condition} directly for the 
one-scale coupling in Proposition \ref{prop:one_scale_meeting}. 
Fix some $R>1$ such that $\gamma + K/(1+R) < 1$, and take any $Z_t, \tilde{Z}_t \in S(R)$. 
Then
\begin{flalign}
\mathbb{P} & ( Z_{t+1} = \tilde{Z}_{t+1} \big| Z_{t}, \tilde{Z}_{t}) \\
& = \mathbb{E} \Big[ \mathbb{P} \Big( Z_{t+1} = \tilde{Z}_{t+1} \big| \eta_{t+1}, \tilde{\eta}_{t+1}, Z_{t}, \tilde{Z}_{t} \Big)
\Big| Z_{t}, \tilde{Z}_{t} \Big] \\
& = \mathbb{E} \Big[ \mathbb{P} \Big( Z_{t+1} = \tilde{Z}_{t+1} \big| \eta_{t+1}, \tilde{\eta}_{t+1} \Big)
\Big| Z_{t}, \tilde{Z}_{t} \Big] \\
& = \mathbb{E} \Big[ \mathrm{1}_{ \{ \eta_{t+1} = \tilde{\eta}_{t+1} \} } \Big| Z_{t}, \tilde{Z}_{t} \Big]
\text{, as } (Z_{t+1},\tilde{Z}_{t+1}) | (\eta_{t+1}, \tilde{\eta}_{t+1}) \text{ follows CRN coupling} \\
& = \mathbb{P} \big( \eta_{t+1} = \tilde{\eta}_{t+1} \Big| Z_{t}, \tilde{Z}_{t} \big) \\
& = \prod_{j=1}^p \mathbb{P} \big( \eta_{t+1,j} = \tilde{\eta}_{t+1,j} \Big| Z_{t}, \tilde{Z}_{t} \big)
\text{, as } (\eta_{t+1}, \tilde{\eta}_{t+1}) | Z_{t}, \tilde{Z}_{t} \text{ is maximally coupled componentwise} \\
&= \prod_{j=1}^p   \mathbb{P} \big( \eta_{t+1,j} = \tilde{\eta}_{t+1,j} \Big| m_j, \tilde{m}_j \big)
\text{, for} \ m_j = \frac{\xi_t \beta_{t,j}^2}{2\sigma_t^2} \text{ and }
\tilde{m}_j = \frac{ \tilde{\xi}_t \tilde{\beta}_{t,j}^2}{2 \tilde{\sigma}_t^2}.
\end{flalign}

As $Z, \tilde{Z} \in S(R)$ for some $R>1$, $m_j, \tilde{m}_j \in (R^{-1/c}, R^{1/d}) $ for all $j=1,\ldots, p$. 
By Lemma \ref{lemma:eta_coupling_bound}, 
\begin{flalign}
\mathbb{P} \big( \eta_{t+1,j} = \tilde{\eta}_{t+1,j} \Big| m_j, \tilde{m}_j \big) \geq 
U \big( \frac{1+\nu}{2}, 1, \frac{R^{1/d}}{\nu}\big) \Big/ U \big( \frac{1+\nu}{2}, 1, \frac{R^{-1/c}}{\nu}\big)
\end{flalign}
for all $j=1,\ldots, p$. Therefore, 
\begin{flalign}
\mathbb{P} & ( Z_{t+1} = \tilde{Z}_{t+1} \big| Z_{t}, \tilde{Z}_{t}) 
\geq \prod_{j=1}^p \frac{U \big( \frac{1+\nu}{2}, 1, \frac{R^{1/d}}{\nu}\big)}{U \big( \frac{1+\nu}{2}, 1, \frac{R^{-1/c}}{\nu}\big)} 
= \epsilon
\end{flalign}
for $\epsilon := ( U ( \frac{1+\nu}{2}, 1, \frac{R^{1/d}}{\nu}) / U ( \frac{1+\nu}{2}, 1, \frac{R^{-1/c}}{\nu}))^p \in (0,1)$ as $R>1$.
\end{proof}

\begin{proof}[Proof of Proposition \ref{prop:crn_max_expectation}]
As $\eta_{t+1}$ given $m_t$ and $\tilde{\eta}_{t+1}$ given $\tilde{m}_t$ are 
conditionally independent component-wise, it suffices to consider 
the univariate case where $p=1$. Henceforth we assume $p=1$ 
and we drop the component subscripts for ease of notation. 
Without loss of generality, take $\tilde{m}_t \leq m_t$
and $h:(0, \infty)$ to be monotone increasing. 
Then the densities of $\eta_{t+1}|m_t$ and $\tilde{\eta}_{t+1}|\tilde{m}_t$ 
on $(0, \infty)$ are given by 
\begin{flalign} \label{eq:half_t_full_conditionals}
p(\eta | m_t) \propto 
\frac{e^{-m_t \eta}}{\eta^{\frac{1-\nu}{2}} (1+\nu\eta)^{\frac{\nu+1}{2}}} 
\text{ and } 
p(\tilde{\eta} | \tilde{m}_t) \propto 
\frac{e^{-\tilde{m}_t \eta}}{\eta^{\frac{1-\nu}{2}} (1+\nu\eta)^{\frac{\nu+1}{2}}}
\end{flalign}
respectively. As $\tilde{m}_t \leq m_t$, the distribution of 
$\tilde{\eta}_{t+1}$ stochastically dominates the distribution of $\eta_{t+1}$. 
Therefore under a CRN coupling of ${\eta}_{t+1}$ and $\tilde{\eta}_{t+1}$ 
(specifically a CRN coupling of Algorithm \ref{algo:eta_perfect_sampling}), $\tilde{\eta}_{t+1} \geq \eta_{t+1}$
almost surely, and $h(\tilde{\eta}_{t+1}) \geq h(\eta_{t+1})$ almost surely. 
We obtain 
\begin{equation}
\mathbb{E}_{ \text{crn} } \big[ | h(\eta_{t+1}) - h(\tilde{\eta}_{t+1}) | \big| m_{t}, \tilde{m}_{t} \big] = 
\mathbb{E} \big[h(\tilde{\eta}_{t+1}) - h(\eta_{t+1}) \big| m_{t}, \tilde{m}_{t} \big].
\end{equation}

Now consider a maximal coupling of $\eta_{t+1}$ and $\tilde{\eta}_{t+1}$ with independent residuals
\citep{johnson1998coupling}. Let $q=\text{TV}(p(\cdot | m_t) , p(\cdot | \tilde{m}_t))$. 
Then $\eta_{t+1}=\tilde{\eta}_{t+1}$ with probability
$1-q$. Otherwise with probability $q$, 
$(\eta_{t+1}, \tilde{\eta}_{t+1}) = (\eta^*_{t+1}, \tilde{\eta}^*_{t+1})$
where $\eta^*_{t+1}$ and $\tilde{\eta}^*_{t+1}$ are independent random variables 
with densities
\begin{flalign}
p(\eta^*_{t+1} | m_{t+1}, \tilde{m}_{t+1}) &= 
\max \{ p(\eta^*_{t+1} | m_t) - p(\eta^*_{t+1} | \tilde{m}_t), 0 \}/q \text{ and } \\
p(\tilde{\eta}^*_{t+1} | m_{t+1}, \tilde{m}_{t+1}) &= 
\max \{ p(\tilde{\eta}^*_{t+1} | \tilde{m}_t) - p(\tilde{\eta}^*_{t+1} | {m}_t), 0 \}/q
\end{flalign}
respectively on $(0, \infty)$. 
This implies that the supports of $\eta^*_{t+1}$ and $\tilde{\eta}^*_{t+1}$ are disjoint. 
By \eqref{eq:half_t_full_conditionals}, we further have $\eta^*_{t+1} \leq \tilde{\eta}^*_{t+1}$ almost surely. 
Overall, we obtain $\tilde{\eta}_{t+1} \geq \eta_{t+1}$ almost surely under a maximal
coupling, and hence 
\begin{equation}
\mathbb{E}_{ \text{max} } \big[ | h(\eta_{t+1}) - h(\tilde{\eta}_{t+1}) | \big| m_{t}, \tilde{m}_{t} \big] = 
\mathbb{E} \big[ h(\tilde{\eta}_{t+1}) - h(\eta_{t+1}) \big| m_{t}, \tilde{m}_{t} \big] =
\mathbb{E}_{ \text{crn} } \big[ | h(\eta_{t+1}) - h(\tilde{\eta}_{t+1}) | \big| m_{t}, \tilde{m}_{t} \big]
\end{equation}
as required.
\end{proof}

\begin{proof}[Proof of Proposition \ref{prop:crn_max_high_prob}]
\item \paragraph{CRN coupling result.} To prove \eqref{eq:crn}, it suffices to show that under CRN,
\begin{equation} \label{eq:crn_as_bound}
| \log (1+\eta_{t+1,j}) - \log (1+\tilde{\eta}_{t+1,j}) | \leq | \log m_{t, j} - \log \tilde{m}_{t, j} |,
\end{equation}
almost surely for all $j=1, \ldots, p$. Henceforth, to prove \eqref{eq:crn_as_bound}
we work component-wise and drop the subscripts for ease of notation. 
For the Horseshoe prior, we then have $p(\eta | m) \propto \exp(-m\eta)/(1+\eta)$
on $(0, \infty)$. The cumulative density function is then given by 
$F_m(t) = 1 - E_1(m(t+1))/E_1(m)$ where $E_1$ is the exponential integral function. Therefore, 
under CRN coupling, we obtain
\begin{flalign}
1 + \eta = \frac{1}{m} E_1^{-1}(U^* E_1(m)) \text{ and }
1 + \tilde{\eta} = \frac{1}{\tilde{m}} E_1^{-1}(U^* E_1(\tilde{m})),
\end{flalign}
for $U^* \sim \Uniform(0,1)$. Then 
\begin{flalign}
\frac{d}{d \log m} \big( \log(1 + \eta) \big) &= -1 + \frac{d}{d \log m} \big( \log \big( E_1^{-1}(U^* E_1(m)) \big) \big) \\
&= -1 + \frac{m}{ E_1^{-1}(U^* E_1(m)) } \Big( \frac{d}{d m} \big( E_1^{-1}(U^* E_1(m)) \big) \Big) \\
&= -1 + \frac{m}{ E_1^{-1}(U^* E_1(m)) } \frac{U^* E_1'(m) }{  E_1'\big( E_1^{-1}(U^* E_1(m)) \big) } 
\text{, where } E'_1(z) = -e^{-z}/z \\
&= -1 + \frac{m}{ m(1+\eta) } \frac{U^* E_1'(m) }{  E_1'\big( m(1+\eta) \big) } 
\text{, as } m(1+\eta) = E_1^{-1}(U^* E_1(m)) \\
&=  -1 + U^* e^{m \eta} \text{, as } E'_1(z) = -e^{-z}/z \\
&= -1 +\frac{E_1(m(1+\eta)) e^{m (1+\eta)}}{E_1(m) e^{m}} \text{, as } U^* = E_1(m(1+\eta))/E_1(m) \\
&\in (-1,0) \text{ as } x \mapsto e^x E_1(x) \text{ is positive and decreasing on } (0, \infty). 
\end{flalign}
Therefore, as $|\frac{d \log(1 + \eta)}{d \log m}| \leq 1$ and by the mean value theorem we obtain 
\begin{equation}
| \log(1 + \eta) - \log(1 + \tilde{\eta}) | \leq | \log m - \log \tilde{m} |,
\end{equation}
almost surely as required for \eqref{eq:crn_as_bound}. Finally, \eqref{eq:crn_as_bound} 
implies \eqref{eq:crn} with $m_{t, j} = \mu$ and $\tilde{m}_{t, j} = \tilde{\mu}$ for all $j = 1, \ldots, p$.
\item \paragraph{Maximal coupling result.} To prove \eqref{eq:max}, take $\tilde{\mu} < \mu$
without loss of generality. For $c>0$, denote the event
$$E_c := \big\{ \max_{j=1}^p | \log(1+\eta_{t+1,j}) - \log(1+\tilde{\eta}_{t+1,j}) | > c \big\},$$
where $(\eta_{t+1,j}, \tilde{\eta}_{t+1,j})$ are i.i.d. random variables jointly distributed 
according to a maximal coupling.
By component-wise independence, 
\begin{flalign}
\mathbb{P}(E_c | \mu, \tilde{\mu}) = 
1 - \prod_{j=1}^p \mathbb{P} \big( | \log(1+\eta_{t+1,j}) - \log(1+\tilde{\eta}_{t+1,j}) | \leq c \big| \mu, \tilde{\mu} \big) 
= 1 - x_c^p
\end{flalign}
for $x_c=\mathbb{P} \big( | \log(1+\eta_{t+1,1}) - \log(1+\tilde{\eta}_{t+1,1}) | \leq c \big| \mu, \tilde{\mu} \big)$.
Henceforth, we focus on this quantity $x_c$ and drop the subscripts for ease of notation. 
For the Horseshoe prior, the marginal densities densities of $\eta|\mu$ and $\tilde{\eta}|\tilde{\mu}$ 
are given by 
\begin{flalign} \label{eq:horseshoe_pdfs}
p(\eta | \mu) = \frac{e^{-\mu \eta}}{1+\eta} \frac{1}{e^\mu E_1(\mu)} \text{ and } 
p(\eta | \tilde{\mu}) = \frac{e^{-\tilde{\mu} \eta}}{1+\eta} \frac{1}{e^{\tilde{\mu}} E_1(\tilde{\mu})}
\text{ on } (0, \infty).
\end{flalign}

Consider a maximal coupling of $\eta$ give $\mu$ and $\tilde{\eta}$ given $\tilde{\mu}$
with independent residuals \citep{johnson1998coupling}. Let $q=\text{TV}(p(\cdot | \mu) , p(\cdot | \tilde{\mu}))$. 
Then $\eta=\tilde{\eta}$ with probability $1-q$. Otherwise with probability $q$, 
$(\eta, \tilde{\eta}) = (\eta^*, \tilde{\eta}^*)$ where $\eta^*$ and $\tilde{\eta}^*$ 
are independent random variables on $(0, \infty)$ with densities
\begin{flalign} 
p(\eta^* | \mu, \tilde{\mu}) &= \max \{ p(\eta^* | \mu) - p(\eta^* | \tilde{\mu}), 0 \}/q \text{ and } \\
p(\tilde{\eta}^* | \mu, \tilde{\mu}) &= \max \{ p(\tilde{\eta}^* | \tilde{\mu}) - p(\tilde{\eta}^* | {\mu}), 0 \}/q, \label{eq:horseshoe_residuals}
\end{flalign}
respectively. Note that $p(\eta^* | \mu) \geq p(\eta^* | \tilde{\mu})$ if and only if  $1 + \eta^* \leq K$ for constant 
$K:=(\log E_1(\mu) - \log E_1(\tilde{\mu}))/(\tilde{\mu}-\mu)$. Therefore, $1+\eta^*$ has support $[1, K)$
and $1+\tilde{\eta}^*$ has support $[K, \infty)$. 
We obtain, 
\begin{flalign}
x_c &= (1-q) + q \mathbb{P} \big( | \log(1+\eta^*) - \log(1+\tilde{\eta}^*) | \leq c \big) \\
&= 1 - q \mathbb{P} \big( | \log(1+\eta^*) - \log(1+\tilde{\eta}^*) | > c \big) \\
&= 1 - q \mathbb{P} \big( \log(1+\tilde{\eta}^*) - \log(1+\eta^*) > c \big)  \text{, as } \eta^* \leq \tilde{\eta}^* \\
&= 1 - q \mathbb{P} \big( 1+\tilde{\eta}^* > (1+\eta^*) e^c \big) \\
&\leq 1 - q \mathbb{P} \big( 1+\tilde{\eta}^* > t \big) \text{ for } t=K e^c \text{, as } 1+\tilde{\eta}^* \leq K \text{ almost surely } \\ 
&= 1 - \int_{t}^\infty \frac{1}{1+\tilde{\eta}^* } \Big( \frac{e^{-\tilde{\mu}\tilde{\eta}^*}}{e^{\tilde{\mu}} E_1(\tilde{\mu})} - 
\frac{e^{-\mu \tilde{\eta}^*}}{ e^{\mu} E_1(\mu)} \Big) dy \text{, by } \eqref{eq:horseshoe_residuals} \\
&= 1 - \int_{t}^\infty \frac{1}{y} \Big( \frac{e^{-\tilde{\mu}y}}{E_1(\tilde{\mu})} - 
\frac{e^{-\mu y}}{E_1(\mu)} \Big) dy \text{ with } y=1+\tilde{\eta}^* \\
&= 1  - \Big( \frac{E_1(\tilde{\mu} t)}{E_1(\tilde{\mu})} - \frac{E_1(\mu t)}{E_1(\mu)}\Big). \label{eq:prob_ub}
\end{flalign}

We now consider this upper bound on $x_p$ as $t \rightarrow \infty$. 
Note that for $\mu, \tilde{\mu}$ fixed with $\tilde{\mu}<\mu$,
\begin{flalign}
\lim_{t \rightarrow \infty} \Big( E_1(\tilde{\mu}) \tilde{\mu}t e^{\tilde{\mu}t} \Big( \frac{E_1(\tilde{\mu}t)}{E_1(\tilde{\mu})} - \frac{E_1(\mu t)}{E_1(\mu )} \Big) \Big) 
&= \lim_{t \rightarrow \infty} \Big( \tilde{\mu }t e^{\tilde{\mu }t} E_1(\tilde{\mu }t) \Big) \lim_{t \rightarrow \infty} \Big( 1 - \frac{E_1(\mu t) E_1(\tilde{\mu })}{E_1(\tilde{\mu } t) E_1(\mu )} \Big) \\
&= \lim_{t \rightarrow \infty} \Big( \tilde{\mu }t e^{\tilde{\mu }t} E_1(\tilde{\mu }t) \Big) \text{ as } \lim_{t \rightarrow \infty} \frac{E_1(\mu t)}{E_1(\tilde{\mu } t)} = 0  \\
&= 1 \text{ as } \lim_{x \rightarrow \infty} \big( x e^x E_1(x) \big) = 1. \label{eq:prob_limit1}
\end{flalign}
Therefore, there exists some constant $D_1>0$ such that 
\begin{equation} \label{eq:prob_limit_bound}
\Big( \frac{E_1(\tilde{\mu }t)}{E_1(\tilde{\mu })} - \frac{E_1(\mu t)}{E_1(\mu )} \Big) > \frac{D_1}{ \tilde{\mu }t e^{\tilde{\mu }t}}
\end{equation}
for all $t>0$. By \eqref{eq:prob_ub} and \eqref{eq:prob_limit_bound}, we obtain
\begin{flalign}
x_c^p &< \Big(1 - \frac{D_1}{t e^{\tilde{\mu }t}}\Big)^p \\
&= \Big( \Big(1 - \frac{1}{D_1^{-1} \tilde{\mu }t e^{\tilde{\mu }t}}\Big)^{D_1^{-1} \tilde{\mu }t e^{\tilde{\mu }t}} \Big)^{p/(D_1^{-1} \tilde{\mu }t e^{\tilde{\mu }t})} \\
&< \exp \big(-p/(D_1^{-1} \tilde{\mu }t e^{\tilde{\mu }t}) \big) \text{ as }  (1-1/x)^x < 1/e \text{ for } x \geq 1. 
\end{flalign}
For $p > 1$, now take $c= \log \big( \frac{\alpha \log p}{\tilde{\mu } K} \big)$ for any fixed constant $\alpha \in (0,1)$. 
Recall $t=K e^c$, such that $\tilde{\mu }t = \alpha \log p$. Then 
\begin{flalign}
x_c^p &< \exp \big(-p/(D_1^{-1} \tilde{\mu }t e^{\tilde{\mu }t}) \big) = \exp \big( -D_1 p^{1-\alpha}/(\alpha \log p) \big) 
< \exp \big( -D p^{1-\alpha'} \big)
\end{flalign} 
for any $\alpha' \in (\alpha, 1)$ and some constant $D$ which does not depend on $p$. 
Therefore, 
\begin{flalign} \label{eq:max_init}
\mathbb{P} \big(E_c \big| \ m_t, \tilde{m}_t \big) &> 1 - \exp \big( -D p^{1-\alpha'} \big)
\end{flalign}
for $c = \log (\alpha/(\tilde{\mu } K)) + \log \log p$. 
Recall that $K:=(\log E_1(\mu ) - \log E_1(\tilde{\mu }))/(\tilde{\mu }-\mu )$. 
We obtain
\begin{flalign}
\frac{1}{\tilde{\mu } K} &= \frac{\mu -\tilde{\mu }}{\tilde{\mu } (\log E_1(\tilde{\mu }) - \log E_1(\mu ))} \\
&= \frac{\mu /\tilde{\mu }-1}{\log E_1(\tilde{\mu }) - \log E_1(\mu )} \\
&\geq \frac{\log \mu  - \log \tilde{\mu } }{\log E_1(\tilde{\mu }) - \log E_1(\mu )} \text{ as } x-1\geq \log (x) \text{ for } x>0 \\
&= e^{\hat{\mu }} E_1(\hat{\mu }) \text{ for some } \hat{\mu } \in [\tilde{\mu }, \mu ] \text{ as } \frac{d}{d (\log \mu )} (-\log E_1(\mu )) = \frac{1}{e^{\mu } E_1(\mu )} \\
&\geq e^{\mu } E_1(\mu ) \text{ as } x \mapsto e^x E_1(x) \text{ is decreasing for all } x>0  \\
&\geq \frac{1}{2} \log(1 + 1/\mu ) \text{ as } e^{x} E_1(x) \geq \frac{1}{2} \log(1+1/x) \text{ for all } x>0 \\
&= \frac{1}{2} \log(1 + 1/\max\{\mu, \tilde{\mu} \} ) \\
&=: L.
\end{flalign}
Therefore, $c \geq \log (\alpha L) + \log \log p$.
By \eqref{eq:max_init},
\begin{flalign}
\mathbb{P} \big(\max_{j=1}^p | \log(1+\eta_{t+1,j}) - \log(1+\tilde{\eta}_{t+1,j}) | 
> \log (\alpha L) + \log \log p  \ \big| \ m_t, \tilde{m}_t \big)
> 1 - \exp( -D p^{1-\alpha'}).
\end{flalign}
The case $\tilde{\mu } > \mu $ follows by symmetry to give \eqref{eq:max}.
\end{proof}

\begin{proof}[Proof of Proposition \ref{prop:tv_metric_upper_bound}]
Fix some $\nu \geq 1$. For any $m > 0$, let $p(\cdot | m)$ denote the component-wise
full conditional density of $\eta$ given $m$. This is given by 
\begin{equation} 
p(\eta | m) = \frac{\exp(-m \eta)}{\eta^{\frac{1-\nu}{2}} (1+ \nu \eta)^{\frac{\nu+1}{2}}} \frac{1}{Z_m}
\end{equation}
on $(0, \infty)$, where $Z_m$ is the normalization constant. We will first show 
\begin{equation} \label{eq:tv_metric_upper_bound_componentwise}
\text{TV} \big( p(\cdot | m), p(\cdot | \tilde{m}) \big)^2 \leq \frac{(1+\nu)(e-1)}{4} | \log(m) - \log(\tilde{m}) |.
\end{equation}
By a generalization of Pinkser's inequality \citep[e.g.][]{vanErven2014renyiIEEE}, we obtain
\begin{equation}
\text{TV} \big( p(\cdot | m), p(\cdot | \tilde{m}) \big)^2 \leq \frac{1}{2 \alpha} D_{\alpha}( p(\cdot | m) \| p(\cdot | \tilde{m}) )
\end{equation}
for all $\alpha \in (0,1]$, where 
$$D_{\alpha}(p(\cdot | m) \| p(\cdot | \tilde{m})) := \frac{1}{\alpha -1} 
\log \Big( \int p(\eta | m)^\alpha p(\eta | \tilde{m})^{1-\alpha} d \eta \Big)$$
is the R\'{e}nyi divergence between probability densities $p$ and $q$ with respect to the 
Lebesgue measure. For $\alpha=1/2$, we obtain
\begin{flalign}
D_{\alpha}(p(\cdot | m) \| p(\cdot | \tilde{m})) &= -2 \log \Big(
\int_0^\infty \frac{\exp(-\frac{m + \tilde{m}}{2} \eta)}{\eta^{\frac{1-\nu}{2}} (1+ \nu \eta)^{\frac{\nu+1}{2}}} \frac{1}{\sqrt{Z_{m} Z_{\tilde{m}}}} d\eta \Big)
= \log \Big( \frac{Z_{m} Z_{\tilde{m}}}{Z_{(m+\tilde{m})/2}^2}  \Big)
\end{flalign}
Let $L(m)= \log Z_m$. Then for $\eta | m \sim p(\cdot | m)$, note that 
\begin{flalign}
\frac{d L(m)}{dm} = \frac{\frac{d Z_m}{dm}}{Z_m} = - \mathbb{E}[\eta | m] \leq 0, \text{ and } 
\frac{d^2 L(m)}{dm^2} = \frac{\frac{d^2 Z_m}{dm^2} Z_m - (\frac{d Z_m}{dm})^2}{Z_m^2} = 
\mathbb{E}[\eta^2 | m] - \mathbb{E}[\eta | m]^2 \geq 0,
\end{flalign}
such that $L(m)$ is a decreasing convex function. For $\tilde{m} \geq m$, this gives 
\begin{flalign} \label{eq:apply_convex_1}
L(m) + L(\tilde{m}) - 2 L \Big( \frac{m+\tilde{m}}{2} \Big) 
&\leq \big( L'(\tilde{m})-L'(m) \Big) \Big(\frac{\tilde{m}-m}{2} \big)\\
&= \big( \mathbb{E}[\eta | m]-\mathbb{E}[\eta | \tilde{m}] \big) \Big(\frac{\tilde{m}-m}{2}\Big).
\end{flalign}
Furthermore, note that by Lemmas \ref{lemma:eta_moments} and \ref{lemma:confluent_ratio_bound},
\begin{flalign}
\mathbb{E}[\eta | m] =\frac{1}{\nu}  \frac{\Gamma(\frac{1+\nu}{2}+1)}{ \Gamma(\frac{1+\nu}{2})} \frac{U(\frac{1+\nu}{2}+1, 1+1, \frac{m}{\nu})}{U(\frac{1+\nu}{2}, 1, \frac{m}{\nu})} \leq \frac{1}{\nu}  \frac{\Gamma(\frac{1+\nu}{2}+1)}{ \Gamma(\frac{1+\nu}{2})} \big( \frac{m}{\nu} \big)^{-1} = \frac{1+\nu}{2} m^{-1}
\end{flalign}
where $U(a,b,z)$ corresponds to the confluent hypergeometric function of 
the second kind. 
By \eqref{eq:apply_convex_1}, we therefore obtain
\begin{flalign} 
D_{\alpha}(p(\cdot | m) \| p(\cdot | \tilde{m}))
&\leq \big(\mathbb{E}[\eta | m]-\mathbb{E}[\eta | \tilde{m}] \big) \Big(\frac{\tilde{m}-m}{2}\Big) \\
& \leq \frac{1+\nu}{2} \Big(\frac{\tilde{m}-m}{2 m} \Big) \\
&= \frac{1+\nu}{4}  \big( \exp( \log \tilde{m} - \log m ) - 1) \big),
\end{flalign}
where $ \exp( |\log \tilde{m} - \log m |) - 1 \leq (e-1) |\log m - \log \tilde{m}|$
for $|\log m - \log \tilde{m}| \leq 1$. This gives 
\begin{equation} \label{eq:renyi_div}
\text{TV} \big( p(\cdot | m), p(\cdot | \tilde{m}) \big)^2 \leq
\frac{(1+\nu)(e-1)}{4} \big( \log(\tilde{m}) - \log(m) \big),
\end{equation}
for $\tilde{m} \geq m$. The case $\tilde{m} \leq m$ follows by symmetry, 
giving \eqref{eq:tv_metric_upper_bound_componentwise}.

We now extend \eqref{eq:tv_metric_upper_bound_componentwise} to the 
multivariate setting. Let $\pi(\eta | m):= \prod_{j=1}^p p(\eta_j| m_j)$ now denote the 
full conditional density of $\eta \in (0, \infty)^p$ given $m \in (0, \infty)^p$. Then we obtain
\begin{flalign} \label{eq:tv_metric_upper_bound_full}
\text{TV} \big( \pi(\cdot | m), \pi(\cdot | \tilde{m}) \big)^2 
&\leq D_{1/2}( \pi(\cdot | m) \| \pi(\cdot | \tilde{m}) ) \\
&= \sum_{j=1}^p D_{1/2}( p(\cdot | m_j) \| p(\cdot | \tilde{m}_j) ) \text{ by conditional independence } \\
&= \frac{(1+\nu)(e-1)}{4}  \sum_{j=1}^p | \log(m_j) - \log(\tilde{m}_j) | \text{ by \eqref{eq:renyi_div}}.
\end{flalign}
Finally, let $Z = (\beta, \xi, \sigma^2)$ and $\tilde{Z} = (\tilde{\beta}, \tilde{\xi}, \tilde{\sigma}^2)$, and
$m_{j} = \xi \beta_{j}^2/(2 \sigma^2)$ and $\tilde{m}_{j} = \tilde{\xi} \tilde{\beta}_{j}^2/(2 \tilde{\sigma}^2)$
for $j=1, \ldots, p$. For any such $Z, \tilde{Z}$, we obtain
\begin{flalign} 
\text{TV} \big( \mathcal{P}(Z, \cdot), \mathcal{P}(\tilde{Z} | \cdot) \big) 
&= \frac{1}{2} \int \big| p( \eta^* | Z) - p(\eta^*|\tilde{Z}) \big| p( Z^* | \eta^* ) d\eta^* dZ^* \\
&= \frac{1}{2} \int | p( \eta^* | Z) - p(\eta^* | \tilde{Z}) | d\eta^* \\
&= \frac{1}{2} \int | \pi( \eta^* | m) - \pi(\eta^* | \tilde{m}) | d\eta^* \\
&= \text{TV} \big( \pi(\cdot | m), \pi(\cdot | \tilde{m}) \big).
\end{flalign}
Therefore, by \eqref{eq:tv_metric_upper_bound_full}, we obtain 
\begin{flalign} 
\text{TV} \big( \mathcal{P}(Z, \cdot), \mathcal{P}(\tilde{Z} | \cdot) \big)^2
&\leq \frac{(1+\nu)(e-1)}{4}  \sum_{j=1}^p | \log(m_j) - \log(\tilde{m}_j)|.
\end{flalign}
\end{proof}

\begin{proof}[Proof of Proposition \ref{prop:switch_crn_meeting}]
This proposition follows from a modification of 
the proof of 
Proposition \ref{prop:one_scale_meeting}. Following the setup and notation 
in the proof of Proposition \ref{prop:one_scale_meeting}, we have
\begin{flalign}
\mathbb{P} & ( Z_{t+1} = \tilde{Z}_{t+1} \big| Z_{t}, \tilde{Z}_{t}) \\
& = \mathbb{E} \Big[ \mathbb{P} \Big( Z_{t+1} = \tilde{Z}_{t+1} \big| \eta_{t+1}, \tilde{\eta}_{t+1}, Z_{t}, \tilde{Z}_{t} \Big)
\Big| Z_{t}, \tilde{Z}_{t} \Big] \\
& = \mathbb{E} \Big[ \mathbb{P} \Big( Z_{t+1} = \tilde{Z}_{t+1} \big| \eta_{t+1}, \tilde{\eta}_{t+1} \Big)
\Big| Z_{t}, \tilde{Z}_{t} \Big] \\
& = \mathbb{E} \Big[ \mathrm{1}_{ \{ \eta_{t+1} = \tilde{\eta}_{t+1} \} } \Big| Z_{t}, \tilde{Z}_{t} \Big]
\text{, as } (Z_{t+1},\tilde{Z}_{t+1}) | (\eta_{t+1}, \tilde{\eta}_{t+1}) \text{ follows common random numbers coupling} \\
& = \mathbb{P} \big( \eta_{t+1} = \tilde{\eta}_{t+1} \Big| Z_{t}, \tilde{Z}_{t} \big) \\
& = \mathbb{P} \big( \eta_{1,t+1} = \tilde{\eta}_{1,t+1} \Big| Z_{t}, \tilde{Z}_{t} \big) 
\prod_{j=2}^p \mathbb{P} \big( \eta_{t+1,j} = \tilde{\eta}_{t+1,j} \Big| 
\underset{1 \leq i \leq j-1}{ \bigcap } \{ \eta_{i,t+1} = \tilde{\eta}_{i,t+1} \}, Z_{t}, \tilde{Z}_{t} \big) \\
& \qquad \qquad \qquad \qquad \qquad \qquad \qquad \qquad \qquad 
\qquad \qquad \qquad \qquad \text{ under the switch-to-CRN coupling} \\
& \geq \prod_{j=1}^p \frac{U \big( \frac{1+\nu}{2}, 1, \frac{R^{1/d}}{\nu}\big)}{U \big( \frac{1+\nu}{2}, 1, \frac{R^{-1/c}}{\nu}\big)}
\text{ by Lemma \ref{lemma:eta_coupling_bound}, as in the proof of Proposition \ref{prop:one_scale_meeting} for } Z_t, \tilde{Z}_t \in S(R),\\
&= \Big( \frac{U \big( \frac{1+\nu}{2}, 1, \frac{R^{1/d}}{\nu}\big)}{U \big( \frac{1+\nu}{2}, 1, \frac{R^{-1/c}}{\nu}\big)} \Big)^p =: \epsilon,
\end{flalign}
where $\epsilon \in (0,1)$ since $R > 1$.
\end{proof}

\section{Additional Algorithms} \label{appendices:algos}

\begin{algorithm}
\DontPrintSemicolon
\KwIn{$ C_t := (\beta_t, \eta_t, \sigma_t^2, \xi_t) \in \mathbb{R}^p \times \mathbb{R}^p_{> 0} \times \mathbb{R}_{> 0} \times \mathbb{R}_{> 0}$. } 
\begin{enumerate}
\item Sample $\eta_{t+1} | \beta_{t}, \sigma^2_{t}, \xi_{t}$ component-wise 
independently using Algorithm \ref{algo:slice_sampling_1}, targeting 
\begin{equation} \label{eq:eta_full_conditionals}
\pi(\eta_{t+1} | \beta_{t} , \sigma^2_{t}, \xi_{t}) \propto \prod_{j=1}^p 
\pi\big( \eta_{t+1,j} | m_{t,j} := \frac{\xi_{t} \beta_{t,j}^2}{2  \sigma^2_{t}} \big) \propto \prod_{j=1}^p \frac{e^{-m_{t,j} \eta_{t+1,j}}}{\eta_{t+1,j}^{\frac{1-\nu}{2}}(1+\nu \eta_{t+1,j})^{\frac{\nu + 1}{2}}} .
\end{equation}
\item Sample $\xi_{t+1},\sigma^2_{t+1}, \beta_{t+1}$ given $\eta_{t+1}$ as follows:
\begin{enumerate}
\item Sample $\xi_{t+1}$ using Metropolis--Rosenbluth--Teller--Hastings with step-size $\sigma_{\text{MRTH}}$: \;
Given $\xi_{t}$, propose $\log(\xi^*) \sim \mathcal{N}(\log(\xi_{t}), \sigma^2_{\text{MRTH}})$. \;
Calculate acceptance probability
\[q =  \frac{L(y | \xi_*, \eta_{t+1} ) \pi_{\xi}(\xi_*)  }{L(y | \xi_{t}, \eta_{t+1} ) \pi_{\xi}(\xi_{t})} \frac{\xi^*}{\xi_{t}},\]
where $\pi_{\xi}(\cdot) $ is the prior for $\xi$, 
$ M_{\xi,\eta} := I_n + \xi^{-1} X\, \text{Diag}(\eta^{-1})\, X^T $ and
\begin{equation} \log(L(y | \xi, \eta )) =  - \frac{1}{2} \log (|M_{\xi,\eta}|) - \frac{a_0 + n}{2} \log( b_0 + y^T M_{\xi,\eta}^{-1} y ). \label{eq:Half_t_Exact_algo_ll}
\end{equation}
Set $\xi_{t+1} := \xi^*$ with probability $\min(1,q)$, otherwise set $\xi_{t+1} := \xi_{t}$.
\item Sample $\sigma^2_{t+1}$ given $\xi_{t+1},\eta_{t+1}$:
\begin{equation}
  \sigma^2_{t+1} | \xi_{t+1}, \eta_{t+1} \sim \InvGamma\bigg(\frac{a_0+n}{2}, \frac{y^T M_{\xi_{t+1},\eta_{t+1}}^{-1} y + b_0}{2}\bigg).
\end{equation}
\item Sample $\beta_{t+1}$ given $\sigma^2_{t+1}, \xi_{t+1}, \eta_{t+1}$ using Algorithm \ref{algo:fast_mvn_bhattacharya_1} in Appendix \ref{appendices:algos}, targeting
\begin{equation}
\beta_{t+1} | \sigma^2_{t+1}, \xi_{t+1}, \eta_{t+1} \sim \mathcal{N} \big( \Sigma^{-1} X^T y,  \sigma^2_{t+1} \Sigma^{-1} \big)
\text{ for } \Sigma = X^TX + \xi_{t+1} \text{Diag}(\eta_{t+1} ).
\end{equation}
\end{enumerate}
\end{enumerate}
 \Return $ C_{t+1} := (\beta_{t+1}, \eta_{t+1}, \sigma^2_{t+1}, \xi_{t+1})$.
 \caption{MCMC for Bayesian shrinkage regression with Half-$t(\nu)$ priors.}
 \label{algo:Half_t_Exact}
\end{algorithm}

\begin{algorithm}[!htb]
\DontPrintSemicolon
\SetAlgoLined
\KwResult{Slice sampler targeting $p(\eta_j | m_j) \propto (\eta_j^{\frac{1-\nu}{2}}(1+\nu \eta_j)^{\frac{\nu + 1}{2}})^{-1} e^{-m_j \eta_j}$ on $(0, \infty)$.}
\KwIn{$m_j > 0$, state $\eta_{t,j} > 0$.}
\begin{enumerate}
\item Sample $U_{t,j} | \eta_{t,j} \sim \Uniform(0, (1+\nu \eta_{t,j})^{-\frac{\nu + 1}{2}})$.
\item Sample $\eta_{t+1,j} | U_{t,j} \sim P_j$, where $P_j$ 
has unnormalized density $\eta \mapsto \eta^{s-1} e^{ - m_j \eta}$ on 
$(0, T_{t,j})$, with $T_{t,j}=(U_{t,j}^{-2/(1+\nu)}-1)/\nu$ and $s=(1+\nu)/2$.
We can sample $\eta_{t+1,j}$ perfectly by setting 
\begin{equation}
\eta_{t+1,j} = \frac{1}{m_j} \gamma_{s}^{-1} \Big( \gamma_{s}(m_jT_{t,j}) U^*
\Big) \quad\text{where}\quad U^* \sim \Uniform(0,1),
\end{equation}
and where $\gamma_{s}(x) := \Gamma(s)^{-1} \int_0^x t^{s-1} e^{-t} dt \in [0,1] $ is 
the cumulative distribution function of a $\mathrm{Gamma}(s,1)$ distribution.
\end{enumerate}
\Return $\eta_{t+1,j}$.
\caption{Slice Sampling for Half-$t(\nu)$ prior.}
\label{algo:slice_sampling_1}
\end{algorithm}

\begin{algorithm}[H]
\DontPrintSemicolon
\caption{Maximal coupling with independent residuals.}
\KwIn{Distributions $P$ and $Q$ with respective densities $p$ and $q$}
Sample \(X \sim P\), and \(W \sim \Uniform(0,1)\). \;
\lIf{ \( p(X) W \leq q(X) \) }{
   set \(Y=X\) and \Return \( (X,Y) \).
   } \lElse {
   sample \( \tilde{Y} \sim Q\) and \(\tilde{W} \sim \Uniform(0,1) \) until  \( q(\tilde{Y}) \tilde{W} > p(\tilde{Y}) \). Set \(Y=\tilde{Y}\) and \Return \( (X,Y) \).
   }
\label{algo:max_coupling}
\end{algorithm}

\begin{algorithm}
\DontPrintSemicolon
\SetAlgoLined
\KwResult{Sample from $\mathcal{N}( (X^T X + \xi \text{Diag}(\eta))^{-1}X^Ty, \sigma^2(X^T X + \xi \text{Diag}(\eta))^{-1} )$}
 \begin{enumerate}
\item Sample $r \sim \mathcal{N}(0, I_p)$, $\delta \sim \mathcal{N}(0,I_n)$
\item Let $u = \frac{1}{\sqrt{\xi \eta}}r$, $v = Xu + \delta$. Calculate $v^* = M^{-1}(\frac{y}{\sigma} -v )$ for $ M = I_n + (\xi)^{-1} X \text{Diag}(\eta^{-1}) X^T $. Let $U = (\xi \eta )^{-1} X^T$. Set $\beta = \sigma ( u + Uv^* ).$
 \caption{Bhattacharya's algorithm with common random numbers.}
\end{enumerate}
\caption{Fast Sampling of Normals for Gaussian scale mixture priors \citep{bhattacharya2016fastBIOMETRIKA}.}
\label{algo:fast_mvn_bhattacharya_1}
\end{algorithm}

\begin{algorithm}
\DontPrintSemicolon
\SetAlgoLined
\KwIn{$\eta_{t+1},\xi_{t+1}, \sigma_{t+1}^2, \tilde{\eta}_{t+1}, \tilde{\xi}_{t+1}, \tilde{\sigma}_{t+1}^2 $.}
\begin{enumerate}
\item Sample $r \sim \mathcal{N}(0, I_p)$, $\delta \sim \mathcal{N}(0,I_n)$.
\item Let $u = r \big/ {\sqrt{\xi_{t+1} \eta_{t+1}}}$ and $\tilde{u} = r \big/ {\sqrt{\tilde{\xi}_{t+1} \tilde{\eta}_{t+1}}}$, and
$v = Xu + \delta$ and $ \tilde{v} = X \tilde{u} + \delta$. 
\item Calculate $v^{*} = M^{-1} \big( \frac{y}{\sigma_{t+1}} - v \big)$
and $\tilde{v}^{*} = \tilde{M}^{-1} \big( \frac{y}{ \tilde{\sigma}_{t+1}} - \tilde{v} \big)$
where \\
$ M = I_n + \xi_{t+1}^{-1} X \text{Diag}(\eta_{t+1}))^{-1} X^T $ and
$ \tilde{M} = I_n + \tilde{\xi}_{t+1}^{-1} X \text{Diag}( \tilde{\eta}_{t+1}))^{-1} X^T $.
 \item Let $U = (\xi_{t+1} \eta_{t+1} )^{-1} X^T$ and $\tilde{U} = (\tilde{\xi}_{t+1} \tilde{\eta}_{t+1} )^{-1} X^T$, and 
 set $\beta_{t+1} = \sigma_{t+1} ( u + Uv^{*} )$ and 
 $\tilde{\beta}_{t+1} = \tilde{\sigma}_{t+1} ( \tilde{u} + \tilde{U} \tilde{v}^{*} )$.
\end{enumerate}
\Return $( \beta_{t+1}, \tilde{\beta}_{t+1} )$
\caption{Coupled Normals for Gaussian scale mixture priors with common random numbers.}
\label{algo:fast_mvn_bhattacharya_crn_coupling}
\end{algorithm}

\begin{algorithm}
\SetAlgoLined
\DontPrintSemicolon
\KwIn{$\xi_{t}, \tilde{\xi}_{t} > 0$ and $\eta_{t+1}, \tilde{\eta}_{t+1} \in \mathbb{R}_{>0}^p$.}
Sample joint proposal
\begin{equation}
\Big( \log(\xi^{*}), \log(\tilde{\xi}^{*}) \Big) \Big| \xi_{t} , \tilde{\xi}_t \sim \gamma_{max} \Big(\mathcal{N}(\log(\xi_{t}), \sigma_{\text{MRTH}}^2), \mathcal{N}(\log(\tilde{\xi}_{t}), \sigma_{\text{MRTH}}^2) \Big).
\end{equation}
where $\sigma_{\text{MRTH}}$ is the Metropolis--Rosenbluth--Teller--Hastings step-size, and $\gamma_{max}$
is a maximal coupling with independent residuals (Algorithm \ref{algo:max_coupling}). \;
Calculate log-acceptance probabilities
\begin{equation} 
\log(q) = \log \Big( \frac{L(y | \xi^{*}, \eta_{t+1} ) \pi_{\xi}(\xi^{*})  }{L(y | \xi_{t}, \eta_{t+1} ) \pi_{\xi}(\xi_{t})} \frac{\xi^{*}}{\xi_{t}} \Big) \ \text{and} \
\log(\tilde{q}) = \log \Big( \frac{L(y | \tilde{\xi}^{*}, \tilde{\eta}_{t+1} ) \pi_{\xi}( \tilde{\xi}^{*})  }{L(y | \tilde{\xi}_{t}, \tilde{\eta}_{t+1} ) \pi_{\xi}( \tilde{\xi}_{t})} \frac{ \tilde{\xi}^{*}}{\tilde{\xi}_{t}} \Big).
\end{equation}
with log-likelihoods $\log( L(y | \xi, \eta_{t+1}) )$ and $\log( L(y | \tilde{\xi}, \tilde{\eta}_{t+1}) )$ as 
defined in Algorithm \ref{algo:Half_t_Exact}, Equation \eqref{eq:Half_t_Exact_algo_ll}. \;
Sample $U^* \sim \Uniform(0,1)$. \;
\leIf{$U^* \leq q$}{set $\xi_{t+1} := \xi^{*}$}{set $\xi_{t+1} := \xi_{t}$.}
\leIf{$U^* \leq \tilde{q}$}{set $\tilde{\xi}_{t+1} := \tilde{\xi}^{*}$}{set $\tilde{\xi}_{t+1}:= \tilde{\xi}_{t}$.} 
\Return $( \xi_{t+1}, \tilde{\xi}_{t+1} )$.
\caption{Coupled Metropolis--Rosenbluth--Teller--Hastings for $\xi | \eta$ updates with exact meetings.} 
\label{algo:xi_MH_exact_meeting_coupling} 
\end{algorithm}

\begin{algorithm}
\DontPrintSemicolon
\SetAlgoLined
\KwIn{$\nu \in \mathbb{R}^p_{>0}$, discretized grid $G$ on the compact support of $\pi_\xi(\cdot)$}
Perform eigenvalue decomposition $X \text{Diag}(\eta)^{-1} X^T=Q \Lambda Q^T$ for 
orthogonal matrix $Q$ and diagonal matrix $\Lambda$. \;
Calculate unnormalized probability mass function
\begin{flalign} \label{eta_full_conditional_formula}
  \widetilde{pmf}(\xi | \eta) := \Big( \prod_{i=1}^n ( 1 + \xi^{-1} \Lambda_{i,i} ) \Big)^{-1/2} \Big( b_0 + \sum_{i=1}^n [Qy]^2_i \frac{1}{1 + \xi^{-1} \Lambda_{i,i} } \Big)^{-\frac{a_0+n}{2}} \pi_{\xi}(\xi)
\end{flalign}
for each $\xi \in G$, where $[Qy]_i$ is the $i^{th}$ entry of $Qy \in \mathbb{R}^n$. \;
Normalize to obtain probability mass function: $pmf(\xi | \eta)  = \frac{\widetilde{pmf}(\xi | \eta)}{\sum_{\xi \in G} \tilde{pmf}(\xi | \eta)}  $. \;
Sample $\xi \sim pmf( \cdot | \eta)$ on $G$ using probability inverse transform. \;
\Return $\xi .$
\caption{Perfect sampling of $\xi | \eta$.}
\label{algo:xi_perfect_sampling}
\end{algorithm}

\begin{algorithm}
\DontPrintSemicolon
\SetAlgoLined
\KwIn{$\nu, m_j := \xi \beta_j^2 / (2 \sigma^2) > 0$.}
Sample $W^* \sim \Uniform(0,1)$ \;
Set 
\begin{equation} \label{eq:eta_perfect_sample_pit}
\eta_j = \frac{1}{\nu} U_{\frac{\nu+1}{2},1,\frac{m_j}{\nu}}^{-1} \Big( W^*  U_{\frac{\nu+1}{2},1,\frac{m_j}{\nu}}(\infty) \Big)
\end{equation}
where $U_{a,b,z}(t) := \frac{1}{\Gamma(a)} \int_0^t x^{a-1}(1+x)^{b-a-1} e^{-zx} dx$ is defined as the \textit{lower incomplete} confluent hypergeometric function of the second kind, such that $U_{a,b,z}(\infty) = U(a,b,z)$ gives the confluent hypergeometric function of the second kind. \;
\Return $ \eta_j.$
\caption{Perfect sampling of $\eta_j | \beta_j, \sigma^2, \xi$.}
\label{algo:eta_perfect_sampling}
\end{algorithm}


\section{Algorithm Derivations} \label{appendices:algo_derivations}
\paragraph{Gibbs sampler for Half-$t(\nu)$ priors (Algorithm \ref{algo:Half_t_Exact}).}
Derivation of the blocked Gibbs sampling algorithm (Algorithm \ref{algo:Half_t_Exact})
is given in \citet{johndrow2020scalableJMLR} for the Horseshoe
prior. For Half-$t(\nu)$ priors, it remains to check the validity of the 
slice sampler in Algorithm \ref{algo:slice_sampling_1}. Working 
component-wise, let $p(\eta | m)$ denote the conditional density of 
$\eta $ given $m>0$. Then,
\begin{flalign}
p(\eta | m) &\propto \eta^{\frac{\nu-1}{2}} (1+\nu \eta)^{-\frac{1+\nu}{2}} e^{-m \eta} \mathbbm{1}_{(0, \infty)}(\eta) = \int_{u=0}^\infty \eta^{\frac{\nu-1}{2}} e^{-m \eta} \mathbbm{1}_{(0, \infty)}(\eta) \mathbbm{1}_{ \big(0, (1+\nu \eta)^{-\frac{1+\nu}{2}} \big)}(u) du. 
\end{flalign}
Let $\bar{p}(\eta, u | m) \propto \eta^{\frac{\nu-1}{2}} e^{-m \eta} 
\mathbbm{1}_{(0, \infty)}(\eta) \mathbbm{1}_{ \big(0, (1+\nu \eta)^{-\frac{1+\nu}{2}} \big)}(u)$
be the conditional density of the augmented random 
variable $(\eta, u)$ given $m$, such that $p(\eta | m) = \int_{u=0}^\infty\bar{p}(\eta, u | m) du $.
Then, the slice sampler in Algorithm \ref{algo:slice_sampling_1} corresponds to 
a Gibbs sampler targeting $\bar{p}(\eta, u | m)$:
\begin{flalign}
  \bar{p}(u| \eta, m) &\sim \Uniform\big( 0, (1+\nu \eta)^{-\frac{1+\nu}{2}}  \big), \\
\bar{p}(\eta| u, m) &\propto \eta^{\frac{\nu-1}{2}} e^{-m \eta} \mathbbm{1}_{ \big(0, \frac{u^{-2/(1+\nu)}-1}{\nu} \big) }(\eta).
\end{flalign}
Note that $\int_0^x \eta^{s-1} e^{-m \eta} d \eta = m^{-s} \Gamma(s) \gamma_{s}(mx) $, 
where $\gamma_{s}(x) := \frac{1}{\Gamma(s)} \int_0^x t^{s-1} e^{-t} dt \in [0,1] $
is the regularized incomplete lower Gamma function. For 
$T = (u^{-2/(1+\nu)}-1)/\nu, s= \frac{1+\nu}{2} $, 
we can now obtain the cumulative density function of $\bar{p}(\cdot | u, m)$, given
by $\int_0^x \bar{p}(\eta| u, m) = \gamma_{s}(mx) / \gamma_{s}(mT)$
for $0\leq x \leq T$. By probability inverse transform, we can now sample perfectly 
from $\eta| u, m$, such that 
\begin{equation}
\eta := \frac{1}{m} \gamma_{s}^{-1} \big( \gamma_{s}(mT) U^* \big) \text{ for } U^* \sim \Uniform(0,1)
\end{equation}
as in Step $(2)$ of Algorithm \ref{algo:slice_sampling_1}. 

\paragraph{Perfect sampling of $\xi | \eta$ (Algorithm \ref{algo:xi_perfect_sampling}).}
In Algorithm \ref{algo:Half_t_Exact}, we have 
\begin{flalign}
p(\xi | \eta ) \propto |M_{\xi, \eta}|^{-1/2} \big(b_0 + y^T M_{\xi, \eta}^{-1} y \big)^{-\frac{a_0+n}{2}} \pi_\xi(\eta)
\end{flalign}
for $M_{\xi, \eta} = I_n + \xi^{-1} X \text{Diag}( \eta )^{-1} X^T$ and
$\pi_\xi(\cdot)$ the prior density on $\xi$. Consider the
eigenvalue decomposition of $X \text{Diag}(\eta)^{-1} X^T$, such that
$X \text{Diag}(\eta)^{-1} X^T=Q \Lambda Q^T$ 
for $Q$ orthogonal and $\Lambda$ diagonal. Then, 
\begin{flalign}
|M_{\xi, \eta}| &= | Q (I_n + \xi^{-1} \Lambda) Q^T | =  | I_n + \xi^{-1} \Lambda| = \prod_{i=1}^n \big( 1 + \xi^{-1} \Lambda_{i,i} \big) \\
y^T M_{\xi, \eta}^{-1} y &= y^T Q (I_n + \xi^{-1} \Lambda)^{-1} Q^T y = \sum_{i=1}^n [Q^Ty]_i^2 \big( 1 + \xi^{-1} \Lambda_{i,i} \big)^{-1}
\end{flalign}
This allows the discretized perfect sampling of $\xi $ given $\eta$
using probability inverse transform as in Algorithm \ref{algo:xi_perfect_sampling}, 
with $\mathcal{O}(n^3)$ computational cost arising from
eigenvalue decomposition of $X \text{Diag}(\eta)^{-1} X^T$.

\paragraph{Perfect sampling of $\eta_j | \beta_j, \sigma^2, \xi$ (Algorithm \ref{algo:eta_perfect_sampling}).}
Working component-wise, let 
\begin{flalign}
p(\eta | m) \propto \eta^{\frac{\nu-1}{2}} (1+\nu \eta)^{-\frac{1+\nu}{2}} e^{-m \eta}
\end{flalign}
denote the conditional density of $\eta \in (0,\infty)$ 
given $m>0$. We can calculate the corresponding
 cumulative density function (see proof of Proposition \ref{prop:slice_sampler_tv}), given by
\begin{flalign}
\int_0^K p(\eta | m) d\eta = \frac{U_{\frac{\nu+1}{2}, 1, \frac{m}{\nu} }(\nu K)}{ U(\frac{\nu+1}{2}, 1, \frac{m}{\nu}) }.
\end{flalign}
where $U_{a,b,z}(t) := \frac{1}{\Gamma(a)} \int_0^t x^{a-1}(1+x)^{b-a-1} e^{-zx} dx$ 
is defined as the \textit{lower incomplete} confluent hypergeometric function of the second kind.
Algorithm \ref{algo:eta_perfect_sampling} and \eqref{eq:eta_perfect_sample_pit}
directly follow from probability inverse transform. 

\end{document}